\tikzstyle{line}=[draw]
\theoremstyle{remark}
\newtheorem{thm}{Theorem}
\newtheorem{lem}{Lemma}
\newtheorem{cor}{Corollary}
\newtheorem{defn}{Definition}
\newtheorem{exmp}{Example}
\newtheorem{rem}{Remark}
\title{Optimal Linear Broadcast Rates of the Two-Sender Unicast Index Coding Problem with Fully-Participated Interactions}
\author{Chinmayananda Arunachala, Vaneet Aggarwal, and B. Sundar Rajan. \thanks{C. Arunachala and B. S. Rajan are with the Dept.
		of Electrical Communication Engg., Indian Institute of Science, Bengaluru
		560012, KA, India, email: \{chinmayanand,bsrajan\}@iisc.ac.in. 	V. Aggarwal is with the School of Industrial
		Engineering at Purdue University, West Lafayette, IN, USA 47907. He is also with the
		Dept. of Electrical Communication Engg., Indian Institute of Science, Bengaluru
		560012, KA, India, email:
		vaneet@purdue.edu.}}
\begin{document}
\maketitle
\begin{abstract}
	The two-sender unicast index coding problem consists of finding optimal coded transmissions from the two senders which collectively know the messages demanded by all the receivers. Each receiver demands a unique message. One important  class of this problem consists of the message sets at the senders and the side-information at the receivers satisfying  \emph{fully-participated interactions}. This paper provides optimal linear broadcast rates and corresponding code constructions for all the possible cases of the two-sender unicast index coding problem with fully-participated interactions. The optimal linear broadcast rate and the corresponding code for the two-sender problem are given in terms of those of the three single-sender unicast problems associated with the two-sender problem. Optimal linear broadcast rates of   two-sender problems with fully-participated interactions provide  lower bounds for the optimal linear broadcast rates of many related two-sender problems with \emph{partially-participated interactions}. Proof techniques used to obtain the results for the two-sender problem are shown to be useful in obtaining the results for some cases of the  multi-sender unicast index coding problem.
\end{abstract}

\section{Introduction}
The classical index coding problem (ICP) introduced in \cite{BK} consists of a sender who has to broadcast coded messages to a set of receivers, where each receiver has some subset of messages demanded by other receivers (also known as the side-information of that receiver). The sender knows the side-information of each receiver and uses this knowledge to encode the messages demanded by them. This encoding  reduces the number of broadcast transmissions compared to the naive transmission of each message. The receivers make use of their side-information and the broadcast transmissions to decode their demanded messages. In many practical scenarios, messages are distributed among multiple senders to reduce the delay in content delivery. For example, content is delivered using large storage capacity nodes called caching helpers in cellular networks \cite{KAC}. Data is also distributed and stored over multiple storage nodes in distributed storage networks \cite{luo2016coded,xiang2016joint}. In some scenarios, each sender can have access to only a subset of messages due to data storage limits or errors in the reception of some messages over noisy channels.  Hence, the multi-sender ICP is of practical significance. 

Ong et al. \cite{SUOH} studied a class of multi-sender ICPs, where each receiver knows a unique message and demands a subset of other messages. They provide an iterative algorithm which gives different lower bounds for the optimal codelength based on the \emph{strongly-connected component} of the \emph{information-flow graph} selected in each iteration. There is little insight on the tightness of lower bounds for the optimal codelength and how good the algorithm works. Thapa et al. \cite{COJ} extended some single-sender index coding schemes based on graph theory to the two-sender unicast ICP (TUICP), where each receiver demands a unique message. No results on the optimality of the coding schemes and the tightness of the gap between the optimal codelength and the codelengths obtained by the proposed schemes are provided, for any general class of the TUICP. Several works provide inner and outer bounds for the capacity region of variations of multi-sender ICP \cite{sadeghi2016distributed,YPFK,MOJ2,MOJ}. These works assume that there are links with fixed finite capacities from every sender to every receiver in contrast to the previous works. They use variations of random coding to provide the  bounds for the capacity region. Schemes which improve the tightness of the bounds are also provided. 

Thapa et al. \cite{CTLO} studied the TUICP using the two-sender graph coloring of the \emph{confusion graph} to obtain the optimal broadcast rate with $t$-bit messages for any finite $t$. The TUICP was jointly described by the \emph{side-information digraph} and the messages present at each sender. It was  analyzed using three single-sender sub-problems  described by the three vertex-induced sub-digraphs of the side-information digraph respectively. The partition of the side-information digraph into the three vertex-induced sub-digraphs depends on the availability of messages at the two senders. The TUICP was classified into 64 types based on the \emph{interactions} among these sub-digraphs. The type of interactions among these sub-digraphs was described by the \emph{interaction digraph} of the associated side-information digraph based on the availability of messages at the senders. The 64 possible interaction digraphs were broadly classified into two cases: Case I and Case II. Case I consists of all the possible acyclic digraphs on three vertices. Case II was further classified into five subcases. For some cases, the optimal broadcast rates with $t$-bit messages for any finite $t$ and the corresponding code constructions were provided in terms of those of the three related single-sender sub-problems. Upper bounds were provided for other cases. Similarly, the optimal broadcast rates (as $t \rightarrow \infty$) were provided for some cases. Upper bounds were given for other cases. Thus, the complexity of finding the optimal results for the TUICP was reduced to that of finding the same for the single-sender unicast ICP. 

This paper provides the optimal linear broadcast rate with $t$-bit messages for any finite $t$ and the corresponding code construction, and the optimal linear broadcast rate,  for all the cases of the TUICP with fully-participated interactions. These results for the TUICP are given in terms of those of the three constituent single-sender unicast ICPs. In general, linear encoding schemes are of interest as it is easy to encode the messages. The optimal linear broadcast rate with $t$-bit messages for any finite $t$ and the optimal linear broadcast rate of any two-sender problem with fully-participated interactions provide lower bounds for the corresponding  results of many two-sender problems with partially-participated interactions, having the same associated single-sender sub-problems as the original two-sender problem. These results help in establishing the corresponding optimal results of many two-sender problems with partially-participated interactions. 

As shown in Section IV, it is difficult to classify any multi-sender unicast index coding problem with fully-participated interactions based on its interaction digraph. However, optimal results and the proof techniques used to obtain the results of the TUICP with fully-participated interactions can be used to obtain the optimal results for some classes of the multi-sender problem with fully-participated interactions. Further, sub-optimal results can be obtained by partitioning the multi-sender problem into  multi-sender sub-problems with less number of senders for which optimal results can be easily found. Hence, it is important to know the optimal results of the TUICP with fully-participated interactions.   

The key results of this paper are summarized as follows. 
\begin{itemize}
	\item Optimal linear broadcast rate with $t$-bit messages for any finite $t$, corresponding code construction, and optimal linear broadcast rate are provided for all the cases of the TUICP with fully-participated interactions. 
	\item For Cases I and II-A, the same results are also shown to be valid for any two-sender problem with any partially-participated interactions.
	\item The proof techniques and the results obtained for the TUICP are used to derive the optimal results for some cases of the multi-sender ICP. 
\end{itemize}
\par The remainder of the paper is organized as follows. Section II introduces the proposed  problem and provides the required definitions and notations. Section III provides the main results of the paper. Section IV illustrates the application of the proof techniques used in Section III to solve some cases of the multi-sender problem. Section V concludes the paper with directions for future work.

\section{Problem Formulation and Definitions}

In this section, we formulate the two-sender unicast index coding problem, and provide the required notations and definitions used in this paper. 
\par  The set $\{1,2,\cdots, n\}$ is denoted as $[n]$. An instance of the two-sender unicast index coding problem (TUICP), consists of $m$ independent messages given by the set $\mathcal{M} =\{{\bf{x}}_1,{\bf{x}}_2,\cdots,{\bf{x}}_{m}\}$, where ${\bf{x}}_i \in \mathbb{F}_2^{t \times 1}$, $\forall i \in [m]$, and a positive integer $t \geq 1$. There are $m$ receivers. The $i$th receiver demands ${\bf{x}}_i$ and has $\mathcal{K}_i \subseteq \mathcal{M} \setminus \{{\bf{x}}_i\}$ as its side-information. The $s$th sender is denoted by $S_{s}$, $s \in \{1,2\}$. $S_{s}$ possesses the message set  $\mathcal{M}_{s}$ such that $\mathcal{M}_{s} \subset \mathcal{M}$ and $\mathcal{M}_{1} \cup \mathcal{M}_{2}=\mathcal{M}$. Each sender knows the identity of the messages present with the other sender and transmits through the same noiseless broadcast channel. Transmissions from different senders are orthogonal in time. The  single-sender unicast ICP is a special case of the TUICP, where $\mathcal{M}_{1}=\mathcal{M}$ and  $\mathcal{M}_{2}=\Phi$. 

Given an instance of the TUICP, each codeword of a two-sender index code consists of two sub-codewords broadcasted by the two senders respectively. An encoding function for the sender $S_{s}$ is given by $\mathbb{E}_{s}:\mathbb{F}_{2}^{|\mathcal{M}_{s}|t \times 1} \rightarrow   \mathbb{F}_{2}^{p_{s} \times 1}$, such that $\mathcal{C}_s=\mathbb{E}_{s}(\mathcal{M}_s)$, where $p_s$ is the length (number of bits) of the sub-codeword $\mathcal{C}_s$ transmitted by $S_s$, $s \in \{1,2\}$. The sub-codewords from the two senders are transmitted one after the other. The $i$th receiver has a decoding function given by $\mathbb{D}_{i}:\mathbb{F}_{2}^{(p_{1}+p_{2}+|\mathcal{K}_{i}|t) \times 1} \rightarrow   \mathbb{F}_{2}^{t \times 1}$, such that ${\bf{x}}_i = \mathbb{D}_{i}(\mathcal{C}_1,\mathcal{C}_2,\mathcal{K}_i)$, $i \in [m]$, i.e., it can decode ${\bf{x}}_i$ using its side-information and the received codeword consisting of $\mathcal{C}_1$ and $\mathcal{C}_2$. For the single-sender unicast ICP, $\mathcal{M}_{2}=\Phi$.  Hence, $p_2=0$. In this case, we assume that only $\mathbb{E}_1$ exists.
\begin{figure*}[!htbp]
	\begin{center}
		\includegraphics[width=41pc]{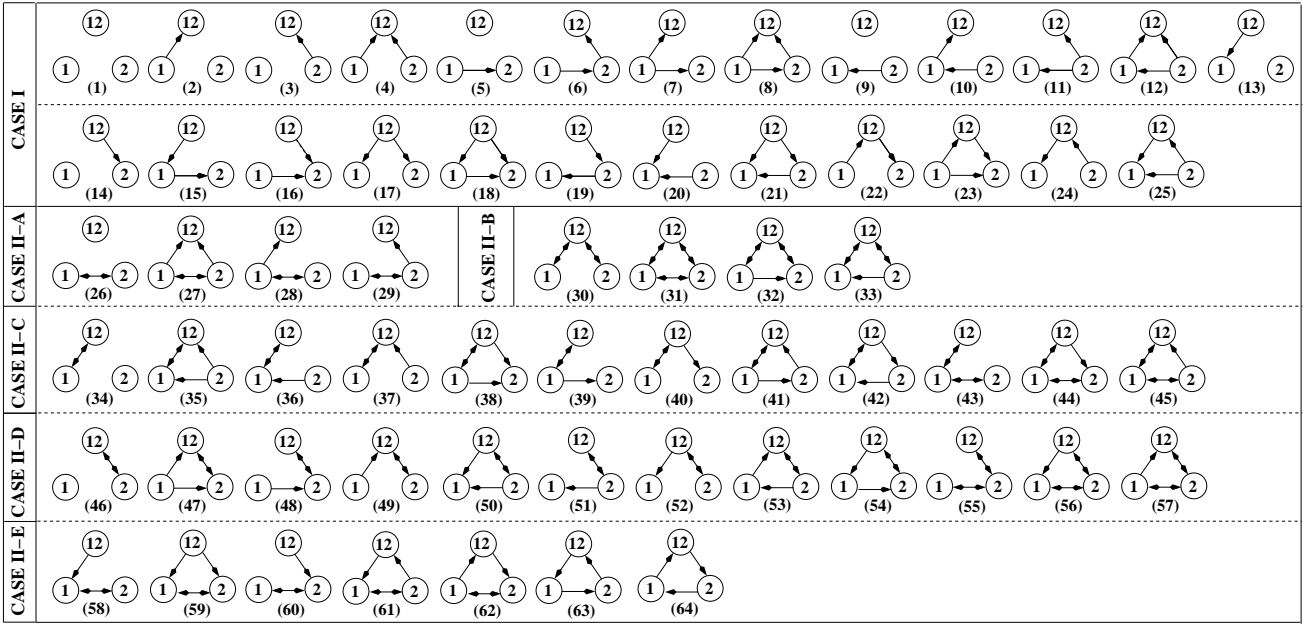}
		\caption{Enumeration of all the possible interactions between the sub-digraphs $\mathcal{D}_1$, $\mathcal{D}_2$, and $\mathcal{D}_{\{1,2\}}$, denoted by the interaction digraph $\mathcal{H}$.}
		\label{interenum}
	\end{center}
	\hrule
\end{figure*}
\par We define the linear broadcast rate of an index code, the optimal linear broadcast rate with $t$-bit messages for any finite $t$, and the optimal linear broadcast rate of the two-sender problem, which consider only linear encoding schemes. 

\begin{defn}[Linear broadcast rate]
	An index code for an instance of the TUICP is said to be linear, if the encoding functions are linear transformations. Let ${\bf{x}}^s \in \mathbb{F}_2^{t|\mathcal{M}_s| \times 1}$ be the concatenated message vector obtained by concatenating the $t$-bit  messages available at the sender $S_s,$ $s \in \{1,2\}$. Let $S_s$ broadcast $\mathcal{C}_s={\bf{G}}^s{\bf{x}}^s$ of length $p_s$, where ${\bf{G}}^s \in \mathbb{F}_2^{p_s \times t|\mathcal{M}_s|}$. For a single-sender problem $p_2=0$ and $S_1$ alone  broadcasts the index code. The linear broadcast rate of the index code  described by $\{{\bf{G}}^s\}$ is given by $p_{t}^{l} \triangleq \frac{p_1+p_2}{t}$. 
\end{defn}
\begin{defn}[Optimal linear broadcast rate with $t$-bit messages for any finite $t$]
	The optimal linear broadcast rate of a unicast  ICP (single-sender or two-sender) with $t$-bit  messages for any finite $t$ is given by $\beta_{t}^{l} \triangleq \underset{\{{\bf{G}}^s\}}{min}$  $p_{t}^{l}$.
\end{defn}
\begin{defn}[Optimal linear broadcast rate]
	The optimal linear broadcast rate (over all $t$) of a unicast ICP (single-sender or two-sender) is defined as $\beta^l \triangleq \underset{t}{inf} \beta_{t}^{l} = \underset{t \rightarrow \infty}{lim} \beta_{t}^{l}$.
	
	The limit exists and is equal to the infimum due to the subadditivity of $t\beta_t^l$ and Fekete's lemma \cite{fekete}. 
\end{defn}

We state some definitions from graph theory \cite{DBW}, that will be used in this paper. 

A directed graph (also called digraph) given by $\mathcal{D}=(\mathcal{V}(\mathcal{D}),\mathcal{E}(\mathcal{D}))$, consists of a set of vertices $\mathcal{V}(\mathcal{D})$, and a set of edges $\mathcal{E}(\mathcal{D})$ which is a set of ordered pairs of vertices. A sub-digraph $\mathcal{G}$ of a digraph $\mathcal{D}$ is a digraph, whose vertex set satisfies $\mathcal{V}(\mathcal{G}) \subseteq \mathcal{V}(\mathcal{D})$, and the edge set satisfies $\mathcal{E}(\mathcal{G}) \subseteq \mathcal{E}(\mathcal{D})$. The sub-digraph $\mathcal{G}$ of $\mathcal{D}$ induced by the vertex set $\mathcal{V}(\mathcal{G})$ is the digraph whose vertex set is $\mathcal{V}(\mathcal{G})$,  and the edge set is given by  $\mathcal{E}(\mathcal{G})=\{(u, v): u,v \in \mathcal{V}(\mathcal{G}), (u, v) \in \mathcal{E}(\mathcal{D})\}$. 

A directed path in a digraph $\mathcal{D}$ is a sequence of distinct vertices $\{v_1,\cdots,v_r\}$, such that $(v_i,v_{i+1}) \in \mathcal{E}(\mathcal{D})$, $\forall i \in [r-1]$. A cycle in a digraph $\mathcal{D}$ is a sequence of distinct vertices $(v_{1},\cdots,v_{c})$, such that $(v_{i},v_{i+1}) \in \mathcal{E}(\mathcal{D})$,  $\forall i \in [c-1]$, and $(v_{c},v_{1}) \in \mathcal{E}(\mathcal{D})$. A digraph with no cycles is called acyclic.

\begin{defn}[Topological ordering, \cite{DEK}]
	A topological ordering of a digraph $\mathcal{D}$ is a labelling of its vertices using the numbers in $\{1,2,\cdots,|\mathcal{V}(\mathcal{D})|\}$, such that for every edge $(u,v) \in \mathcal{E}(\mathcal{D})$,  $u < v$, where $u,v \in \{1,2,\cdots,|\mathcal{V}(\mathcal{D})|\}$.
\end{defn}

For any unicast ICP (single-sender or multi-sender), the knowledge of  side-information and demands of all the receivers is represented by the side-information digraph given by $\mathcal{D}=(\mathcal{V}(\mathcal{D}),\mathcal{E}(\mathcal{D}))$, where the vertex set is given by $\mathcal{V}(\mathcal{D})=\{v_{1},\cdots,v_{m}\}$. The vertex $v_{i}$ represents the $i$th receiver which demands the message ${\bf{x}}_i$. Due to the one-to-one relationship between the $i$th receiver and ${\bf{x}}_i$, $v_{i}$ also represents ${\bf{x}}_i$. Hence, we refer to  $v_i$ as the $i$th message, the $i$th receiver, and the  $i$th vertex  interchangeably. The edge set is given by $\mathcal{E}(\mathcal{D})=\{(v_{i},v_{j}): {\bf{x}}_{j} \in \mathcal{K}_{i}, i,j \in [m]\}$. 
Consider the message sets $\mathcal{P}_{1} = \mathcal{M}_{1} \setminus \mathcal{M}_{2}$ and $\mathcal{P}_{2} = \mathcal{M}_{2} \setminus \mathcal{M}_{1}$, which are available only with $S_{1}$ and $S_{2}$ respectively. The messages available with both the senders are given by  $\mathcal{P}_{\{1,2\}} = \mathcal{M}_{1} \cap \mathcal{M}_{2}$. Let $m_S=|\mathcal{P}_{S}|$, for any non-empty set $S \subseteq  \{1,2\}$. We represent any  singleton set without $\{\}$. For example, $\{1\}$ is written as $1$. Let $\mathcal{P}=(\mathcal{P}_{1},\mathcal{P}_{2},\mathcal{P}_{\{1,2\}})$. Any TUICP $\mathcal{I}$ can be described in terms of the two tuple $(\mathcal{D},\mathcal{P})$, as $\mathcal{I}(\mathcal{D},\mathcal{P})$. The broadcast rates $\beta^l$ and $\beta_{t}^l$ of $\mathcal{I}(\mathcal{D},\mathcal{P})$ are denoted by $\beta^l(\mathcal{D},\mathcal{P})$ and $\beta_{t}^l(\mathcal{D},\mathcal{P})$ respectively. For an instance of the single-sender unicast ICP with the side-information digraph $\mathcal{D}$, the broadcast rates $\beta^l$ and $\beta_{t}^l$ are denoted by $\beta^l(\mathcal{D})$ and $\beta_{t}^l(\mathcal{D})$ respectively.

The TUICP was analyzed using three sub-digraphs (equivalently sub-problems) induced by three disjoint vertex sets of the side-information digraph respectively \cite{CTLO}. Let $\mathcal{D}_S$ be the sub-digraph of $\mathcal{D}$, induced by the vertices $\{v_j: {\bf{x}}_j \in \mathcal{P}_{S}, j \in [m]\}$, for any non-empty set $S \subseteq \{1,2\}$. If there exists an edge from some vertex in $\mathcal{V}(\mathcal{D}_{S})$ to some vertex in $\mathcal{V}(\mathcal{D}_{S'})$, in the side-information digraph $\mathcal{D}$, for non-empty sets $S,S' \subseteq \{1,2\}, S \neq S'$, then we say that there is an interaction from $\mathcal{D}_{S}$ to $\mathcal{D}_{S'}$, and denote it as $\mathcal{D}_{S} \rightarrow \mathcal{D}_{S'}$. We say that the interaction $\mathcal{D}_{S} \rightarrow \mathcal{D}_{S'}$ is fully-participated, if there are edges from every vertex in $\mathcal{V}(\mathcal{D}_{S})$ to every vertex in $\mathcal{V}(\mathcal{D}_{S'})$. Otherwise, it is said to be a partially-participated interaction. We say that the TUICP has fully-participated interactions if all the existing interactions are fully-participated interactions. Consider the digraph $\mathcal{H}$ with  $\mathcal{V}(\mathcal{H}) = \{1,2,\{1,2\}\}$ and $\mathcal{E}(\mathcal{H})=\{(S,S') | \mathcal{D}_{S} \rightarrow \mathcal{D}_{S'}, S,S' \in \mathcal{V}(\mathcal{H})\}$. Let  $f: \mathcal{V}(\mathcal{D}) \rightarrow \mathcal{V}(\mathcal{H})$ be a function such that  $f(v)=S$ if $v \in \mathcal{V}(\mathcal{D}_{S})$. We call the digraph $\mathcal{H}$ as the interaction digraph of the side-information digraph $\mathcal{D}$. The edges $(S,S')$ and $(S',S)$ in any interaction digraph are denoted by a single edge with arrows at both the ends. There are 64 possibilities for the digraph $\mathcal{H}$ as shown in Figure \ref{interenum}, which were enlisted  and classified in \cite{CTLO}. The vertex representing the set $\{1,2\}$ is written as $12$ for brevity. The number written below each interaction digraph in the figure is used as the subscript to denote the specific interaction digraph. Note that different TUICPs with the same message tuple $\mathcal{P}$ can have the same interaction digraph. Any side-information digraph $\mathcal{D}$ with interaction digraph $\mathcal{H}_{k}$ is denoted by $\mathcal{D}^{k}$, $k \in \{1,2,\cdots,64\}$. For any TUICP $\mathcal{I}(\mathcal{D}^k,\mathcal{P})$, the corresponding sub-digraph $\mathcal{D}_S$ for a non-empty set $S \subseteq \{1,2\}$ is denoted by $\mathcal{D}_{S}^{k,\mathcal{P}}$. Any TUICP $\mathcal{I}(\mathcal{D}^k,\mathcal{P})$ is analyzed using the three single-sender unicast ICPs with the side-information digraphs $\mathcal{D}_{S}^{k,\mathcal{P}}$, for non-empty sets $S \subseteq \{1,2\}$. Note that all the possible interaction digraphs are classified into two cases broadly: Case I and Case II. Case I consists of acyclic interaction digraphs. Case II was further classified into five subcases as shown in Figure \ref{interenum}. We illustrate the above definitions using an example.    
\begin{figure}[!htbp]
	\begin{center}
		\begin{tikzpicture}
		[place/.style={circle,draw=black!100,thick}]
		\node at (-5.5,.2) [place] (1c) {1};
		\node at (-6.7,-.7) [place] (5c) {5};
		\node at (-6,-1.8) [place] (4c) {4};
		\node at (-4.7,-1.8) [place] (3c) {3};
		\node at (-4.2,-.7) [place] (2c) {2};
		\node at (-3.6,-1.8) [place] (1h) {1}; 	 	
		\node at (-2.1,-1.8) [place] (2h) {2};
		\node at (-2.8,-.7) [place] (3h) {12};	
		\node at (-0.5,-.2) [place] (1'h) {1}; 	 	
		\node at (1,-.2) [place] (2'h) {2};
		\node at (1,-1.1) [place] (3'h) {3};    		     	
		\node at (-0.5,-1.1) [place] (4'h) {4};
		\node at (0.2,-1.9) [place] (5'h) {5}; 		     	
		\draw (-5.4,-2.5) node {$\mathcal{D}$};
		\draw (-2.9,-2.5) node {$\mathcal{H}$};
		\draw (-1.3,-.2) node {$\mathcal{D}_1$};
		\draw (-1.3,-1.1) node {$\mathcal{D}_2$};     		  
		\draw (-0.9,-2) node {$\mathcal{D}_{\{1,2\}}$};
		\draw [thick,->] (1h) to [out=0,in=-180] (2h);
		\draw [thick,->] (2h) to [out=110,in=-45] (3h);     		     	
		\draw [thick,<->] (3h) to [out=-140,in=65] (1h);    		     	
		\draw [thick,->] (1'h) to [out=0,in=-180] (2'h);
		\draw [thick,->] (3'h) to [out=160,in=20] (4'h);     		     	
		\draw [thick,->] (4'h) to [out=-20,in=-160] (3'h);     		     	
		\draw [thick,->] (1c) to [out=-5,in=120] (2c);
		\draw [thick,->] (2c) to [out=-100,in=55] (3c);     		     	
		\draw [thick,->] (3c) to [out=160,in=20] (4c);    		     	
		\draw [thick,->] (4c) to [out=-20,in=-160] (3c);
		\draw [thick,->] (3c) to [out=135,in=-35] (5c);     		     	
		\draw [thick,->] (1c) to [out=-50,in=105] (3c);    	
		\draw [thick,->] (1c) to [out=-170,in=65] (5c);  
		\draw [thick,->] (5c) to [out=30,in=-130] (1c);  
		\draw [thick,->] (5c) to [out=0,in=180] (2c);	     	
		\end{tikzpicture}
		\caption{Example to illustrate the interaction digraph $\mathcal{H}$ and the sub-digraphs $\mathcal{D}_1$, $\mathcal{D}_2$, and $\mathcal{D}_{\{1,2\}}$ of a given side-information digraph $\mathcal{D}$.}
		\label{examp1}
	\end{center}
\end{figure}
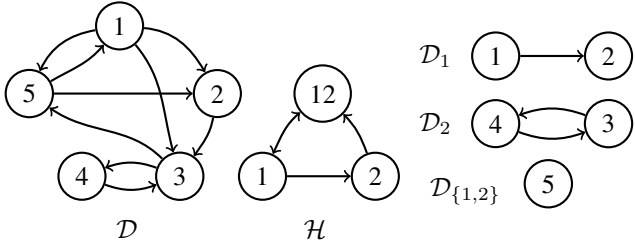
\begin{exmp}
	Consider the TUICP with $m=5$ messages, where the $i$th receiver demands ${\bf{x}}_{i}$. Sender $S_1$ has $\mathcal{M}_1=\{{\bf{x}}_1,{\bf{x}}_2,{\bf{x}}_5\}$. $S_2$ has $\mathcal{M}_2=\{{\bf{x}}_3,{\bf{x}}_4,{\bf{x}}_5\}$. Hence, $\mathcal{P}_1=\{{\bf{x}}_1,{\bf{x}}_2\}$, $\mathcal{P}_2=\{{\bf{x}}_3,{\bf{x}}_4\}$, and $\mathcal{P}_{\{1,2\}}=\{{\bf{x}}_5\}$. The side-information of each receiver is given as follows: $\mathcal{K}_{1}=\{{\bf{x}}_{2},{\bf{x}}_{3},{\bf{x}}_{5}\}$, $\mathcal{K}_{2}=\{{\bf{x}}_{3}\}$, $\mathcal{K}_{3}=\{{\bf{x}}_{4},{\bf{x}}_{5}\}$, $\mathcal{K}_{4}=\{{\bf{x}}_{3}\}$, $\mathcal{K}_{5}=\{{\bf{x}}_{1},{\bf{x}}_{2}\}$. The side-information digraph  $\mathcal{D}$ and the corresponding interaction digraph  $\mathcal{H}$ are shown in Figure \ref{examp1}.  The vertex-induced sub-digraphs $\mathcal{D}_1$, $\mathcal{D}_2$, and $\mathcal{D}_{\{1,2\}}$ induced by the messages in $\mathcal{P}_1$, $\mathcal{P}_2$, and $\mathcal{P}_{\{1,2\}}$ respectively are also shown in the figure. Note that the interaction $\mathcal{D}_{\{1,2\}} \rightarrow \mathcal{D}_{1}$ is fully-participated. Others are partially-participated interactions. The interaction digraph shown in Figure \ref{examp1} is $\mathcal{H}_{41}$ as given in Figure \ref{interenum}. Hence, the side-information digraph $\mathcal{D}$ can also be denoted as $\mathcal{D}^{41}$.
\end{exmp}

\par The following notations are required for the construction of a two-sender index code from single-sender index codes. Let $\mathcal{C}_1$ and $\mathcal{C}_2$ be  two codewords of length $l_1$ and $l_2$ respectively. $\mathcal{C}_1 \oplus \mathcal{C}_2$ denotes the bit-wise XOR of $\mathcal{C}_1$ and $\mathcal{C}_2$ after zero-padding the shorter message at the least significant positions to match the length of the longer message. The resulting length of the codeword is  $max(l_1,l_2)$. For example, if $\mathcal{C}_1=1010$, and $\mathcal{C}_2=110$, then $\mathcal{C}_1 \oplus \mathcal{C}_2 = 0110$.  $\mathcal{C}[a:b]$ denotes the vector obtained by picking the bits from bit position $a$ to bit position $b$, starting from the most significant position of the codeword $\mathcal{C}$, with $a,b \in [l]$, $l$ being the length of $\mathcal{C}$. For example $\mathcal{C}_1[2:4]=010$.

\section{Main Results}

\begin{table*}[t]
	\centering
	\begin{tabular}{|c|c|c|c|c|}
		\hline
		CASE & 
		$\beta^l(\mathcal{D}^k,\mathcal{P})$ & $\beta_{t}^l(\mathcal{D}^k,\mathcal{P})$ \\  \hline
		I & $\beta^{l}(\mathcal{D}_{1}^{k,\mathcal{P}})+\beta^{l}(\mathcal{D}_{2}^{k,\mathcal{P}})+\beta^{l}(\mathcal{D}_{\{1,2\}}^{k,\mathcal{P}})!$ & $\beta_{t}^{l}(\mathcal{D}_{1}^{k,\mathcal{P}})+\beta_{t}^{l}(\mathcal{D}_{2}^{k,\mathcal{P}})+\beta_{t}^{l}(\mathcal{D}_{\{1,2\}}^{k,\mathcal{P}})!$ \\ \hline
		II-A & $\beta^{l}(\mathcal{D}_{1}^{k,\mathcal{P}})+\beta^{l}(\mathcal{D}_{2}^{k,\mathcal{P}})+\beta^{l}(\mathcal{D}_{\{1,2\}}^{k,\mathcal{P}})!$ & $\beta_{t}^{l}(\mathcal{D}_{1}^{k,\mathcal{P}})+\beta_{t}^{l}(\mathcal{D}_{2}^{k,\mathcal{P}})+\beta_{t}^{l}(\mathcal{D}_{\{1,2\}}^{k,\mathcal{P}})!$ \\ \hline
		II-B & 
		$max\{\beta^{l}(\mathcal{D}_{\{1,2\}}^{k,\mathcal{P}}),\beta^{l}(\mathcal{D}_{1}^{k,\mathcal{P}})+\beta^{l}(\mathcal{D}_{2}^{k,\mathcal{P}})\}$ & $max\{\beta_{t}^{l}(\mathcal{D}_{\{1,2\}}^{k,\mathcal{P}}),\beta_{t}^{l}(\mathcal{D}_{1}^{k,\mathcal{P}})+\beta_{t}^{l}(\mathcal{D}_{2}^{k,\mathcal{P}})\}$ \\ \hline
		II-C & 
		$\beta^{l}(\mathcal{D}_{2}^{k,\mathcal{P}})+max\{\beta^{l}(\mathcal{D}_{1}^{k,\mathcal{P}}),\beta^{l}(\mathcal{D}_{\{1,2\}}^{k,\mathcal{P}})\}$ & $\beta_{t}^{l}(\mathcal{D}_{2}^{k,\mathcal{P}})+max\{\beta_{t}^{l}(\mathcal{D}_{1}^{k,\mathcal{P}}),\beta_{t}^{l}(\mathcal{D}_{\{1,2\}}^{k,\mathcal{P}})\}$ \\ \hline
		II-D & 
		$\beta^{l}(\mathcal{D}_{1}^{k,\mathcal{P}})+max\{\beta^{l}(\mathcal{D}_{2}^{k,\mathcal{P}}),\beta^{l}(\mathcal{D}_{\{1,2\}}^{k,\mathcal{P}})\}$ & $\beta_{t}^{l}(\mathcal{D}_{1}^{k,\mathcal{P}})+max\{\beta_{t}^{l}(\mathcal{D}_{2}^{k,\mathcal{P}}),\beta_{t}^{l}(\mathcal{D}_{\{1,2\}}^{k,\mathcal{P}})\}$ \\ \hline
		\multirow{2}{*}{II-E} & 
		\multirow{2}{*}[.5ex]{$max\{\beta^{l}(\mathcal{D}_{1}^{k,\mathcal{P}})+\beta^{l}(\mathcal{D}_{2}^{k,\mathcal{P}}),\beta^{l}(\mathcal{D}_{1}^{k,\mathcal{P}})$} & \multirow{2}{*}[.5ex]{$max\{\beta_{t}^{l}(\mathcal{D}_{1}^{k,\mathcal{P}})+\beta_{t}^{l}(\mathcal{D}_{2}^{k,\mathcal{P}}),\beta_{t}^{l}(\mathcal{D}_{1}^{k,\mathcal{P}})$} \\[5pt]
		\multirow{2}{*}{} & \multirow{2}{*}[1ex]{$+\beta^{l}(\mathcal{D}_{\{1,2\}}^{k,\mathcal{P}}),\beta^{l}(\mathcal{D}_{2}^{k,\mathcal{P}})+\beta^{l}(\mathcal{D}_{\{1,2\}}^{k,\mathcal{P}})\}$} \bigstrut[b]  &  \multirow{2}{*}[1ex]{$+\beta_{t}^{l}(\mathcal{D}_{\{1,2\}}^{k,\mathcal{P}}),\beta_{t}^{l}(\mathcal{D}_{2}^{k,\mathcal{P}})+\beta_{t}^{l}(\mathcal{D}_{\{1,2\}}^{k,\mathcal{P}})\}$} \bigstrut[b] \\[5pt] \hline
	\end{tabular}	
	\vspace{5pt}
	\caption{Summary of results for any $\mathcal{D}^k$ and $\mathcal{P}$ with fully-participated interactions between $\mathcal{D}_{1}^{k,\mathcal{P}},\mathcal{D}_{2}^{k,\mathcal{P}}$ and $\mathcal{D}_{\{1,2\}}^{k,\mathcal{P}}$}
	\label{table2}
	\vspace{-15pt}
\end{table*}

In this section, we provide the optimal linear broadcast rates $\beta_{t}^{l}(\mathcal{D},\mathcal{P})$ and $\beta^{l}(\mathcal{D},\mathcal{P})$  for all the cases of the TUICP with fully-participated interactions between the sub-digraphs $\mathcal{D}_1$, $\mathcal{D}_2$, and $\mathcal{D}_{\{1,2\}}$ of any side-information digraph $\mathcal{D}$ for any $\mathcal{P}$. For Cases I and II-A, the results are also valid for any partially-participated interactions between the sub-digraphs. 

The results of this section are summarized in Table \ref{table2}. The results marked with a $``!"$ are the ones which also hold for any  partially-participated interactions.

\subsection{CASE I}

\par In the following, we provide the optimal linear broadcast rates $\beta_{t}^{l}(\mathcal{D},\mathcal{P})$ and $\beta^{l}(\mathcal{D},\mathcal{P})$ for any two-sender problem $\mathcal{I}(\mathcal{D},\mathcal{P})$ belonging to Case I (i.e., interaction digraph belongs to Case I) with any type of interactions (fully-participated or partially-participated). The following lemmas are used to derive our results.

\begin{lem}
	For any side-information digraph $\mathcal{D}$, any message set tuple $\mathcal{P}$, and $t \geq 1$, $\beta_{t}^{l}(\mathcal{D},\mathcal{P}) \geq \beta_{t}^{l}(\mathcal{D})$.
	\label{lowerbnd}
\end{lem}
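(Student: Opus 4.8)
The plan is to prove the inequality by a simulation (or relaxation) argument: any linear index code for the two-sender problem $\mathcal{I}(\mathcal{D},\mathcal{P})$ can be converted, with no change in length, into a linear index code for the single-sender problem with the same side-information digraph $\mathcal{D}$. Since the single sender holds all of $\mathcal{M}$, it is strictly less constrained than the two senders who each access only a subset of the messages and encode orthogonally; hence the single-sender optimum can only be smaller, yielding $\beta_{t}^{l}(\mathcal{D}) \le \beta_{t}^{l}(\mathcal{D},\mathcal{P})$.

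Concretely, I would start from an arbitrary two-sender linear code described by matrices $\mathbf{G}^1 \in \mathbb{F}_2^{p_1 \times t|\mathcal{M}_1|}$ and $\mathbf{G}^2 \in \mathbb{F}_2^{p_2 \times t|\mathcal{M}_2|}$, so that $S_s$ broadcasts $\mathcal{C}_s = \mathbf{G}^s \mathbf{x}^s$ at rate $(p_1+p_2)/t$. The single sender possesses the entire set $\mathcal{M} = \mathcal{M}_1 \cup \mathcal{M}_2$ and can therefore form both concatenated sub-vectors $\mathbf{x}^1$ and $\mathbf{x}^2$ (the shared messages of $\mathcal{P}_{\{1,2\}}$ are simply reused in both). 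I would then define its encoder $\mathbb{E}_1$ to output the concatenation of $\mathcal{C}_1 = \mathbf{G}^1 \mathbf{x}^1$ and $\mathcal{C}_2 = \mathbf{G}^2 \mathbf{x}^2$. This is a linear transformation of the full vector $\mathbf{x} \in \mathbb{F}_2^{t|\mathcal{M}|}$ (coordinate selection followed by block application of $\mathbf{G}^1$ and $\mathbf{G}^2$), and it produces a broadcast of length $p_1 + p_2$.

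It then remains to verify decodability. Each receiver $i$ has side-information $\mathcal{K}_i$ and, in both problems, observes exactly the pair $(\mathcal{C}_1,\mathcal{C}_2)$: in the two-sender problem because the two sub-codewords are transmitted one after the other, and in the constructed single-sender code because they are concatenated into a single broadcast. Hence the original decoding function $\mathbb{D}_i$ recovers $\mathbf{x}_i$ verbatim, so validity transfers directly from the two-sender code to the single-sender code. The latter has rate $(p_1+p_2)/t$, giving $\beta_{t}^{l}(\mathcal{D}) \le (p_1+p_2)/t$; since this holds for every choice of $\{\mathbf{G}^s\}$, minimizing over the two-sender codes yields the claim.

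There is essentially no deep obstacle here, and the only points requiring care are bookkeeping ones. First, I must confirm the constructed map is genuinely linear on $\mathbf{x}$, which is immediate. Second, I should note that the shared messages of $\mathcal{P}_{\{1,2\}}$ induce overlapping columns between the two blocks, but this is harmless precisely because the single sender holds those messages and may reuse them freely, so no consistency constraint is violated. The substance of the lemma is simply the observation that the single-sender problem is a relaxation of the two-sender problem with the same $\mathcal{D}$.
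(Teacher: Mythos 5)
Your proposal is correct and follows essentially the same route as the paper: the paper's proof likewise observes that the two-sender code, transmitted verbatim by a single sender holding all of $\mathcal{M}$, is a valid single-sender linear code of the same length, which immediately gives the bound. Your version merely makes explicit the linearity of the concatenated encoder and the harmlessness of the shared messages $\mathcal{P}_{\{1,2\}}$, details the paper leaves implicit.
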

\begin{proof}
	Consider a two-sender index code with the linear broadcast rate $\beta_{t}^{l}(\mathcal{D},\mathcal{P})$. The same index code transmitted by a single-sender for the single-sender unicast ICP described by the  side-information digraph $\mathcal{D}$ satsifies the demands of all the receivers. Hence, it is a valid single-sender index code. Thus, we have the lower bound.  
\end{proof}

\begin{lem}
	For any $\mathcal{D}$, $\mathcal{P}$, and $t \geq 1$, if a side-information digraph $\mathcal{D}'$ is obtained by adding more directed edges to $\mathcal{D}$, we have  $\beta_{t}^{l}(\mathcal{D},\mathcal{P}) \geq \beta_{t}^{l}(\mathcal{D}',\mathcal{P})$, and $\beta_{t}^{l}(\mathcal{D}) \geq \beta_{t}^{l}(\mathcal{D}')$.
	\label{addedges}
\end{lem}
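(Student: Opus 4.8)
The plan is to mimic the monotonicity argument behind Lemma~\ref{lowerbnd}: exhibit that any valid index code for the sparser problem is automatically valid for the denser one, so the optimal rate cannot increase when edges are added. The crucial observation is that an edge $(v_i,v_j)$ in a side-information digraph encodes the fact that ${\bf{x}}_j \in \mathcal{K}_i$, i.e., that receiver $i$ holds ${\bf{x}}_j$ as side-information. Consequently, if $\mathcal{D}'$ is obtained from $\mathcal{D}$ by adding directed edges, then the side-information set $\mathcal{K}_i'$ of every receiver in the problem on $\mathcal{D}'$ satisfies $\mathcal{K}_i' \supseteq \mathcal{K}_i$, while the demands ${\bf{x}}_i$ and the message partition $\mathcal{P}$ remain unchanged.

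First I would fix an optimal linear two-sender index code for $\mathcal{I}(\mathcal{D},\mathcal{P})$, given by encoding matrices ${\bf{G}}^1,{\bf{G}}^2$ achieving the rate $\beta_t^l(\mathcal{D},\mathcal{P})$, together with the decoding functions $\mathbb{D}_i$. Since the encoders ${\bf{G}}^s$ act only on the concatenated message vectors determined by $\mathcal{P}$, and $\mathcal{P}$ is identical for $\mathcal{D}$ and $\mathcal{D}'$, the very same encoders are legitimate encoders for $\mathcal{I}(\mathcal{D}',\mathcal{P})$; in particular the transmitted sub-codewords $\mathcal{C}_1,\mathcal{C}_2$ and hence the rate $\frac{p_1+p_2}{t}$ are unchanged.

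Next I would verify decodability on $\mathcal{D}'$. Each receiver $i$ in $\mathcal{I}(\mathcal{D}',\mathcal{P})$ possesses every message in $\mathcal{K}_i$ (because $\mathcal{K}_i \subseteq \mathcal{K}_i'$), so it can simply ignore the surplus side-information in $\mathcal{K}_i' \setminus \mathcal{K}_i$ and apply the original decoding function $\mathbb{D}_i(\mathcal{C}_1,\mathcal{C}_2,\mathcal{K}_i)$ to recover ${\bf{x}}_i$. Hence the code is valid for $\mathcal{I}(\mathcal{D}',\mathcal{P})$ with the same broadcast rate, which yields $\beta_t^l(\mathcal{D}',\mathcal{P}) \le \beta_t^l(\mathcal{D},\mathcal{P})$. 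The single-sender claim $\beta_t^l(\mathcal{D}) \ge \beta_t^l(\mathcal{D}')$ is the specialization of this argument to $\mathcal{M}_2 = \Phi$ (equivalently $p_2=0$), where only ${\bf{G}}^1$ is present.

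There is no genuine obstacle here; the lemma is a soft monotonicity statement. The only point requiring care is the bookkeeping that adding an edge enlarges (rather than shrinks) some receiver's side-information, and that the encoder is a function of $\mathcal{P}$ alone and therefore blind to the change from $\mathcal{D}$ to $\mathcal{D}'$. Once these two facts are pinned down, the ``ignore the surplus side-information'' reduction closes the proof immediately.
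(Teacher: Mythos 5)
Your proposal is correct and follows the same route as the paper: the paper's proof also reuses an optimal linear code for $\mathcal{I}(\mathcal{D},\mathcal{P})$ on $\mathcal{I}(\mathcal{D}',\mathcal{P})$, noting that receivers in the denser problem have all the original side-information and more, so the same code remains valid. Your write-up merely makes explicit the bookkeeping (encoders depend only on $\mathcal{P}$; decoders ignore surplus side-information) that the paper leaves implicit.
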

\begin{proof}
	Consider a linear code for the two-sender problem $\mathcal{I}(\mathcal{D},\mathcal{P})$, whose linear broadcast rate is $\beta_{t}^{l}(\mathcal{D},\mathcal{P})$. This code can be used to solve the two-sender problem $\mathcal{I}(\mathcal{D}',\mathcal{P})$, as the receivers have additional side-information including the side-information present in the original problem $\mathcal{I}(\mathcal{D},\mathcal{P})$. Hence, $\beta_{t}^{l}(\mathcal{D},\mathcal{P}) \geq \beta_{t}^{l}(\mathcal{D}',\mathcal{P})$. The same reasoning holds for the single-sender problem. Hence, $\beta_{t}^{l}(\mathcal{D}) \geq \beta_{t}^{l}(\mathcal{D}')$.
\end{proof}

We require the following lemma which is a part of Theorem 3 in \cite{tahmasbi2015critical} to derive our results. 
\begin{lem}[Theorem 3,  \cite{tahmasbi2015critical}]
	Consider a single-sender unicast index coding problem described by a side-information digraph. The set of achievable broadcast rates with linear index coding schemes does not change by removing any edge from the side-information digraph that does not lie on any directed cycle 
	\label{lemscc}
\end{lem}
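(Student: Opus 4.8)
The plan is to prove the two inclusions between the sets of achievable linear rates separately. Write $\mathcal{D}'=\mathcal{D}\setminus e$ for the digraph obtained by deleting the edge $e=(u,v)$, which by hypothesis lies on no directed cycle. The inclusion ``$\mathrm{achievable}(\mathcal{D}')\subseteq \mathrm{achievable}(\mathcal{D})$'' is immediate and needs no use of acyclicity: any valid linear code for $\mathcal{D}'$ is also valid for $\mathcal{D}$, since the receivers of $\mathcal{D}$ only have additional side-information, exactly as in the proof of Lemma \ref{addedges}. Hence the whole content is the reverse inclusion, for which I would show that every achievable rate of $\mathcal{D}$ remains achievable after deleting $e$.

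For this I would pass to the matrix (min-rank) description of vector-linear index codes. For messages of length $t$, a length-$p$ linear code exists for a digraph $\mathcal{G}$ if and only if there is a block matrix $M\in\mathbb{F}_2^{mt\times mt}$, with $t\times t$ blocks $M_{ij}$, satisfying $M_{ii}=I_t$ for all $i$, $M_{ij}=0$ whenever $i\neq j$ and $(v_i,v_j)\notin\mathcal{E}(\mathcal{G})$, and $\operatorname{rank}(M)\le p$; the blocks $M_{ij}$ with $(v_i,v_j)\in\mathcal{E}(\mathcal{G})$ are free. Thus $t\,\beta_t^l(\mathcal{G})$ equals the minimum rank of such a \emph{fitting matrix}, and the set of achievable rates is determined by these minima over all $t$. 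Observe that $e$ lies on no directed cycle precisely when its endpoints $u$ and $v$ belong to distinct strongly-connected components (SCCs) of $\mathcal{D}$; more strongly, there is no directed path from $v$ back to $u$, so in the condensation (the acyclic digraph of SCCs) the component of $u$ strictly precedes the component of $v$.

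The key step is to show that the minimum rank over fitting matrices of $\mathcal{D}$ is attained by a matrix that is block-diagonal along the SCC decomposition. I would order the vertices so that the SCCs appear in a topological order of the condensation. Then every cross-SCC block of a fitting matrix $M$ can be nonzero only in the upper-triangular part, because an edge between distinct SCCs runs from an earlier component to a later one; that is, $M$ is block upper-triangular with the SCC-blocks on the diagonal. Using the standard fact that for a block upper-triangular matrix the rank is at least the sum of the ranks of its diagonal blocks (obtained by exhibiting independent rows inside each diagonal block, which remain independent because the blocks strictly below them are zero), replacing all strictly-upper cross-SCC blocks of $M$ by zero produces a block-diagonal fitting matrix $M_0$ with $\operatorname{rank}(M_0)\le\operatorname{rank}(M)$. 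Zeroing these entries is permitted by the fitting constraints, since a fitting matrix may take the value $0$ on any coordinate. Because the block $M_{uv}$ corresponding to the deleted edge $e$ is one of these cross-SCC blocks, $M_0$ in fact fits $\mathcal{D}'$ as well. Applying this to a rank-minimizing $M$ gives $t\,\beta_t^l(\mathcal{D}')\le\operatorname{rank}(M_0)\le t\,\beta_t^l(\mathcal{D})$ for every $t$, which yields the reverse inclusion and, together with the first paragraph, the equality of the two rate sets.

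The main obstacle I anticipate is not conceptual but in the careful bookkeeping: verifying the block upper-triangular rank inequality in the vector-linear ($t\times t$-block) setting, and checking that the only decoding condition disturbed by deleting $e$ is that of receiver $u$, whose block row is repaired by the zeroing step. A secondary point to nail down is the equivalence ``edge on no cycle $\Leftrightarrow$ endpoints in different SCCs'' together with the upper-triangularity it forces after the topological relabelling, which is exactly what makes the diagonal blocks the strongly-connected parts that genuinely contribute to the rank.
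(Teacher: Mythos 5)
Your proof is correct, but be aware that the paper itself never proves this lemma --- it is imported verbatim as Theorem 3 of \cite{tahmasbi2015critical} and only \emph{used} (in Theorem \ref{thmonel} and Lemma \ref{lemtwoc}), so there is no internal proof to compare against; what you have written is a self-contained proof of a quoted result. Checked on its own terms, each step goes through: the trivial inclusion is exactly the argument of Lemma \ref{addedges}; an edge $(u,v)$ lies on a directed cycle if and only if there is a directed path from $v$ back to $u$, i.e., if and only if $u$ and $v$ share a strongly-connected component, so a cycle-free edge is a cross-SCC edge; after listing the SCCs in a topological order of the condensation, every fitting matrix is block upper-triangular, since an edge from a later SCC to an earlier one would contradict the ordering; the inequality $\operatorname{rank}(M)\geq\sum_i \operatorname{rank}(M_{ii})$ for block upper-triangular $M$ holds by your row-selection argument (extended to several diagonal blocks by induction); and zeroing cross-SCC blocks is always legal because fitting constraints force zeros but never nonzeros. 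Two remarks worth making explicit. First, the vector min-rank characterization silently uses the standard fact that a linear encoding admits linear decoders; this is where linearity genuinely enters, and it is why the argument stays within the lemma's stated scope (linear schemes) rather than proving a general-capacity analogue. Second, your construction proves something slightly stronger than the lemma as stated: the block-diagonal matrix $M_0$ fits the digraph in which \emph{all} cross-SCC edges are deleted simultaneously, not just the single edge $e$. That one-shot version is precisely the form in which the paper applies the lemma --- in Theorem \ref{thmonel} all edges contributing to the acyclic interactions are removed at once, which under the literal statement needs an iteration (checking after each removal that the remaining removable edges are still cycle-free), whereas your $M_0$ delivers the simultaneous removal directly.
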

We now present the main result of this sub-section.
\begin{thm}[Case I]
	For any TUICP with the side-information digraph $\mathcal{D}^k$, $k \in \{1,2,\cdots,25\}$, having any type of interactions (either fully-participated or partially-participated) between its sub-digraphs $\mathcal{D}_1^{k,\mathcal{P}}$, $\mathcal{D}_2^{k,\mathcal{P}}$, and $\mathcal{D}_{\{1,2\}}^{k,\mathcal{P}}$, for any $\mathcal{P}$, and $t$-bit messages for any $t \geq 1$, we have,
	\begin{gather*}
	(i) ~ \beta_{t}^{l}(\mathcal{D}^{k},\mathcal{P}) = \beta_{t}^{l}(\mathcal{D}_1^{k,\mathcal{P}}) + \beta_{t}^{l}(\mathcal{D}_2^{k,\mathcal{P}}) + \beta_{t}^{l}(\mathcal{D}_{\{1,2\}}^{k,\mathcal{P}}),\\
	(ii) ~ \beta^{l}(\mathcal{D}^k,\mathcal{P}) = \beta^{l}(\mathcal{D}_1^{k,\mathcal{P}}) + \beta^{l}(\mathcal{D}_2^{k,\mathcal{P}}) + \beta^{l}(\mathcal{D}_{\{1,2\}}^{k,\mathcal{P}}).
	\end{gather*}
	\label{thmonel}
\end{thm}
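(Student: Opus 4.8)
The plan is to prove part (i) by sandwiching $\beta_t^l(\mathcal{D}^k,\mathcal{P})$ between matching upper and lower bounds, and then to obtain part (ii) by letting $t\to\infty$. Throughout, write the three parts as $\mathcal{D}_S^{k,\mathcal{P}}$ for $S\in\{1,2,\{1,2\}\}$.

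For the \textbf{achievability (upper bound)} I would discard all interactions and encode the three sub-problems separately. Let $\mathbf{H}_S$ be an optimal single-sender linear code for $\mathcal{D}_S^{k,\mathcal{P}}$, of length $t\,\beta_t^l(\mathcal{D}_S^{k,\mathcal{P}})$. Since $\mathcal{P}_1,\mathcal{P}_{\{1,2\}}\subseteq\mathcal{M}_1$, sender $S_1$ can form and transmit $\mathbf{H}_1$ and $\mathbf{H}_{\{1,2\}}$, and since $\mathcal{P}_2\subseteq\mathcal{M}_2$, sender $S_2$ can transmit $\mathbf{H}_2$. A receiver $v_i$ with $\mathbf{x}_i\in\mathcal{P}_S$ recovers $\mathbf{x}_i$ from the sub-codeword $\mathbf{H}_S$ using only its side-information \emph{inside} $\mathcal{D}_S^{k,\mathcal{P}}$; any side-information coming from a cross-interaction is simply left unused. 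Hence this is a valid two-sender code and $\beta_t^l(\mathcal{D}^k,\mathcal{P})\le \beta_t^l(\mathcal{D}_1^{k,\mathcal{P}})+\beta_t^l(\mathcal{D}_2^{k,\mathcal{P}})+\beta_t^l(\mathcal{D}_{\{1,2\}}^{k,\mathcal{P}})$. This direction uses neither acyclicity nor the type of interaction, so it holds for fully- and partially-participated interactions alike.

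For the \textbf{converse (lower bound)} I would pass to a single sender and exploit that Case~I means the interaction digraph $\mathcal{H}$ is acyclic. By Lemma~\ref{lowerbnd}, $\beta_t^l(\mathcal{D}^k,\mathcal{P})\ge \beta_t^l(\mathcal{D}^k)$. The key structural observation is that every \emph{cross-edge} of $\mathcal{D}^k$ (an edge between two distinct parts) lies on no directed cycle: a cycle through a cross-edge $(u,v)$ with $u\in\mathcal{V}(\mathcal{D}_S^{k,\mathcal{P}})$ and $v\in\mathcal{V}(\mathcal{D}_{S'}^{k,\mathcal{P}})$ would, after collapsing each maximal run of vertices lying in a common part, project to a closed walk in $\mathcal{H}$ that uses the edge $(S,S')$ and returns to $S$, forcing a directed cycle in $\mathcal{H}$ and contradicting acyclicity; again this uses only that $\mathcal{H}$ is acyclic, not the participation type. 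Deleting all cross-edges therefore removes only edges off every directed cycle, so Lemma~\ref{lemscc} gives $\beta_t^l(\mathcal{D}^k)=\beta_t^l(\mathcal{D}')$, where $\mathcal{D}'$ is the digraph with the cross-edges deleted. By construction $\mathcal{D}'$ is exactly the disjoint union of $\mathcal{D}_1^{k,\mathcal{P}}$, $\mathcal{D}_2^{k,\mathcal{P}}$, and $\mathcal{D}_{\{1,2\}}^{k,\mathcal{P}}$, with no edges between the parts.

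The step I expect to be the main obstacle is the \textbf{additivity of the linear rate over a disjoint union}, $\beta_t^l(\mathcal{D}')=\sum_{S}\beta_t^l(\mathcal{D}_S^{k,\mathcal{P}})$. The inequality $\le$ is the independent encoding above; for $\ge$ I would take an optimal length-$p$ code $\mathbf{G}$ for $\mathcal{D}'$ and split its columns by part as $\mathbf{G}=[\,\mathbf{G}_1\ \mathbf{G}_2\ \mathbf{G}_{\{1,2\}}\,]$. Because a receiver in one part has side-information only inside that part, its decoding functional must annihilate the columns of the other parts; this forces the column spaces of $\mathbf{G}_1,\mathbf{G}_2,\mathbf{G}_{\{1,2\}}$ to be linearly independent inside $\mathbb{F}_2^p$ and forces each $\mathbf{G}_S$ to be a valid code for $\mathcal{D}_S^{k,\mathcal{P}}$, whence $p\ge \sum_S \mathrm{rank}(\mathbf{G}_S)\ge t\sum_S\beta_t^l(\mathcal{D}_S^{k,\mathcal{P}})$. (This is the familiar block-diagonal forcing in the min-rank characterization of vector-linear index codes; the care needed to turn the annihilation constraints into genuine independence is the routine but delicate part.) Chaining $\beta_t^l(\mathcal{D}^k,\mathcal{P})\ge\beta_t^l(\mathcal{D}^k)=\beta_t^l(\mathcal{D}')=\sum_S\beta_t^l(\mathcal{D}_S^{k,\mathcal{P}})$ completes the lower bound and hence part (i). Part (ii) then follows by letting $t\to\infty$ in part (i): each $\beta_t^l(\mathcal{D}_S^{k,\mathcal{P}})$ converges to $\beta^l(\mathcal{D}_S^{k,\mathcal{P}})$ (the limits exist by subadditivity and Fekete's lemma, as recorded with the definition of $\beta^l$), and the limit of a finite sum is the sum of the limits.
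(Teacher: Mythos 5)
Your proposal is correct and follows essentially the same route as the paper's own proof: the trivial upper bound by encoding the three sub-problems separately, and the lower bound via Lemma~\ref{lowerbnd}, the observation that cross-edges lie on no directed cycle when $\mathcal{H}$ is acyclic, Lemma~\ref{lemscc}, and additivity of the linear rate over the resulting disjoint union, with part (ii) obtained by letting $t\to\infty$. The only difference is one of completeness: you explicitly justify the two steps the paper treats as immediate (the projection argument showing cross-edges avoid all cycles, and the quotient/annihilation argument for additivity over disjoint unions), which strengthens rather than changes the argument.
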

\begin{proof}
	First, we provide a lower bound and then provide a matching upper bound using a result of single-sender problem. 
	
	Using an optimal linear code for each $\mathcal{D}_S^{k,\mathcal{P}}$ with linear broadcast rate $\beta_{t}^{l}(\mathcal{D}_S^{k,\mathcal{P}})$, $\forall S \in \{1,2,\{1,2\}\}$, all receivers can decode their demands. Thus, we have the trivial upper bound given by
	\[
	\beta_{t}^{l}(\mathcal{D}^k,\mathcal{P}) \leq \beta_{t}^{l}(\mathcal{D}_1^{k,\mathcal{P}}) + \beta_{t}^{l}(\mathcal{D}_2^{k,\mathcal{P}}) + \beta_{t}^{l}(\mathcal{D}_{\{1,2\}}^{k,\mathcal{P}}).
	\]
	\par  Consider the single-sender problem with the same side-information digraph $\mathcal{D}$. We remove all the edges that are contributing to the interactions between the sub-digraphs $\mathcal{D}_S^{k,\mathcal{P}}$, $\forall S \in \{1,2,\{1,2\}\}$. These edges are not lying on any directed cycle as the interaction digraph is acyclic. Removing them does not alter the achievable set of linear broadcast rates for a single-sender problem according to Lemma \ref{lemscc}. Hence, using Lemma \ref{lowerbnd} we see that,
	\[  \beta_{t}^{l}(\mathcal{D}^k,\mathcal{P}) \geq \beta_{t}^{l}(\mathcal{D}^k) = \beta_{t}^{l}(\mathcal{D}_1^{k,\mathcal{P}}) + \beta_{t}^{l}(\mathcal{D}_2^{k,\mathcal{P}}) + \beta_{t}^{l}(\mathcal{D}_{\{1,2\}}^{k,\mathcal{P}}),
	\]
	which is a matching lower bound. Hence, we obtain $(i)$ in the statement of the theorem. 
	
	Taking the limit as $t\to\infty$, in the definition of $\beta^{l}(\mathcal{D}^k,\mathcal{P})$, we obtain ($ii$) in the statement of the theorem. 
\end{proof}
\subsection{CASE II-A}
\par In the following, we provide the optimal linear broadcast rates for any two-sender problem belonging to Case II-A with any type of interactions.
\begin{thm}[Case II-A]
	For any TUICP with the side-information digraph $\mathcal{D}^k$, $k \in \{26,27,28,29\}$, having any type of interactions (either fully-participated or partially-participated) between its sub-digraphs $\mathcal{D}_1^{k,\mathcal{P}}$, $\mathcal{D}_2^{k,\mathcal{P}}$, and $\mathcal{D}_{\{1,2\}}^{k,\mathcal{P}}$, for any $\mathcal{P}$, and $t$-bit messages for any $t \geq 1$, we have 
	\begin{gather*}
	(i) ~ \beta_{t}^{l}(\mathcal{D}^k,\mathcal{P}) = \beta_{t}^{l}(\mathcal{D}_1^{k,\mathcal{P}}) + \beta_{t}^{l}(\mathcal{D}_2^{k,\mathcal{P}}) + \beta_{t}^{l}(\mathcal{D}_{\{1,2\}}^{k,\mathcal{P}}),\\
	(ii) ~ \beta^{l}(\mathcal{D}^k,\mathcal{P}) = \beta^{l}(\mathcal{D}_1^{k,\mathcal{P}}) + \beta^{l}(\mathcal{D}_2^{k,\mathcal{P}}) + \beta^{l}(\mathcal{D}_{\{1,2\}}^{k,\mathcal{P}}).
	\end{gather*}
	\label{thmtwoAl}
\end{thm}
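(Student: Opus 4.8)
The plan is to prove the converse, since the achievability bound $\le$ is the same trivial scheme as in Case I (concatenating optimal single-sender codes for $\mathcal{D}_1^{k,\mathcal{P}}$, $\mathcal{D}_2^{k,\mathcal{P}}$, $\mathcal{D}_{\{1,2\}}^{k,\mathcal{P}}$ already serves every receiver), and then to obtain $(ii)$ from $(i)$ by letting $t\to\infty$ exactly as in the previous proof. The difficulty is that the Case I converse is unavailable: for $k\in\{26,27,28,29\}$ the interaction digraph contains the $2$-cycle $\mathcal{D}_1\leftrightarrow\mathcal{D}_2$, so the cross edges now lie on directed cycles of $\mathcal{D}^k$, Lemma \ref{lemscc} cannot delete them, and $\beta_t^l(\mathcal{D}^k)$ is in general strictly below the claimed sum. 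Hence Lemma \ref{lowerbnd} is too weak and a genuinely two-sender argument, using that no single sender owns both $\mathcal{P}_1$ and $\mathcal{P}_2$, is needed.

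I would argue on the stacked encoding matrix $\mathbf{G}=\left[\begin{smallmatrix}\mathbf{G}^1\\ \mathbf{G}^2\end{smallmatrix}\right]$, whose columns are partitioned into the coordinate blocks $N_1,N_2,N_{\{1,2\}}$ of the bits of $\mathcal{P}_1,\mathcal{P}_2,\mathcal{P}_{\{1,2\}}$, and where $\mathbf{G}^1$ is zero on $N_2$ while $\mathbf{G}^2$ is zero on $N_1$, so that $p_1+p_2\ge\operatorname{rank}(\mathbf{G})=\dim R$ with $R=\operatorname{rowspan}(\mathbf{G})$. Setting every message of $\mathcal{P}_2\cup\mathcal{P}_{\{1,2\}}$ to zero makes $\mathcal{C}_2$ constant and $\mathcal{C}_1$ a function of $\mathcal{P}_1$ only; the receivers of $\mathcal{D}_1^{k,\mathcal{P}}$ (whose residual side-information is exactly that of $\mathcal{D}_1^{k,\mathcal{P}}$) must then decode from $\mathcal{C}_1$, so the $N_1$-restriction $\mathbf{A}$ of $\mathbf{G}^1$ is a valid code for $\mathcal{D}_1^{k,\mathcal{P}}$ and $\operatorname{rank}(\mathbf{A})\ge t\,\beta_t^l(\mathcal{D}_1^{k,\mathcal{P}})$; symmetrically the $N_2$-restriction $\mathbf{C}$ of $\mathbf{G}^2$ gives $\operatorname{rank}(\mathbf{C})\ge t\,\beta_t^l(\mathcal{D}_2^{k,\mathcal{P}})$. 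Crucially, dropping the $N_{\{1,2\}}$ coordinates via the projection $\pi$ sends the sender-$1$ rows into $N_1$ and the sender-$2$ rows into $N_2$, so $\pi(R)$ is the internal direct sum of $\operatorname{rowspan}(\mathbf{A})$ and $\operatorname{rowspan}(\mathbf{C})$, whence $\dim\pi(R)\ge t\,\beta_t^l(\mathcal{D}_1^{k,\mathcal{P}})+t\,\beta_t^l(\mathcal{D}_2^{k,\mathcal{P}})$.

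The remaining, and in my view decisive, ingredient is the term $t\,\beta_t^l(\mathcal{D}_{\{1,2\}}^{k,\mathcal{P}})$, which I would recover as $\dim\big(R\cap U_{\{1,2\}}\big)$, where $U_{\{1,2\}}$ is the coordinate subspace supported on $N_{\{1,2\}}$; since this is exactly $\ker(\pi|_R)$, rank--nullity gives $\dim R=\dim\big(R\cap U_{\{1,2\}}\big)+\dim\pi(R)$. Here I would invoke the structural signature of Case II-A, namely that $\mathcal{D}_{\{1,2\}}^{k,\mathcal{P}}$ has no outgoing interaction edges for $k\in\{26,27,28,29\}$, so every receiver of $\mathcal{D}_{\{1,2\}}^{k,\mathcal{P}}$ has side-information confined to $\mathcal{P}_{\{1,2\}}$. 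Consequently each such receiver's decoding vector $\mathbf{e}_i+(\text{side information})$ is supported on $N_{\{1,2\}}$ and lies in $R$, hence in $R\cap U_{\{1,2\}}$; the collection of these vectors is a valid single-sender code for $\mathcal{D}_{\{1,2\}}^{k,\mathcal{P}}$, forcing $\dim\big(R\cap U_{\{1,2\}}\big)\ge t\,\beta_t^l(\mathcal{D}_{\{1,2\}}^{k,\mathcal{P}})$. Adding the three contributions yields $p_1+p_2\ge\dim R\ge t\big(\beta_t^l(\mathcal{D}_1^{k,\mathcal{P}})+\beta_t^l(\mathcal{D}_2^{k,\mathcal{P}})+\beta_t^l(\mathcal{D}_{\{1,2\}}^{k,\mathcal{P}})\big)$, matching the upper bound. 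I expect this last step to be the main obstacle: it is precisely the inability of the $\mathcal{D}_{\{1,2\}}^{k,\mathcal{P}}$ receivers to cancel the $\mathcal{P}_1$- and $\mathcal{P}_2$-components of the codeword that makes $\beta_t^l(\mathcal{D}_{\{1,2\}}^{k,\mathcal{P}})$ add on, and this property fails as soon as $\mathcal{D}_{\{1,2\}}^{k,\mathcal{P}}$ gains an outgoing interaction edge, which is exactly what distinguishes Case II-A from the overlapping Cases II-B through II-E.
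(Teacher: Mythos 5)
Your proposal is correct and takes essentially the same approach as the paper: both prove the converse by a dimension count on the joint encoding matrix, using exactly the two structural facts you isolate --- the zero blocks forced by encoding at separate senders (so the $\mathcal{P}_1$- and $\mathcal{P}_2$-contributions cannot align) and the Case II-A property that receivers of $\mathcal{D}_{\{1,2\}}^{k,\mathcal{P}}$ have side-information only inside $\mathcal{P}_{\{1,2\}}$ (so that subproblem's dimensions must be independent of the rest). The only difference is presentational: the paper argues informally in column-space, interference-alignment language (``precoding vectors cannot be aligned''), whereas your row-space formulation --- the rank--nullity split $\dim R = \dim\bigl(R\cap U_{\{1,2\}}\bigr)+\dim\pi(R)$ together with the restriction and decoding-vector arguments supplying the three lower bounds --- is a rigorous dual rendering of the same count.
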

\begin{proof} 
	We consider the general form of any two-sender linear index code, and use interference alignment techniques to prove the theorem.
	
	Let ${\bf{x}}^S \in \mathbb{F}_2^{t|\mathcal{P}_S| \times 1}$ be the concatenated message vector obtained by concatenating all the messages present in $\mathcal{P}_S,$ for any non-empty set $S \subseteq \{1,2\}$. Let ${\bf{x}}^T=(({\bf{x}}^1)^T | ({\bf{x}}^2)^T | ({\bf{x}}^{\{1,2\}})^T)$, ${\bf{G}}_{i} \in \mathbb{F}_2^{l_i \times tm_i},$ $i \in \{1,2\}$, and ${\bf{G}}_{\{1,2\}}^{j} \in \mathbb{F}_2^{l_j \times tm_{\{1,2\}}},$ $j \in \{1,2,3\}$. In general, any two-sender linear index code consists of linear codes transmitted by the senders $S_i$ of the form ${\bf{G}}_{i}{\bf{x}}^i + {\bf{G}}^{i}_{\{1,2\}}{\bf{x}}^{\{1,2\}}$, $i \in \{1,2\}$, and  ${\bf{G}}_{\{1,2\}}^{3}{\bf{x}}^{\{1,2\}}$ transmitted by any one of the senders. Hence, a linear code transmitted by the senders can be written as ${\bf{G}}{\bf{x}}$ where ${\bf{G}}$ is as given in  (\ref{Geq}).  We assume that the matrices $({\bf{G}}_{1} | {\bf{G}}^{1}_{\{1,2\}}), ({\bf{G}}_{2} | {\bf{G}}^{2}_{\{1,2\}}),$ and ${\bf{G}}_{\{1,2\}}^{3}$ are full-rank matrices, which is required for the optimality of any two-sender index code.
	\begin{equation}
	{\bf{G}}=
	\left(
	\begin{array}{c|c|c} 
	{\bf{G}}_{1} & {\bf{0}}_{l_1 \times tm_2} & {\bf{G}}_{\{1,2\}}^{1} \\
	\hline
	{\bf{0}}_{l_2 \times tm_1} & {\bf{G}}_{2} & {\bf{G}}_{\{1,2\}}^{2} \\
	\hline
	{\bf{0}}_{l_3 \times tm_1} & {\bf{0}}_{l_3 \times tm_2} & {\bf{G}}_{\{1,2\}}^{3} \\
	\end{array}
	\right).
	\label{Geq}
	\end{equation}
	\par From an interference alignment perspective \cite{jafar}, the columns of ${\bf{G}}$ serve as the precoding vectors for the messages. We prove a lower bound on the number of received signal dimensions (this is same as the number of rows of ${\bf{G}}$). Note that this is also a lower bound on the length of a two-sender optimal linear index code.
	\par Observe that the receivers in $\mathcal{D}_{\{1,2\}}^{k,\mathcal{P}}$ do not have any side-information in $\mathcal{D}_{1}^{k,\mathcal{P}} \cup \mathcal{D}_{2}^{k,\mathcal{P}}$. Hence, the precoding vectors for the messages in $\mathcal{D}_{1}^{k,\mathcal{P}} \cup \mathcal{D}_{2}^{k,\mathcal{P}}$ must be independent of the precoding vectors for the messages in $\mathcal{D}_{\{1,2\}}^{k,\mathcal{P}}$. Otherwise, one or more receivers in $\mathcal{D}_{\{1,2\}}^{k,\mathcal{P}}$ can not cancel the interference caused by the precoding vectors of one or more messages in $\mathcal{D}_{1}^{k,\mathcal{P}} \cup  \mathcal{D}_{2}^{k,\mathcal{P}}$. Hence, a minimum of $t \times \beta^{l}_{t}(\mathcal{D}_{\{1,2\}}^{k,\mathcal{P}})$ dimensions (or independent vectors) are required to satisfy all the receivers (to precode the corresponding messages) in $\mathcal{D}_{\{1,2\}}^{k,\mathcal{P}}$, as it is the optimal length of any single-sender linear index code for the problem with the  side-information digraph $\mathcal{D}_{\{1,2\}}^{k,\mathcal{P}}$. Note that these $t \times \beta^{l}_{t}(\mathcal{D}_{\{1,2\}}^{k,\mathcal{P}})$ precoding vectors must be independent of those used for messages in $\mathcal{D}_{1}^{k,\mathcal{P}} \cup \mathcal{D}_{2}^{k,\mathcal{P}}$.
	\par In spite of the receivers in $\mathcal{D}_{1}^{k,\mathcal{P}}$ having some side-information in $\mathcal{D}_{2}^{k,\mathcal{P}}$ and vice-versa, the precoding vector of any message in $\mathcal{D}_{1}^{k,\mathcal{P}}$ can not be aligned with any vector in the span of the precoding vectors of the messages in $\mathcal{D}_{2}^{k,\mathcal{P}}$, due to the constraint of encoding at different senders (or equivalently due to orthogonal transmissions by the two senders), given by the first $t(m_1 + m_2)$ columns of the matrix ${\bf{G}}$. Thus, the precoding vectors of the  messages in $\mathcal{D}_{1}^{k,\mathcal{P}}$ are independent of the precoding vectors of those  in $\mathcal{D}_{2}^{k,\mathcal{P}}$. Hence, a  minimum of $t \times (\beta^{l}_{t}(\mathcal{D}_{1}^{k,\mathcal{P}})+\beta^{l}_{t}(\mathcal{D}_{2}^{k,\mathcal{P}}))$ dimensions (or independent vectors) are required to satisfy all the receivers in $\mathcal{D}_{1}^{k,\mathcal{P}}$ and $\mathcal{D}_{2}^{k,\mathcal{P}}$. 
	\par The total number of dimensions used for precoding all the messages is same as the rank of ${\bf{G}}$, which is $l_1 + l_2 +l_3$. Thus, we have the lower bound on the optimal codelength, given by 
	\[
	l_1 + l_2 + l_3 \geq  t \times (\beta_{t}^{l}(\mathcal{D}_1^{k,\mathcal{P}}) + \beta_{t}^{l}(\mathcal{D}_2^{k,\mathcal{P}}) + \beta_{t}^{l}(\mathcal{D}_{\{1,2\}}^{k,\mathcal{P}})).
	\]
	Using the single-sender index code for each sub-problem one can achieve the code length equal to $t \times \beta_{t}^{l}(\mathcal{D}_1^{k,\mathcal{P}}) + \beta_{t}^{l}(\mathcal{D}_2^{k,\mathcal{P}}) + \beta_{t}^{l}(\mathcal{D}_{\{1,2\}}^{k,\mathcal{P}})$, which gives an upper bound equal to the lower bound on the optimal codelength. Hence, we obtain $(i)$ in the statement of the theorem. 
	\par Taking the limit as $t\to\infty$, in the definition of $\beta^{l}(\mathcal{D}^k,\mathcal{P})$, we obtain (ii) in the statement of the theorem.
\end{proof}
\begin{rem}
	From Theorems \ref{thmonel} and \ref{thmtwoAl}, we observe that the side-information of receivers in $\mathcal{D}_S^{k,\mathcal{P}}$ present in other $\mathcal{D}_{S'}^{k,\mathcal{P}}$, $S,S' \in \{1,2,\{1,2\}\}$, do not help in reducing the optimal linear broadcast rates of the two-sender problems belonging to Cases I and II-A.  
\end{rem}
\subsection{CASE II-B}
\par In the following, we provide the optimal linear broadcast rates for any two-sender problem belonging to Case II-B with fully-participated  interactions. 
We state the following lower bound required to prove the main theorem of this sub-section. 
\begin{lem}
	For any TUICP, $\beta_{t}^{l}(\mathcal{D},\mathcal{P}) \geq \beta_{t}^{l}(\mathcal{D}_1) + \beta_{t}^{l}(\mathcal{D}_2)$.
	\label{lemlowbnd12}
\end{lem}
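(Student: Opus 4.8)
The plan is to take an arbitrary two-sender linear index code for $\mathcal{I}(\mathcal{D},\mathcal{P})$, write it in the general block form $\mathbf{G}$ of (\ref{Geq}) with part-lengths $l_1,l_2,l_3$ (so that the codelength is $l_1+l_2+l_3$), and then bound $l_1$ and $l_2$ separately by a side-information-augmentation (genie) argument. Concretely, I would show that the $S_1$-block $\mathbf{G}_1\in\mathbb{F}_2^{l_1\times tm_1}$ is, by itself, a valid single-sender linear index code for the sub-problem $\mathcal{D}_1$, whence $l_1\ge t\,\beta_t^l(\mathcal{D}_1)$, and symmetrically $l_2\ge t\,\beta_t^l(\mathcal{D}_2)$. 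Summing these and discarding $l_3\ge 0$ gives $l_1+l_2+l_3\ge t(\beta_t^l(\mathcal{D}_1)+\beta_t^l(\mathcal{D}_2))$; since this holds for every two-sender code, it holds for the optimal one, yielding $\beta_t^l(\mathcal{D},\mathcal{P})\ge \beta_t^l(\mathcal{D}_1)+\beta_t^l(\mathcal{D}_2)$.

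First I would recall why the block form (\ref{Geq}) is without loss of generality: because $S_1$ possesses only $\mathcal{P}_1\cup\mathcal{P}_{\{1,2\}}$, $S_2$ only $\mathcal{P}_2\cup\mathcal{P}_{\{1,2\}}$, and the two transmissions are orthogonal, the columns encoding the $\mathcal{P}_1$-messages are supported on the top $l_1$ rows and the columns encoding the $\mathcal{P}_2$-messages on the middle $l_2$ rows. In the interference-alignment language of Theorem \ref{thmtwoAl}, this zero-block structure is exactly what forces the $\mathcal{P}_1$-precoders and $\mathcal{P}_2$-precoders to live in disjoint coordinate blocks, hence to be linearly independent; the genie argument below then supplies the rank of each block individually.

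The core step is the genie argument for $\mathcal{D}_1$. I would hand every receiver $v_i\in\mathcal{D}_1$ all of the messages in $\mathcal{P}_2\cup\mathcal{P}_{\{1,2\}}$ as extra side-information, which can only decrease the required length. Such a receiver can then reconstruct $S_2$'s entire transmission $\mathbf{G}_2\mathbf{x}^2+\mathbf{G}_{\{1,2\}}^2\mathbf{x}^{\{1,2\}}$ and the common block $\mathbf{G}_{\{1,2\}}^3\mathbf{x}^{\{1,2\}}$, and can strip $\mathbf{G}_{\{1,2\}}^1\mathbf{x}^{\{1,2\}}$ from $S_1$'s transmission, leaving $\mathbf{G}_1\mathbf{x}^1$ as the only residual that depends on $\mathbf{x}^1$. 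The side-information about $\mathbf{x}^1$ that remains usable is precisely $\mathcal{K}_i\cap\mathcal{P}_1$, i.e. the in-neighbourhood of $v_i$ inside $\mathcal{D}_1$. Hence $\mathbf{x}^1\mapsto\mathbf{G}_1\mathbf{x}^1$, with the decoders induced from the original ones, is a valid length-$l_1$ single-sender code for $\mathcal{D}_1$, giving $l_1\ge t\,\beta_t^l(\mathcal{D}_1)$. If one prefers to avoid an informal genie, the same conclusion follows by specialising $\mathbf{x}^2=\mathbf{0}$ and $\mathbf{x}^{\{1,2\}}=\mathbf{0}$ in the original code and feeding the (now fully known) residual side-information to the original decoders.

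I expect the only delicate point to be this last equivalence: verifying that after the augmentation the residual decoding problem is genuinely the single-sender problem on $\mathcal{D}_1$, and not one with strictly more usable side-information. In particular I must argue that the cross-edges $\mathcal{D}_1\!\to\!\mathcal{D}_2$, $\mathcal{D}_2\!\to\!\mathcal{D}_1$, and any edges to or from $\mathcal{D}_{\{1,2\}}$ give a receiver in $\mathcal{D}_1$ no information about $\mathbf{x}^1$ beyond $\mathcal{K}_i\cap\mathcal{P}_1$; this is true because every augmented message lies outside $\mathcal{P}_1$ and $\mathbf{G}_1\mathbf{x}^1$ is the unique $\mathbf{x}^1$-dependent residual, but it is the step that needs to be written carefully. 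Everything else -- the block form, the bound $l_3\ge 0$, and the summation of the two single-sender bounds -- is routine.
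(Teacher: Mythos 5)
Your proof is correct, but it is not the route the paper takes: the paper disposes of this lemma in a single line, citing Lemma 6 of \cite{CTLO} (a statement proved there for general, not necessarily linear, two-sender codes) and remarking that its proof goes through when one restricts attention to linear schemes. You instead give a self-contained linear-algebraic argument: using the block structure of (\ref{Geq}) forced by the orthogonality of the two senders' transmissions, you specialize $\mathbf{x}^2=\mathbf{0}$ and $\mathbf{x}^{\{1,2\}}=\mathbf{0}$ to extract from $\mathbf{G}_1$ a valid linear single-sender code of length $l_1$ for the sub-problem $\mathcal{D}_1$ (so $l_1 \geq t\,\beta_t^l(\mathcal{D}_1)$), symmetrically obtain $l_2 \geq t\,\beta_t^l(\mathcal{D}_2)$, and sum while discarding $l_3 \geq 0$. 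This is the natural linear specialization of the same underlying restriction idea, so the mathematical content is close; what your version buys is self-containedness (no need to unfold the external lemma) and an explicit check that the induced sub-code is again linear, which is exactly what the paper's one-line citation leaves implicit, at the cost of some bookkeeping the citation avoids. Two small points: the ``delicate point'' you flag is fully resolved by the zero-substitution formulation you yourself offer --- validity of the original decoders $\mathbb{D}_i$ for all message realizations includes the all-zero realization of $\mathcal{P}_2 \cup \mathcal{P}_{\{1,2\}}$, so the induced decoder for a receiver in $\mathcal{V}(\mathcal{D}_1)$ simply feeds zeros into $\mathbb{D}_i$ and the informal genie can be dropped entirely; and you correctly never use the full-rank assumption the paper attaches to (\ref{Geq}), which is as it should be, since a lower bound must apply to every linear code, not only optimal ones.
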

The proof follows directly from Lemma 6 in \cite{CTLO}, by considering only optimal linear broadcast rates in its proof.
\begin{thm}[Case II-B]
	For any TUICP with the  side-information digraph $\mathcal{D}^k$, $k \in \{30,31,32,33\}$, having fully-participated interactions between its sub-digraphs $\mathcal{D}_1^{k,\mathcal{P}}$, $\mathcal{D}_2^{k,\mathcal{P}}$, and $\mathcal{D}_{\{1,2\}}^{k,\mathcal{P}}$, for any $\mathcal{P}$, and $t$-bit messages for any $t \geq 1$, we have
	\begin{gather*}
	(i) ~ \beta_{t}^{l}(\mathcal{D}^k,\mathcal{P}) = max\{ \beta_{t}^{l}(\mathcal{D}_1^{k,\mathcal{P}}) + \beta_{t}^{l}(\mathcal{D}_2^{k,\mathcal{P}}), \beta_{t}^{l}(\mathcal{D}_{\{1,2\}}^{k,\mathcal{P}})\},\\
	(ii) ~ \beta^{l}(\mathcal{D}^k,\mathcal{P}) = max\{ \beta^{l}(\mathcal{D}_1^{k,\mathcal{P}}) + \beta^{l}(\mathcal{D}_2^{k,\mathcal{P}}), \beta^{l}(\mathcal{D}_{\{1,2\}}^{k,\mathcal{P}})\}.
	\end{gather*}
	\label{thmtwoA}
\end{thm}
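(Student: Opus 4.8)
The plan is to prove part $(i)$ by squeezing $\beta_t^l(\mathcal{D}^k,\mathcal{P})$ between a lower bound and a matching linear construction, and then obtain $(ii)$ by letting $t\to\infty$ exactly as in Theorem \ref{thmonel}. The structural input I would use is that in Case II-B ($k\in\{30,31,32,33\}$) the interaction digraph carries bidirectional edges between $\{1,2\}$ and each of $1$ and $2$; under the fully-participated hypothesis this means every receiver of $\mathcal{D}_{\{1,2\}}^{k,\mathcal{P}}$ knows all messages of $\mathcal{D}_1^{k,\mathcal{P}}\cup\mathcal{D}_2^{k,\mathcal{P}}$, while every receiver of $\mathcal{D}_1^{k,\mathcal{P}}$ and of $\mathcal{D}_2^{k,\mathcal{P}}$ knows all messages of $\mathcal{D}_{\{1,2\}}^{k,\mathcal{P}}$. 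For the lower bound I would bound the two arguments of the maximum separately. The term $\beta_t^l(\mathcal{D}_1^{k,\mathcal{P}})+\beta_t^l(\mathcal{D}_2^{k,\mathcal{P}})$ is immediate from Lemma \ref{lemlowbnd12}. For the term $\beta_t^l(\mathcal{D}_{\{1,2\}}^{k,\mathcal{P}})$, Lemma \ref{lowerbnd} gives $\beta_t^l(\mathcal{D}^k,\mathcal{P})\ge\beta_t^l(\mathcal{D}^k)$, and since $\mathcal{D}_{\{1,2\}}^{k,\mathcal{P}}$ is the sub-digraph of $\mathcal{D}^k$ induced by $\mathcal{P}_{\{1,2\}}$, fixing every message outside $\mathcal{P}_{\{1,2\}}$ to zero in any linear single-sender code for $\mathcal{D}^k$ yields a valid linear code for $\mathcal{D}_{\{1,2\}}^{k,\mathcal{P}}$ of no greater length, so $\beta_t^l(\mathcal{D}^k)\ge\beta_t^l(\mathcal{D}_{\{1,2\}}^{k,\mathcal{P}})$. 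As the left-hand side dominates each argument, it dominates their maximum. Both of these bounds hold for arbitrary interactions; the fully-participated hypothesis will be used only for achievability.

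For the matching upper bound I would fix optimal single-sender linear codes $\mathcal{C}_1,\mathcal{C}_2,\mathcal{C}_{\{1,2\}}$ for $\mathcal{D}_1^{k,\mathcal{P}},\mathcal{D}_2^{k,\mathcal{P}},\mathcal{D}_{\{1,2\}}^{k,\mathcal{P}}$ of lengths $p_1=t\beta_t^l(\mathcal{D}_1^{k,\mathcal{P}})$, $p_2=t\beta_t^l(\mathcal{D}_2^{k,\mathcal{P}})$, $p_{12}=t\beta_t^l(\mathcal{D}_{\{1,2\}}^{k,\mathcal{P}})$, and overlay $\mathcal{C}_{\{1,2\}}$ onto $\mathcal{C}_1,\mathcal{C}_2$ by bit-wise XOR. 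Concretely, I split $\mathcal{C}_{\{1,2\}}$ into consecutive blocks and let $S_1$ transmit $\mathcal{C}_1\oplus\mathcal{C}_{\{1,2\}}[1:p_1]$ and $S_2$ transmit $\mathcal{C}_2\oplus\mathcal{C}_{\{1,2\}}[p_1+1:p_1+p_2]$, with any residual block of $\mathcal{C}_{\{1,2\}}$ sent by either sender and zero-padding applied in each $\oplus$ as in the definition. This is legal because $\mathcal{C}_{\{1,2\}}$ depends only on $\mathcal{P}_{\{1,2\}}$, available at both senders, whereas $\mathcal{C}_i$ depends only on $\mathcal{P}_i$; the total length is exactly $\max\{p_1+p_2,p_{12}\}$. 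Decoding uses the fully-participated structure: a receiver in $\mathcal{D}_{\{1,2\}}^{k,\mathcal{P}}$ knows every message of $\mathcal{D}_1^{k,\mathcal{P}}\cup\mathcal{D}_2^{k,\mathcal{P}}$, hence reconstructs $\mathcal{C}_1,\mathcal{C}_2$, cancels them to recover all of $\mathcal{C}_{\{1,2\}}$, and decodes; a receiver in $\mathcal{D}_1^{k,\mathcal{P}}$ knows every message of $\mathcal{P}_{\{1,2\}}$, hence reconstructs the overlaid block of $\mathcal{C}_{\{1,2\}}$, cancels it from the sub-codeword of $S_1$ to recover $\mathcal{C}_1$, and decodes; the receivers of $\mathcal{D}_2^{k,\mathcal{P}}$ are symmetric. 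This matches the lower bound and establishes $(i)$, and $(ii)$ follows by letting $t\to\infty$.

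I expect the achievability step to be the main obstacle, not the converse. I must verify that the XOR overlay can always be arranged to respect the orthogonal two-sender transmission while keeping the total length at $\max\{p_1+p_2,p_{12}\}$; this requires a short case split according to whether $p_{12}\ge p_1+p_2$, so that $\mathcal{C}_1,\mathcal{C}_2$ fit inside $\mathcal{C}_{\{1,2\}}$, or $p_{12}<p_1+p_2$, so that $\mathcal{C}_{\{1,2\}}$ fits inside the concatenation of $\mathcal{C}_1,\mathcal{C}_2$, together with care over the zero-padding used by $\oplus$. The decoding verifications hinge entirely on the fully-participated hypothesis, since each group of receivers must know all messages of the sub-digraph whose code is overlaid on its block; this is precisely why, unlike Cases I and II-A, the statement is not claimed for partially-participated interactions.
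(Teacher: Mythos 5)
Your proof is correct and follows essentially the same route as the paper's: the paper proves Theorem \ref{thmtwoA} by invoking the proof of Theorem 7 of \cite{CTLO} with optimal linear rates and linear codes substituted throughout and with Lemma \ref{lemlowbnd12} replacing Lemma 6 of \cite{CTLO}, and that argument is exactly your two-part structure --- the lower bound $\max\{\beta_t^l(\mathcal{D}_1^{k,\mathcal{P}})+\beta_t^l(\mathcal{D}_2^{k,\mathcal{P}}),\,\beta_t^l(\mathcal{D}_{\{1,2\}}^{k,\mathcal{P}})\}$ obtained from Lemma \ref{lemlowbnd12} together with Lemma \ref{lowerbnd} and the induced-subdigraph restriction, matched by the XOR-overlay of $\mathcal{C}_{\{1,2\}}$ onto $\mathcal{C}_1$ and $\mathcal{C}_2$ (the same style of construction the paper writes out explicitly for Case II-E in Theorem \ref{thmtwoel}). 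The only difference is presentational: you give the argument self-contained, whereas the paper gives it by reference.
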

\begin{proof}
	The proof follows from the proof of Theorem 7 in \cite{CTLO} by replacing all the optimal broadcast rates and the optimal codes present in the proof with the optimal linear broadcast rates and the  optimal linear codes respectively, and using Lemma \ref{lemlowbnd12} stated in this paper instead of Lemma 6 given in \cite{CTLO}.
\end{proof} 

\subsection{CASE II-C and CASE II-D.}
\par In the following, we provide the optimal linear  broadcast rates for any two-sender problem belonging to Cases II-C and II-D with fully-participated  interactions. We prove the following lower bound to prove the main result of this sub-section.
\begin{lem}
	For any TUICP with the side-information digraph $\mathcal{D}^k$, $k \in \{34,35,\cdots,45\}$, having fully-participated interactions between its sub-digraphs $\mathcal{D}_1^{k,\mathcal{P}}$, $\mathcal{D}_2^{k,\mathcal{P}}$, and $\mathcal{D}_{\{1,2\}}^{k,\mathcal{P}}$, for any $\mathcal{P}$, and $t$-bit messages for any $t \geq 1$, we have,  $\beta_{t}^{l}(\mathcal{D}^k,\mathcal{P}) \geq \beta_{t}^{l}(\mathcal{D}_2^{k,\mathcal{P}}) +  \beta_{t}^{l}(\mathcal{D}_{\{1,2\}}^{k,\mathcal{P}})$.
	\label{lemtwoc}
\end{lem}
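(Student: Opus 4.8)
The plan is to prove the lower bound in two stages: reduce $\mathcal{I}(\mathcal{D}^k,\mathcal{P})$ to a single-sender problem that only involves the messages of $\mathcal{D}_2^{k,\mathcal{P}}$ and $\mathcal{D}_{\{1,2\}}^{k,\mathcal{P}}$, and then evaluate that single-sender rate by exploiting the fact that, for these interaction digraphs, $\mathcal{D}_2^{k,\mathcal{P}}$ and $\mathcal{D}_{\{1,2\}}^{k,\mathcal{P}}$ are not mutually connected.

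First I would fix an optimal two-sender linear index code for $\mathcal{I}(\mathcal{D}^k,\mathcal{P})$, written as ${\bf{G}}{\bf{x}}$ with ${\bf{G}}$ in the block form (\ref{Geq}), whose length equals $t\,\beta_t^l(\mathcal{D}^k,\mathcal{P})$. Substituting ${\bf{x}}^1={\bf{0}}$ makes every transmitted bit a function of ${\bf{x}}^2$ and ${\bf{x}}^{\{1,2\}}$ only. Since each decoder $\mathbb{D}_i$ is correct for all inputs, it remains correct when ${\bf{x}}^1={\bf{0}}$; thus every receiver in $\mathcal{D}_2^{k,\mathcal{P}}\cup\mathcal{D}_{\{1,2\}}^{k,\mathcal{P}}$ still recovers its demand from the restricted codeword and the part of its side-information lying in $\mathcal{P}_2\cup\mathcal{P}_{\{1,2\}}$ (the zeroed $\mathcal{P}_1$-symbols contributing nothing). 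Hence the restricted code is a valid single-sender linear index code, of the same length, for the sub-digraph $\mathcal{D}''$ of $\mathcal{D}^k$ induced by $\mathcal{V}(\mathcal{D}_2^{k,\mathcal{P}})\cup\mathcal{V}(\mathcal{D}_{\{1,2\}}^{k,\mathcal{P}})$, which yields $\beta_t^l(\mathcal{D}^k,\mathcal{P})\ge\beta_t^l(\mathcal{D}'')$.

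Next I would compute $\beta_t^l(\mathcal{D}'')$. For the cases at hand the interaction between $\mathcal{D}_2^{k,\mathcal{P}}$ and $\mathcal{D}_{\{1,2\}}^{k,\mathcal{P}}$ is one-directional, so no directed cycle of $\mathcal{D}''$ can use an edge crossing between the two parts; by Lemma \ref{lemscc} every such crossing edge can be deleted without changing the achievable linear broadcast rates, leaving the disjoint union $\mathcal{D}_2^{k,\mathcal{P}}\sqcup\mathcal{D}_{\{1,2\}}^{k,\mathcal{P}}$. The optimal linear rate is additive over this disjoint union: in any linear code the receivers of each component then have no side-information in the other, so, exactly as in the interference-alignment argument of Theorem \ref{thmtwoAl}, one component's precoding vectors are seen entirely as interference by the other, forcing the two precoder spans to be linearly independent. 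Counting dimensions gives at least $t\,\beta_t^l(\mathcal{D}_2^{k,\mathcal{P}})+t\,\beta_t^l(\mathcal{D}_{\{1,2\}}^{k,\mathcal{P}})$ received dimensions, so $\beta_t^l(\mathcal{D}'')\ge\beta_t^l(\mathcal{D}_2^{k,\mathcal{P}})+\beta_t^l(\mathcal{D}_{\{1,2\}}^{k,\mathcal{P}})$. Chaining this with the previous inequality proves the lemma.

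I expect the decisive step to be the acyclicity claim: one must verify from each interaction digraph $\mathcal{H}_k$ that $\mathcal{D}_2$ and $\mathcal{D}_{\{1,2\}}$ are never joined by edges in both directions, which is precisely the structural feature that prevents $\mathcal{D}_2$ from sharing dimensions with $\mathcal{D}_{\{1,2\}}$ and hence forces it to be counted separately in the Case II-C and II-D rates. The one-directional interaction is also what makes the additivity lower bound go through, since it guarantees that the projection removing one component's precoders leaves a valid single-sender code for the other; without it (i.e. when $\mathcal{D}_2$ and $\mathcal{D}_{\{1,2\}}$ are mutually connected) the two could align and the bound would fail. Finally, interchanging the two senders gives the symmetric lower bound $\beta_t^l(\mathcal{D}^k,\mathcal{P})\ge\beta_t^l(\mathcal{D}_1^{k,\mathcal{P}})+\beta_t^l(\mathcal{D}_{\{1,2\}}^{k,\mathcal{P}})$, which is the companion bound needed for the cases in which $\mathcal{D}_1$ plays the isolated role.
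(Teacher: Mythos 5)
Your proposal is correct and follows essentially the same route as the paper: restrict the two-sender problem to the induced sub-digraph on $\mathcal{V}(\mathcal{D}_2^{k,\mathcal{P}})\cup\mathcal{V}(\mathcal{D}_{\{1,2\}}^{k,\mathcal{P}})$ (which is effectively single-sender), then use the unidirectionality of the $\mathcal{D}_2^{k,\mathcal{P}}$--$\mathcal{D}_{\{1,2\}}^{k,\mathcal{P}}$ interaction together with Lemma~\ref{lemscc} to split the rate into the sum $\beta_t^l(\mathcal{D}_2^{k,\mathcal{P}})+\beta_t^l(\mathcal{D}_{\{1,2\}}^{k,\mathcal{P}})$. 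You are in fact somewhat more explicit than the paper, which leaves the zero-substitution step and the additivity of the linear rate over the resulting disjoint union implicit, while you justify both.
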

\begin{proof}
	By removing the vertices belonging to $\mathcal{D}_1^{k,\mathcal{P}}$ from $\mathcal{D}^k$, we obtain a digraph $\mathcal{D}_{23}^{k,\mathcal{P}}$ which defines a TUICP. This can be considered as a single-sender ICP as both $\mathcal{P}_2$ and $\mathcal{P}_{\{1,2\}}$ are with $S_2$. Hence, we have
	\begin{equation}
	\beta_{t}^{l}(\mathcal{D}^k,\mathcal{P}) \geq \beta_{t}^{l}(\mathcal{D}_{23}^{k,\mathcal{P}}).
	\label{eqlem15}
	\end{equation}
	As there are only unidirectional edges from $\mathcal{V}(\mathcal{D}_2^{k,\mathcal{P}})$ to $\mathcal{V}(\mathcal{D}_{\{1,2\}}^{k,\mathcal{P}})$ or vice-versa (depending on the particular $k$), using Lemma \ref{lemscc}, we have 
	\begin{equation}
	\beta_{t}^{l}(\mathcal{D}_{23}^{k,\mathcal{P}}) = \beta_{t}^{l}(\mathcal{D}_2^{k,\mathcal{P}}) +  \beta_{t}^{l}(\mathcal{D}_{\{1,2\}}^{k,\mathcal{P}}).
	\label{eqlem151}
	\end{equation}
	From (\ref{eqlem15}) and (\ref{eqlem151}), we have
	\[
	\beta_{t}^{l}(\mathcal{D}^k,\mathcal{P}) \geq \beta_{t}^{l}(\mathcal{D}_2^{k,\mathcal{P}}) +  \beta_{t}^{l}(\mathcal{D}_{\{1,2\}}^{k,\mathcal{P}}).
	\]
\end{proof}
\begin{thm}[Case II-C]
	For any TUICP with the side-information digraph $\mathcal{D}^k$, $k \in \{34,35,\cdots,45\}$, having  fully-participated interactions between its sub-digraphs $\mathcal{D}_1^{k,\mathcal{P}}$, $\mathcal{D}_2^{k,\mathcal{P}}$,  and $\mathcal{D}_{\{1,2\}}^{k,\mathcal{P}}$, for any $\mathcal{P}$, and $t$-bit messages for any $t \geq 1$,we have
	\begin{gather*}
	(i) ~ \beta_{t}^{l}(\mathcal{D}^k,\mathcal{P}) = \beta_{t}^{l}(\mathcal{D}_2^{k,\mathcal{P}}) + max\{ \beta_{t}^{l}(\mathcal{D}_1^{k,\mathcal{P}}), \beta_{t}^{l}(\mathcal{D}_{\{1,2\}}^{k,\mathcal{P}})\},\\
	(ii) ~ \beta^{l}(\mathcal{D}^k,\mathcal{P}) = \beta^{l}(\mathcal{D}_2^{k,\mathcal{P}}) + max\{ \beta^{l}(\mathcal{D}_1^{k,\mathcal{P}}), \beta^{l}(\mathcal{D}_{\{1,2\}}^{k,\mathcal{P}})\}.
	\end{gather*} 
	\label{thmtwocl}
\end{thm}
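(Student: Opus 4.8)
The plan is to sandwich $t\,\beta_t^l(\mathcal{D}^k,\mathcal{P})$ between matching lower and upper bounds, divide by $t$ to obtain $(i)$, and let $t\to\infty$ to obtain $(ii)$, exactly as in Theorems \ref{thmonel} and \ref{thmtwoAl}. The lower bound needs no new work: Lemma \ref{lemtwoc} already gives $\beta_t^l(\mathcal{D}^k,\mathcal{P}) \ge \beta_t^l(\mathcal{D}_2^{k,\mathcal{P}}) + \beta_t^l(\mathcal{D}_{\{1,2\}}^{k,\mathcal{P}})$, and Lemma \ref{lemlowbnd12}, applied to $\mathcal{D}^k$, gives $\beta_t^l(\mathcal{D}^k,\mathcal{P}) \ge \beta_t^l(\mathcal{D}_1^{k,\mathcal{P}}) + \beta_t^l(\mathcal{D}_2^{k,\mathcal{P}})$. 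Since $\beta_t^l(\mathcal{D}^k,\mathcal{P})$ dominates both quantities it dominates their maximum, and that maximum factors as $\beta_t^l(\mathcal{D}_2^{k,\mathcal{P}}) + \max\{\beta_t^l(\mathcal{D}_1^{k,\mathcal{P}}),\beta_t^l(\mathcal{D}_{\{1,2\}}^{k,\mathcal{P}})\}$, which is precisely the claimed lower bound.

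The real content is the matching achievability, which I would build from optimal single-sender linear codes for the three sub-problems while exploiting the fully-participated (bidirectional) interaction between $\mathcal{D}_1^{k,\mathcal{P}}$ and $\mathcal{D}_{\{1,2\}}^{k,\mathcal{P}}$ that characterises Case II-C. Take optimal linear codes $\mathcal{C}_1,\mathcal{C}_2,\mathcal{C}_{\{1,2\}}$ of lengths $t\beta_t^l(\mathcal{D}_1^{k,\mathcal{P}})$, $t\beta_t^l(\mathcal{D}_2^{k,\mathcal{P}})$, $t\beta_t^l(\mathcal{D}_{\{1,2\}}^{k,\mathcal{P}})$ respectively. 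Let $S_2$ broadcast $\mathcal{C}_2$ (legitimate since it holds $\mathcal{P}_2$) and let $S_1$ broadcast $\mathcal{C}_1 \oplus \mathcal{C}_{\{1,2\}}$ (legitimate since $S_1$ holds both $\mathcal{P}_1$ and $\mathcal{P}_{\{1,2\}}$). By the zero-padding convention for $\oplus$, the total transmitted length is $t\beta_t^l(\mathcal{D}_2^{k,\mathcal{P}}) + \max\{t\beta_t^l(\mathcal{D}_1^{k,\mathcal{P}}),\,t\beta_t^l(\mathcal{D}_{\{1,2\}}^{k,\mathcal{P}})\}$, i.e.\ exactly $t$ times the target. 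For decoding, receivers in $\mathcal{D}_2^{k,\mathcal{P}}$ recover their demands from $\mathcal{C}_2$ alone using their within-$\mathcal{D}_2$ side-information; a receiver in $\mathcal{D}_1^{k,\mathcal{P}}$ knows every message of $\mathcal{P}_{\{1,2\}}$ by the interaction $\mathcal{D}_{\{1,2\}}^{k,\mathcal{P}}\to\mathcal{D}_1^{k,\mathcal{P}}$, so it recomputes $\mathcal{C}_{\{1,2\}}$, XORs it out of $\mathcal{C}_1 \oplus \mathcal{C}_{\{1,2\}}$, recovers $\mathcal{C}_1$, and decodes; symmetrically, a receiver in $\mathcal{D}_{\{1,2\}}^{k,\mathcal{P}}$ knows all of $\mathcal{P}_1$ by $\mathcal{D}_1^{k,\mathcal{P}}\to\mathcal{D}_{\{1,2\}}^{k,\mathcal{P}}$, recovers $\mathcal{C}_{\{1,2\}}$, and decodes.

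I expect the achievability to be the main obstacle. One must argue that both directions of the fully-participated interaction between $\mathcal{D}_1$ and $\mathcal{D}_{\{1,2\}}$ are exactly what allow a single XORed transmission to be decoded simultaneously by both receiver groups, and one must track the zero-padding carefully (via the $\mathcal{C}[a:b]$ convention) so that each receiver cancels interference of the correct length before decoding. The lower bound, in contrast, is immediate from the two cited lemmas, and $(ii)$ follows by passing to the limit $t\to\infty$ in the definition of $\beta^l$.
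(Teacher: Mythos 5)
Your proposal is correct and takes essentially the paper's route: the converse is obtained, exactly as in the paper, by combining Lemmas \ref{lemlowbnd12} and \ref{lemtwoc}, and your explicit achievability scheme ($S_2$ sends $\mathcal{C}_2$, $S_1$ sends $\mathcal{C}_1 \oplus \mathcal{C}_{\{1,2\}}$ with zero-padding) is precisely the construction of Theorem 8 of \cite{CTLO} that the paper invokes after replacing optimal codes with optimal linear codes --- it is also the code exhibited in the paper's Example 3. One slip worth fixing: under the paper's convention ($(v_i,v_j)\in\mathcal{E}(\mathcal{D})$ means receiver $i$ knows ${\bf{x}}_j$), a receiver in $\mathcal{D}_1^{k,\mathcal{P}}$ knows all of $\mathcal{P}_{\{1,2\}}$ because the interaction $\mathcal{D}_1^{k,\mathcal{P}} \rightarrow \mathcal{D}_{\{1,2\}}^{k,\mathcal{P}}$ is fully participated, and a receiver in $\mathcal{D}_{\{1,2\}}^{k,\mathcal{P}}$ knows all of $\mathcal{P}_1$ because $\mathcal{D}_{\{1,2\}}^{k,\mathcal{P}} \rightarrow \mathcal{D}_1^{k,\mathcal{P}}$ is fully participated --- you cite the two directions the wrong way round, but since Case II-C guarantees both directions exist and are fully participated, your decoding argument is unaffected.
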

\begin{proof}
	The achievability part of the proof follows from the proof of Theorem 8 given in \cite{CTLO}, by replacing all the optimal broadcast rates and optimal codes present in the proof with the optimal linear broadcast rates and optimal linear codes respectively. Hence, we have
	\begin{equation}
	\beta_{t}^{l}(\mathcal{D}^k,\mathcal{P}) \leq  \beta_{t}^{l}(\mathcal{D}_2^{k,\mathcal{P}}) + max\{ \beta_{t}^{l}(\mathcal{D}_1^{k,\mathcal{P}}), \beta_{t}^{l}(\mathcal{D}_{\{1,2\}}^{k,\mathcal{P}})\}.
	\end{equation}
	A mathcing lower bound is obtained by combining the results of Lemmas \ref{lemlowbnd12} and \ref{lemtwoc}. Hence, we have
	\begin{equation}
	\beta_{t}^{l}(\mathcal{D}^k,\mathcal{P}) \geq  \beta_{t}^{l}(\mathcal{D}_2^{k,\mathcal{P}}) + max\{ \beta_{t}^{l}(\mathcal{D}_1^{k,\mathcal{P}}), \beta_{t}^{l}(\mathcal{D}_{\{1,2\}}^{k,\mathcal{P}})\}.
	\end{equation}
	Thus, we obtain $(i)$ in the statement of the theorem. 
	\par Taking the limit as $t\to\infty$, in the definition of $\beta^{l}(\mathcal{D}^k,\mathcal{P})$, we obtain (ii) in the statement of the theorem.
\end{proof} 

We observe from Figure \ref{interenum} that the digraph $\mathcal{H}_i$, $i \in \{34,35,\cdots,45\}$, is obtained from $\mathcal{H}_j$, $j \in \{46,47,\cdots,57\}$, by interchanging the labels of vertices $1$ and $2$ of $\mathcal{H}_j$ respectively. For example,  $\mathcal{H}_{34}$ is obtained from $\mathcal{H}_{46}$. Hence, the results for Case II-D are obtained from those of Case II-C, by interchanging the labels $1$ and $2$ of the sub-digraphs in the expressions for the optimal linear broadcast rates. Thus, we state the corresponding theorem for completeness.

\begin{thm}[Case II-D]
	For any TUICP with the side-information digraph $\mathcal{D}^k$, $k \in \{46,47,\cdots,57\}$, having fully-participated  interactions  between its sub-digraphs $\mathcal{D}_1^{k,\mathcal{P}}$, $\mathcal{D}_2^{k,\mathcal{P}}$, and $\mathcal{D}_{\{1,2\}}^{k,\mathcal{P}}$, for any $\mathcal{P}$, and $t$-bit messages for any $t \geq 1$, we have
	\begin{gather*}
	(i) ~ \beta_{t}^{l}(\mathcal{D}^k,\mathcal{P}) = \beta_{t}^{l}(\mathcal{D}_1^{k,\mathcal{P}}) + max\{ \beta_{t}^{l}(\mathcal{D}_2^{k,\mathcal{P}}), \beta_{t}^{l}(\mathcal{D}_{\{1,2\}}^{k,\mathcal{P}})\},\\
	(ii) ~ \beta^{l}(\mathcal{D}^k,\mathcal{P}) = \beta^{l}(\mathcal{D}_1^{k,\mathcal{P}}) + max\{ \beta^{l}(\mathcal{D}_2^{k,\mathcal{P}}), \beta^{l}(\mathcal{D}_{\{1,2\}}^{k,\mathcal{P}})\}.
	\end{gather*}
	\label{thmtwodl}
\end{thm}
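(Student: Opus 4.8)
The plan is to obtain Theorem~\ref{thmtwodl} as a direct consequence of Theorem~\ref{thmtwocl} (Case II-C) via a relabelling argument, exactly as the paragraph preceding the statement suggests, so that no independent proof is needed. First I would make precise the symmetry observed in Figure~\ref{interenum}: for each $k \in \{46,47,\cdots,57\}$ there is a corresponding $j \in \{34,35,\cdots,45\}$ such that $\mathcal{H}_k$ is obtained from $\mathcal{H}_j$ by swapping the labels of vertices $1$ and $2$ (e.g.\ $\mathcal{H}_{46}$ from $\mathcal{H}_{34}$). The key point is that the roles of senders $S_1$ and $S_2$ in the problem formulation are entirely symmetric: nothing in the definition of a two-sender index code, the encoding functions $\mathbb{E}_s$, the broadcast rate $p_t^l = (p_1+p_2)/t$, or the optimal rates $\beta_t^l,\beta^l$ distinguishes sender $1$ from sender $2$.

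The core step is to set up the relabelling map explicitly. Given a TUICP $\mathcal{I}(\mathcal{D}^k,\mathcal{P})$ with $k \in \{46,\dots,57\}$ and $\mathcal{P}=(\mathcal{P}_1,\mathcal{P}_2,\mathcal{P}_{\{1,2\}})$, I would define the ``swapped'' problem $\mathcal{I}(\tilde{\mathcal{D}}^j,\tilde{\mathcal{P}})$ with $\tilde{\mathcal{P}}=(\mathcal{P}_2,\mathcal{P}_1,\mathcal{P}_{\{1,2\}})$, i.e.\ interchanging the private message sets of the two senders while leaving the shared set and the side-information digraph structure intact. Under this swap the sub-digraph $\mathcal{D}_1^{k,\mathcal{P}}$ becomes $\tilde{\mathcal{D}}_2^{j,\tilde{\mathcal{P}}}$, $\mathcal{D}_2^{k,\mathcal{P}}$ becomes $\tilde{\mathcal{D}}_1^{j,\tilde{\mathcal{P}}}$, and $\mathcal{D}_{\{1,2\}}^{k,\mathcal{P}}$ is unchanged; moreover, because the swap sends each interaction $\mathcal{D}_S \to \mathcal{D}_{S'}$ to the corresponding interaction between the relabelled sub-digraphs, it preserves the fully-participated property and maps the interaction digraph $\mathcal{H}_k$ to $\mathcal{H}_j$. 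I would verify that $\beta_t^l(\mathcal{D}^k,\mathcal{P}) = \beta_t^l(\tilde{\mathcal{D}}^j,\tilde{\mathcal{P}})$ by noting that any linear code for one problem yields a linear code of identical rate for the other simply by interchanging the two sub-codewords $\mathcal{C}_1,\mathcal{C}_2$, so the achievable rate sets coincide.

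Applying Theorem~\ref{thmtwocl} to $\mathcal{I}(\tilde{\mathcal{D}}^j,\tilde{\mathcal{P}})$ gives
\[
\beta_t^l(\tilde{\mathcal{D}}^j,\tilde{\mathcal{P}}) = \beta_t^l(\tilde{\mathcal{D}}_2^{j,\tilde{\mathcal{P}}}) + \max\{\beta_t^l(\tilde{\mathcal{D}}_1^{j,\tilde{\mathcal{P}}}),\beta_t^l(\tilde{\mathcal{D}}_{\{1,2\}}^{j,\tilde{\mathcal{P}}})\}.
\]
Substituting the correspondences $\tilde{\mathcal{D}}_2^{j,\tilde{\mathcal{P}}}=\mathcal{D}_1^{k,\mathcal{P}}$, $\tilde{\mathcal{D}}_1^{j,\tilde{\mathcal{P}}}=\mathcal{D}_2^{k,\mathcal{P}}$, and $\tilde{\mathcal{D}}_{\{1,2\}}^{j,\tilde{\mathcal{P}}}=\mathcal{D}_{\{1,2\}}^{k,\mathcal{P}}$ yields statement $(i)$ for $\mathcal{D}^k$, and taking $t\to\infty$ gives $(ii)$. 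The main obstacle, such as it is, is purely bookkeeping: I would need to confirm that the index pairing $k \leftrightarrow j$ in Figure~\ref{interenum} is a genuine bijection between $\{46,\dots,57\}$ and $\{34,\dots,45\}$ realized by the $1\leftrightarrow 2$ swap, and that the swap does not inadvertently move a problem between the Case~II subcases. Since the subcase classification in \cite{CTLO} is itself defined by the isomorphism type of $\mathcal{H}$ under this very relabelling, the swap preserves subcase membership, so the argument closes cleanly and no genuinely new estimate is required.
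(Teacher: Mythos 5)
Your proposal is correct and follows essentially the same route as the paper: the authors give no independent proof of Theorem~\ref{thmtwodl}, but derive it from Theorem~\ref{thmtwocl} by precisely the $1\leftrightarrow 2$ relabelling symmetry of the interaction digraphs in Figure~\ref{interenum}. Your write-up merely makes explicit (via the swapped problem $\mathcal{I}(\tilde{\mathcal{D}}^j,\tilde{\mathcal{P}})$ and the rate-preserving exchange of sub-codewords) what the paper leaves implicit, which is a sound elaboration rather than a different argument.
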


We illustrate Theorem \ref{thmtwocl} using an example.
\begin{figure}[!htbp]
	\begin{center}
		\begin{tikzpicture}
		[place/.style={circle,draw=black!100,thick}]
		\node at (-5,-1.5) [place] (2c) {2};
		\node at (-3.5,-1.5) [place] (3c) {3};
		\node at (-4.3,0) [place] (1c) {1};
		\node at (-2.5,.5) [place] (6c) {6};
		\node at (-1.5,-1.5) [place] (4c) {4};
		\node at (-.5,-.7) [place] (5c) {5};
		\node at (.5,-1.5) [place] (1h) {1}; 	 	
		\node at (2.5,-1.5) [place] (2h) {2};
		\node at (1.5,0) [place] (3h) {12};	 	
		\draw (-2.5,-2.2) node {$\mathcal{D}$};
		\draw (1.5,-2.2) node {$\mathcal{H}$};
		\draw [thick,->] (1c) to [out=-135,in=85] (2c);
		\draw [thick,<->] (3h) to [out=-135,in=65] (1h);
		\draw [thick,->] (2c) to [out=0,in=180] (3c);
		\draw [thick,->] (3c) to [out=105,in=-45] (1c);
		\draw [thick,->] (4c) to [out=50,in=-150] (5c);
		\draw [thick,->] (5c) to [out=-120,in=20] (4c);
		\draw [thick,->] (4c) to [out=120,in=-60] (6c);
		\draw [thick,->] (5c) to [out=140,in=-30] (6c);
		\draw [thick,->] (6c) to [out=170,in=50] (1c);
		\draw [thick,->] (6c) to [out=-155,in=50] (2c);
		\draw [thick,->] (6c) to [out=-120,in=70] (3c);
		\draw [thick,->] (1c) to [out=30,in=-175] (6c);
		\draw [thick,->] (2c) to [out=30,in=-140] (6c);
		\draw [thick,->] (3c) to [out=50,in=-100] (6c);       		 	           		  
		\draw [thick,->] (2h) to [out=115,in=-45] (3h);	       		 	    
		\end{tikzpicture}
		\caption{Example of a two-sender problem belonging to Case II-C.}
		\label{figexamp3}
	\end{center}
\end{figure}
\begin{exmp}
	Consider a TUICP with $m=6$. Let $S_1$ and $S_2$ have $\mathcal{M}_1=\{{\bf{x}}_1,{\bf{x}}_2,{\bf{x}}_3,{\bf{x}}_6\}$ and  $\mathcal{M}_2=\{{\bf{x}}_4,{\bf{x}}_5,{\bf{x}}_6\}$ respectively. Hence, $\mathcal{P}_1=\{{\bf{x}}_1,{\bf{x}}_2,{\bf{x}}_3\}$, $\mathcal{P}_2=\{{\bf{x}}_4,{\bf{x}}_5\}$, and $\mathcal{P}_{\{1,2\}}=\{{\bf{x}}_6\}$. The side-information of all the  receivers are given as follows : $\mathcal{K}_1=\{{\bf{x}}_2,{\bf{x}}_6\},\mathcal{K}_2=\{{\bf{x}}_3,{\bf{x}}_6\},\mathcal{K}_3=\{{\bf{x}}_1,{\bf{x}}_6\},\mathcal{K}_4=\{{\bf{x}}_5,{\bf{x}}_6\},\mathcal{K}_5=\{{\bf{x}}_4,{\bf{x}}_6\},\mathcal{K}_6=\{{\bf{x}}_1,{\bf{x}}_2,{\bf{x}}_3\}$. The side-information digraph and the interaction digraph are shown in Figure \ref{figexamp3}. It is easy to verify that the associated interaction digraph is $\mathcal{H}_{37}$. Note that all the interactions are fully-participated interactions. We also know that $\mathcal{D}_1^{k,\mathcal{P}}$ and $\mathcal{D}_2^{k,\mathcal{P}}$ are cycles on vertex sets  $\{{\bf{x}}_1,{\bf{x}}_2,{\bf{x}}_3\}$ and $\{{\bf{x}}_4,{\bf{x}}_5\}$ respectively. Hence, we know using the results of \cite{SUOH}, that for any $t \geq 1$, $\beta_t(\mathcal{D}_1^{k,\mathcal{P}})=2$, $\beta_t(\mathcal{D}_2^{k,\mathcal{P}})=1$, and $\beta_t(\mathcal{D}_{\{1,2\}}^{k,\mathcal{P}})=1$. Hence, according to Theorem \ref{thmtwocl}, we have $\beta_t(\mathcal{D}^k,\mathcal{P})=1+max\{2,1\}=3$. 
	
	We provide the code for $t=1$. $S_1$ transmits ${\bf{x}}_1+{\bf{x}}_2+{\bf{x}}_6$ and ${\bf{x}}_2+{\bf{x}}_3$. $S_2$ transmits ${\bf{x}}_4+{\bf{x}}_5$. Receiver $1$ decodes ${\bf{x}}_1$ using ${\bf{x}}_1+{\bf{x}}_2+{\bf{x}}_6$ and its side-information ${\bf{x}}_2$ and ${\bf{x}}_6$. Receiver $4$ decodes ${\bf{x}}_4$ using ${\bf{x}}_4+{\bf{x}}_5$ and its side-information ${\bf{x}}_5$.  Receiver $6$ decodes ${\bf{x}}_6$ using ${\bf{x}}_1+{\bf{x}}_2+{\bf{x}}_6$ and its side-information ${\bf{x}}_1$ and ${\bf{x}}_2$. Simlarly, it can be verified that all the receivers are able to decode their demanded messages.
\end{exmp}  

From the proof of  achievability of the optimal linear broadcast rate with $t$-bit messages for any finite $t$, for a two-sender problem belonging to either Case II-C or Case II-D, we observe the following:
\begin{itemize}
	\item The addition or deletion of directed edges contributing to the existing interactions between $\mathcal{D}_1^{k,\mathcal{P}}$ and $\mathcal{D}_2^{k,\mathcal{P}}$, and to the uni-directional interaction between $\mathcal{D}_2^{k,\mathcal{P}}$ and $\mathcal{D}_{\{1,2\}}^{k,\mathcal{P}}$, do not affect the optimal linear broadcast rate of any two-sender problem belonging to Case II-C, as long as the interactions  $\mathcal{D}_1^{k,\mathcal{P}} \rightarrow \mathcal{D}_{\{1,2\}}^{k,\mathcal{P}}$ and $\mathcal{D}_{\{1,2\}}^{k,\mathcal{P}} \rightarrow \mathcal{D}_{1}^{k,\mathcal{P}}$ are fully-participated. 
	\item The addition or deletion of directed edges contributing to the existing interactions between $\mathcal{D}_1^{k,\mathcal{P}}$ and $\mathcal{D}_2^{k,\mathcal{P}}$, and to the  uni-directional interaction between $\mathcal{D}_1^{k,\mathcal{P}}$ and $\mathcal{D}_{\{1,2\}}^{k,\mathcal{P}}$, do not affect the optimal linear broadcast rate of any two-sender problem belonging to Case II-D, as long as the interactions  $\mathcal{D}_2^{k,\mathcal{P}} \rightarrow \mathcal{D}_{\{1,2\}}^{k,\mathcal{P}}$ and $\mathcal{D}_{\{1,2\}}^{k,\mathcal{P}} \rightarrow \mathcal{D}_{2}^{k,\mathcal{P}}$ are fully-participated. 
\end{itemize}

\subsection{CASE II-E}
In the following theorem, the results of Case II-E are derived using the results of Cases II-C and II-D.
\begin{thm}[Case II-E]
	For any TUICP with the  side-information digraph $\mathcal{D}^k$, $k \in \{58,59,\cdots,64\}$, having fully-participated interactions  between its sub-digraphs $\mathcal{D}_1^{k,\mathcal{P}}$, $\mathcal{D}_2^{k,\mathcal{P}}$, and $\mathcal{D}_{\{1,2\}}^{k,\mathcal{P}}$, for any $\mathcal{P}$, and $t$-bit messages for any $t \geq 1$, we have 
	\begin{gather*}
	(i) ~ \beta_{t}^{l}(\mathcal{D}^k,\mathcal{P}) =  max\{ \beta_{t}^{l}(\mathcal{D}_1^{k,\mathcal{P}}) +  \beta_{t}^{l}(\mathcal{D}_{\{1,2\}}^{k,\mathcal{P}}),  \beta_{t}^{l}(\mathcal{D}_2^{k,\mathcal{P}}) + \\  \beta_{t}^{l}(\mathcal{D}_{\{1,2\}}^{k,\mathcal{P}}), \beta_{t}^{l}(\mathcal{D}_1^{k,\mathcal{P}}) +  \beta_{t}^{l}(\mathcal{D}_2^{k,\mathcal{P}}) \}, \\
	(ii) ~ \beta^{l}(\mathcal{D}^k,\mathcal{P}) =  max\{ \beta^{l}(\mathcal{D}_1^{k,\mathcal{P}}) +  \beta^{l}(\mathcal{D}_{\{1,2\}}^{k,\mathcal{P}}), \beta^{l}(\mathcal{D}_2^{k,\mathcal{P}}) + \\ \beta^{l}(\mathcal{D}_{\{1,2\}}^{k,\mathcal{P}}), \beta^{l}(\mathcal{D}_1^{k,\mathcal{P}}) +  \beta^{l}(\mathcal{D}_2^{k,\mathcal{P}}) \}.
	\end{gather*}
	\label{thmtwoel}
\end{thm}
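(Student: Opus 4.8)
The plan is to prove a matching converse and achievability for part $(i)$ and then pass to the limit for part $(ii)$, organised around the algebraic identity
\[
\max\{\beta_1+\beta_2,\ \beta_1+\beta_{12},\ \beta_2+\beta_{12}\}=\max\{R_C,R_D\},
\]
where, writing $\beta_S:=\beta_t^{l}(\mathcal{D}_S^{k,\mathcal{P}})$ for $S\in\{1,2,\{1,2\}\}$ and $\beta_{12}:=\beta_{\{1,2\}}$, the quantities $R_C:=\beta_2+\max\{\beta_1,\beta_{12}\}$ and $R_D:=\beta_1+\max\{\beta_2,\beta_{12}\}$ are the Case II-C and Case II-D optimal rates supplied by Theorems \ref{thmtwocl} and \ref{thmtwodl}. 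Thus it suffices to establish $\beta_t^{l}(\mathcal{D}^k,\mathcal{P})\ge\max\{R_C,R_D\}$ and $\beta_t^{l}(\mathcal{D}^k,\mathcal{P})\le\max\{R_C,R_D\}$; assertion $(ii)$ then follows by letting $t\to\infty$ exactly as in the previous theorems.

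For the converse I would add edges to $\mathcal{D}^k$. From Figure \ref{interenum}, every Case II-E interaction digraph turns into a Case II-C interaction digraph when all the (fully-participated) edges completing the $2$-cycle between $\mathcal{D}_1^{k,\mathcal{P}}$ and $\mathcal{D}_{\{1,2\}}^{k,\mathcal{P}}$ are inserted, and into a Case II-D interaction digraph when instead the $2$-cycle between $\mathcal{D}_2^{k,\mathcal{P}}$ and $\mathcal{D}_{\{1,2\}}^{k,\mathcal{P}}$ is completed. In both constructions the three induced sub-digraphs are untouched, so $\beta_1,\beta_2,\beta_{12}$ are unchanged and all interactions remain fully-participated. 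Calling the augmented digraphs $\mathcal{D}_C$ and $\mathcal{D}_D$, Lemma \ref{addedges} gives $\beta_t^{l}(\mathcal{D}^k,\mathcal{P})\ge\beta_t^{l}(\mathcal{D}_C,\mathcal{P})$ and $\beta_t^{l}(\mathcal{D}^k,\mathcal{P})\ge\beta_t^{l}(\mathcal{D}_D,\mathcal{P})$, while Theorems \ref{thmtwocl} and \ref{thmtwodl} evaluate these right-hand sides as $R_C$ and $R_D$. Hence $\beta_t^{l}(\mathcal{D}^k,\mathcal{P})\ge\max\{R_C,R_D\}$, the desired lower bound.

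For achievability I would construct an explicit linear code of length $t\cdot\max\{R_C,R_D\}$ directly, since a code for the sparser $\mathcal{D}^k$ cannot simply be inherited from the denser $\mathcal{D}_C$ or $\mathcal{D}_D$ (those schemes rely on side-information that the $2$-cycle would provide but $\mathcal{D}^k$ does not). The idea is to run the Case II-C / Case II-D achievability scheme after first \emph{manufacturing} the missing side-information using the directed cycle of fully-participated interactions: one sender, say $S_1$, transmits the optimal single-sender code for the sub-problem it serves combined with the $\mathcal{D}_{\{1,2\}}^{k,\mathcal{P}}$ symbols, so that the receivers of the remaining sub-digraph, who already know $S_1$'s messages along the cycle, can peel these off and thereby acquire the $\mathcal{D}_{\{1,2\}}^{k,\mathcal{P}}$ symbols as virtual side-information; the other sender then completes the overlap exactly as in Case II-C or Case II-D. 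Decodability of every receiver would follow by the same cancellation arguments used in Theorems \ref{thmtwocl} and \ref{thmtwodl}.

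The step I expect to be the main obstacle is this achievability construction, and in particular controlling its length so that it equals $t\cdot\max\{R_C,R_D\}$ rather than the naive $t\cdot(\max\{\beta_1,\beta_{12}\}+\max\{\beta_2,\beta_{12}\})$. The delicate regime is when $\beta_{12}$ is the largest of the three rates: there the transmission that leaks $\mathcal{D}_{\{1,2\}}^{k,\mathcal{P}}$ to the third sub-digraph and the transmission that actually serves the $\mathcal{D}_{\{1,2\}}^{k,\mathcal{P}}$ receivers must be \emph{shared} rather than paid for twice, so the construction requires a short case analysis on which of $\beta_1,\beta_2,\beta_{12}$ dominates. Once the length is checked to be $t\cdot\max\{\beta_1+\beta_2,\ \beta_1+\beta_{12},\ \beta_2+\beta_{12}\}$, part $(i)$ follows by matching the converse, and part $(ii)$ follows by taking $t\to\infty$.
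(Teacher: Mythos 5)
Your converse is exactly the paper's argument: the paper likewise completes the $2$-cycle between $\mathcal{D}_1^{k,\mathcal{P}}$ and $\mathcal{D}_{\{1,2\}}^{k,\mathcal{P}}$ to reach a Case II-C digraph $\mathcal{D}^{r'}$, $r'\in\{44,45\}$, and the $2$-cycle between $\mathcal{D}_2^{k,\mathcal{P}}$ and $\mathcal{D}_{\{1,2\}}^{k,\mathcal{P}}$ to reach a Case II-D digraph $\mathcal{D}^{r''}$, $r''\in\{56,57\}$, then applies Lemma \ref{addedges} together with Theorems \ref{thmtwocl} and \ref{thmtwodl}; your identity $\max\{R_C,R_D\}=\max\{\beta_1+\beta_2,\,\beta_1+\beta_{12},\,\beta_2+\beta_{12}\}$ (writing $\beta_S=\beta_t^{l}(\mathcal{D}_S^{k,\mathcal{P}})$) is exactly how the paper assembles its lower bound. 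That half of your plan is complete and correct.

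The genuine gap is the achievability step you yourself flag, and the scheme you sketch would not close it as stated. In every Case II-E digraph at least one of the vertices $1,2$ of $\mathcal{H}$ has no edge into the vertex $\{1,2\}$ (otherwise a $2$-cycle with $\{1,2\}$ would exist and the problem would be Case II-B/C/D), so the receivers of that sub-digraph know nothing of $\mathcal{P}_{\{1,2\}}$. Consequently, if $S_1$ embeds the $\mathcal{C}_{\{1,2\}}$ symbols into its codeword and the other sender ``completes the overlap as in Case II-C/II-D'' by sending $\mathcal{C}_2$ alone, those receivers cannot cancel the embedded $\mathcal{C}_{\{1,2\}}$ interference; and if all of $\mathcal{C}_{\{1,2\}}$ is embedded into both senders' codewords, the length is the naive $t(\max\{\beta_1,\beta_{12}\}+\max\{\beta_2,\beta_{12}\})$. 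The missing idea is the paper's \emph{symmetric prefix sharing}: assuming w.l.o.g.\ $\beta_2\le\min\{\beta_1,\beta_{12}\}$, sender $S_1$ transmits $\mathcal{C}_1\oplus\mathcal{C}_{\{1,2\}}[1:t\beta_2]$, sender $S_2$ transmits $\mathcal{C}_2\oplus\mathcal{C}_{\{1,2\}}[1:t\beta_2]$, and either sender transmits $\mathcal{C}_{\{1,2\}}[t\beta_2+1:t\beta_{12}]$, for total length $t(\beta_1+\beta_2+(\beta_{12}-\beta_2))=t(\beta_1+\beta_{12})$, which equals the target maximum in this regime. The point of XORing the \emph{same} prefix on both sides is that receivers in $\mathcal{D}_1^{k,\mathcal{P}}\cup\mathcal{D}_2^{k,\mathcal{P}}$ never need $\mathcal{P}_{\{1,2\}}$: XORing the two sub-codewords cancels the common prefix and leaves $\mathcal{C}_1\oplus\mathcal{C}_2$, which they decode from the side-information guaranteed by the fully-participated cycle (those that instead know all of $\mathcal{P}_{\{1,2\}}$ can peel the prefix directly), while receivers in $\mathcal{D}_{\{1,2\}}^{k,\mathcal{P}}$ peel off whichever of $\mathcal{C}_1,\mathcal{C}_2$ they know in full to recover the prefix and read the remainder in the clear. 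Note also that the paper's case split is on the \emph{smallest} of the three rates, not the largest: your ``delicate regime'' ($\beta_{12}$ largest) is precisely what the above code handles, and the remaining case $\beta_{12}\le\min\{\beta_1,\beta_2\}$ is delegated to the proof of Theorem 9 of \cite{CTLO} with optimal codes replaced by optimal linear codes. With this construction inserted, your plan coincides with the paper's proof.
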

\begin{proof}
	We provide a lower bound using the results of Cases II-C and II-D, and then provide a matching upper bound using a code-construction for the two-sender problem, by utilising optimal linear single-sender codes of the sub-problems. 
	
	Given any side-information digraph $\mathcal{D}^k$, such that $k \in \{58,59,\cdots,64\}$, with fully-participated interactions  between its sub-digraphs $\mathcal{D}_{1}^{k,\mathcal{P}}$, $\mathcal{D}_{2}^{k,\mathcal{P}}$, and $\mathcal{D}_{\{1,2\}}^{k,\mathcal{P}}$, we obtain $(i)$ one of the side-information digraphs $\mathcal{D}^{r'}, r' \in \{44,45\}$, and $(ii)$ one of the side-information digraphs $\mathcal{D}^{r''}, r'' \in \{56,57\}$, with the same $\mathcal{D}_{1}^{k,\mathcal{P}}$, $\mathcal{D}_{2}^{k,\mathcal{P}}$, and $\mathcal{D}_{\{1,2\}}^{k,\mathcal{P}}$ having fully-participated interactions, by adding appropriate edges between the sub-digraphs of $\mathcal{D}^k$. From Lemma \ref{addedges}, we have
	\begin{equation}
	\beta_{t}^l(\mathcal{D}^k,\mathcal{P}) \geq  \beta_{t}^l(\mathcal{D}^{r'},\mathcal{P}),
	\label{case2e_lbnd1}
	\end{equation}
	\begin{equation}
	\beta_{t}^l(\mathcal{D}^k,\mathcal{P}) \geq  \beta_{t}^l(\mathcal{D}^{r''},\mathcal{P})
	\label{case2e_lbnd2}.
	\end{equation}
	\par Combining the results of Theorem \ref{thmtwocl} and Theorem \ref{thmtwodl} using (\ref{case2e_lbnd1}) and (\ref{case2e_lbnd2}), we have
	\begin{equation}
	\begin{split}
	& \beta_{t}^l(\mathcal{D}^k,\mathcal{P}) \geq max\{\beta_{t}^l(\mathcal{D}_1^{k,\mathcal{P}})+\beta_{t}^l(\mathcal{D}_{\{1,2\}}^{k,\mathcal{P}}),\\ & \beta_{t}^l(\mathcal{D}_2^{k,\mathcal{P}})+\beta_{t}^l(\mathcal{D}_{\{1,2\}}^{k,\mathcal{P}}),\beta_{t}^l(\mathcal{D}_1^{k,\mathcal{P}})+\beta_{t}^l(\mathcal{D}_2^{k,\mathcal{P}})\}.
	\end{split}
	\end{equation}
	\par We provide an upper bound by giving a code construction. When $\beta_{t}^l(\mathcal{D}_{\{1,2\}}^{k,\mathcal{P}}) \leq min\{\beta_{t}^l(\mathcal{D}_1^{k,\mathcal{P}}),\beta_{t}^l(\mathcal{D}_2^{k,\mathcal{P}})\}$, the achievability follows from the proof of Theorem $9$ in \cite{CTLO}, by replacing all the optimal broadcast rates and optimal codes present in the proof with optimal linear broadcast rates and optimal linear codes respectively.
	\par Without loss of generality, we can assume that we have $\beta_{t}^l(\mathcal{D}_2^{k,\mathcal{P}}) \leq min\{\beta_{t}^l(\mathcal{D}_1^{k,\mathcal{P}}),\beta_{t}^l(\mathcal{D}_{\{1,2\}}^{k,\mathcal{P}})\}$. The case with $\beta_{t}^l(\mathcal{D}_1^{k,\mathcal{P}}) \leq min\{\beta_{t}^l(\mathcal{D}_2^{k,\mathcal{P}}),\beta_{t}^l(\mathcal{D}_{\{1,2\}}^{k,\mathcal{P}})\}$ can be proved similarly. Let $\mathcal{C}_S$ be a linear code with the optimal linear broadcast rate $\beta_{t}^l(\mathcal{D}_S^{k,\mathcal{P}})$ for the single-sender unicast ICP described by $\mathcal{D}_S^{k,\mathcal{P}}$. Our code for the original TUICP $\mathcal{I}(\mathcal{D}^k,\mathcal{P})$ is given as follows:
	\[  \mathcal{C}_1 \oplus \mathcal{C}_{\{1,2\}}[1:t\beta_{t}^l(\mathcal{D}_2^{k,\mathcal{P}})]  ~ ~ ~\mbox{sent by $S_1$}, \]
	\[  \mathcal{C}_2 \oplus \mathcal{C}_{\{1,2\}}[1:t\beta_{t}^l(\mathcal{D}_2^{k,\mathcal{P}})]  ~ ~ ~ \mbox{sent by $S_2$}, \]
	and 
	$\mathcal{C}_{\{1,2\}}[1+t\beta_{t}^l(\mathcal{D}_2^{k,\mathcal{P}}):t\beta_{t}^l(\mathcal{D}_{\{1,2\}}^{k,\mathcal{P}})]$ sent by any one of $S_1$ or $S_2$.    
	
	The overall length of the two-sender code is given by
	\begin{gather*} t(\beta_{t}^l(\mathcal{D}_1^{k,\mathcal{P}})+\beta_{t}^l(\mathcal{D}_2^{k,\mathcal{P}})+(\beta_{t}^l(\mathcal{D}_{\{1,2\}}^{k,\mathcal{P}})-\beta_{t}^l(\mathcal{D}_2^{k,\mathcal{P}}))) \\ =t(\beta_{t}^l(\mathcal{D}_1^{k,\mathcal{P}})+\beta_{t}^l(\mathcal{D}_{\{1,2\}}^{k,\mathcal{P}})),
	\end{gather*}
	with linear broadcast rate $\beta_{t}^l(\mathcal{D}_1^{k,\mathcal{P}})+\beta_{t}^l(\mathcal{D}_{\{1,2\}}^{k,\mathcal{P}})$. 
	\par We provide the decoding procedure for receivers in the side-information digraphs $\mathcal{D}^k$ with $k \in \{58,59,\cdots,62\}$. The decoding procedure for those in the side-information digraphs $\mathcal{D}^k$ with $k \in \{63,64\}$ is similar. Receivers belonging to $\mathcal{D}_1^{k,\mathcal{P}}$ and $\mathcal{D}_2^{k,\mathcal{P}}$ recover their demanded messages using
	\begin{gather*}
	(\mathcal{C}_2 \oplus \mathcal{C}_{\{1,2\}}[1:t\beta_{t}^l(\mathcal{D}_2^{k,\mathcal{P}})])  \oplus (\mathcal{C}_1 \oplus \mathcal{C}_{\{1,2\}}[1:t\beta_{t}^l(\mathcal{D}_2^{k,\mathcal{P}})])\\
	= \mathcal{C}_1 \oplus \mathcal{C}_2,
	\end{gather*}
	and their side-information $\mathcal{P}_2$ and $\mathcal{P}_1$ respectively. Receivers belonging to $\mathcal{D}_{\{1,2\}}^{k,\mathcal{P}}$ recover their demanded messages using $\mathcal{C}_{\{1,2\}}[t\beta_{t}^l(\mathcal{D}_2^{k,\mathcal{P}})+1:t\beta_{t}^l(\mathcal{D}_{\{1,2\}}^{k,\mathcal{P}})]$ and either  $\mathcal{C}_2 \oplus \mathcal{C}_{\{1,2\}}[1:t\beta_{t}^l(\mathcal{D}_2^{k,\mathcal{P}})]$ or $\mathcal{C}_1 \oplus \mathcal{C}_{\{1,2\}}[1:t\beta_{t}^l(\mathcal{D}_2^{k,\mathcal{P}})]$, and their side-information, depending on the presence of the interaction  $\mathcal{D}_{\{1,2\}}^{k,\mathcal{P}} \rightarrow \mathcal{D}_{2}^{k,\mathcal{P}}$ or $\mathcal{D}_{\{1,2\}}^{k,\mathcal{P}} \rightarrow \mathcal{D}_{1}^{k,\mathcal{P}}$ respectively. This code construction provides an upper bound which meets the lower bound, and thus (i) in the statement of the theorem is proved.
	\par Taking the limit as $t\to\infty$ in the definition of $\beta^l(\mathcal{D}^k,\mathcal{P})$, we obtain (ii) in the statement of the theorem.
\end{proof}

We illustrate the theorem using an example.
\begin{figure}[!htbp]
	\begin{center}         		
		\begin{tikzpicture}
		[place/.style={circle,draw=black!100,thick}]
		\node at (-4.0,-1.7) [place] (1c) {1};
		\node at (-5.5,-1) [place] (2c) {2};
		\node at (-5,.6) [place] (3c) {3};
		\node at (-1.1,-.3) [place] (6c) {6};
		\node at (-3.5,1.2) [place] (4c) {4};
		\node at (-2.0,1.1) [place] (5c) {5};
		\node at (-2.0,-1.5) [place] (7c) {7};
		\node at (.5,-1.5) [place] (1h) {1}; 	 	
		\node at (2.5,-1.5) [place] (2h) {2};
		\node at (1.5,0) [place] (3h) {12};	 	
		\draw (-3.2,-2.2) node {$\mathcal{D}$};
		\draw (1.5,-2.2) node {$\mathcal{H}$};
		\draw [thick,->] (1c) to [out=-170,in=-55] (2c); 
		\draw [thick,->] (2c) to [out=-35,in=165] (1c);
		\draw [thick,->] (2c) to [out=90,in=-145] (3c);
		\draw [thick,->] (3c) to [out=-120,in=70] (2c);
		\draw [thick,->] (1c) to [out=130,in=-80] (3c);
		\draw [thick,->] (3c) to [out=-60,in=110] (1c);
		\draw [thick,->] (7c) to [out=70,in=-140] (6c);
		\draw [thick,->] (6c) to [out=-110,in=40] (7c);
		\draw [thick,->] (4c) to [out=10,in=160] (5c);
		\draw [thick,->] (5c) to [out=-170,in=-15] (4c);
		\draw [thick,->] (1c) to [out=85,in=-120] (4c);
		\draw [thick,->] (1c) to [out=70,in=-130] (5c);
		\draw [thick,->] (2c) to [out=40,in=-150] (5c);
		\draw [thick,->] (2c) to [out=50,in=-150] (4c);      		
		\draw [thick,->] (3c) to [out=60,in=140] (5c);
		\draw [thick,->] (3c) to [out=50,in=-180] (4c);       		
		\draw [thick,->] (4c) to [out=-30,in=140] (6c);
		\draw [thick,->] (4c) to [out=-50,in=110] (7c);      		
		\draw [thick,->] (5c) to [out=-40,in=110] (6c);
		\draw [thick,->] (5c) to [out=-90,in=90] (7c); 	      	   
		\draw [thick,->] (6c) to [out=-160,in=10] (1c);
		\draw [thick,->] (6c) to [out=-180,in=10]  (2c);
		\draw [thick,->] (6c) to [out=160,in=-20]  (3c); 
		\draw [thick,->] (7c) to [out=-170,in=-10] (1c);
		\draw [thick,->] (7c) to [out=170,in=-10]  (2c);
		\draw [thick,->] (7c) to [out=140,in=-40]  (3c);
		\draw [thick,->] (1h) to [out=55,in=-125] (3h);
		\draw [thick,->] (2h) to [out=-180,in=0] (1h); 
		\draw [thick,->] (3h) to [out=-55,in=125] (2h);
		\end{tikzpicture}
		\caption{Example of a two-sender problem belonging to Case II-E.}
		\label{figexamp4}
	\end{center}
\end{figure}
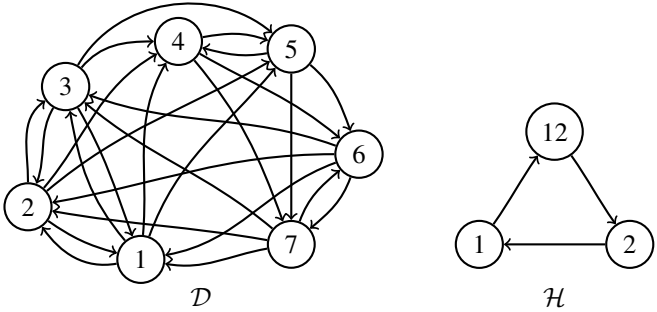
\begin{exmp}
	Consider a TUICP with $m=7$. Let $S_1$ have the message set $\mathcal{M}_1=\{{\bf{x}}_1,{\bf{x}}_2,{\bf{x}}_3,{\bf{x}}_4,{\bf{x}}_5\}$ and $S_2$ have the message set   $\mathcal{M}_2=\{{\bf{x}}_4,{\bf{x}}_5,{\bf{x}}_6,{\bf{x}}_7\}$. Hence, $\mathcal{P}_1=\{{\bf{x}}_1,{\bf{x}}_2,{\bf{x}}_3\}$, $\mathcal{P}_2=\{{\bf{x}}_6,{\bf{x}}_7\}$, and $\mathcal{P}_{\{1,2\}}=\{{\bf{x}}_4,{\bf{x}}_5\}$. The side-information of all the receivers are given as follows: 
	\[
	\mathcal{K}_1=\{{\bf{x}}_2,{\bf{x}}_3,{\bf{x}}_4,{\bf{x}}_5\},\mathcal{K}_2=\{{\bf{x}}_3,{\bf{x}}_1,{\bf{x}}_4,{\bf{x}}_5\},
	\]
	\[\mathcal{K}_3=\{{\bf{x}}_1,{\bf{x}}_2,{\bf{x}}_4,{\bf{x}}_5\},\mathcal{K}_4=\{{\bf{x}}_5,{\bf{x}}_6,{\bf{x}}_7\},
	\]
	\[
	\mathcal{K}_5=\{{\bf{x}}_4,{\bf{x}}_6,{\bf{x}}_7\},\mathcal{K}_6=\{{\bf{x}}_7,{\bf{x}}_1,{\bf{x}}_2,{\bf{x}}_3\},\]
	\[
	\mathcal{K}_7=\{{\bf{x}}_6,{\bf{x}}_1,{\bf{x}}_2,{\bf{x}}_3\}.
	\]
	The side-information digraph and the interaction digraph are given in Figure \ref{figexamp4}. It is easy to verify that the interaction digraph $\mathcal{H}$ of the side-information digraph $\mathcal{D}$ is $\mathcal{H}_{64}$. Note that all the interactions are fully-participated interactions. We observe that $\mathcal{D}_1^{64,\mathcal{P}}$ is a clique, and that  $\mathcal{D}_2^{64,\mathcal{P}}$ and  $\mathcal{D}_{\{1,2\}}^{64,\mathcal{P}}$ are cycles. 
	The optimal linear  broadcast rate of a clique is 1. Hence, from the results of \cite{SUOH} and the above fact, for any $t \geq 1$ we have, $\beta_t(\mathcal{D}_1^{64,\mathcal{P}})=1$, $\beta_t(\mathcal{D}_2^{64,\mathcal{P}})=1$, and  $\beta_t(\mathcal{D}_{\{1,2\}}^{64,\mathcal{P}})=1$. Hence, according to Theorem \ref{thmtwoel}, we have $\beta_t(\mathcal{D}^{64},\mathcal{P})=max\{1+1,1+1,1+1\}=2$. We provide the code for $t=1$. $S_1$ transmits ${\bf{x}}_1+{\bf{x}}_2+{\bf{x}}_3+{\bf{x}}_4+{\bf{x}}_5$. $S_2$  transmits ${\bf{x}}_4+{\bf{x}}_5+{\bf{x}}_6+{\bf{x}}_7$. Receiver $1$ decodes ${\bf{x}}_1$ using ${\bf{x}}_1+{\bf{x}}_2+{\bf{x}}_3+{\bf{x}}_4+{\bf{x}}_5$ and its side-information  $\mathcal{K}_1=\{{\bf{x}}_2,{\bf{x}}_3,{\bf{x}}_4,{\bf{x}}_5\}$. Reciever $4$ decodes ${\bf{x}}_4$ using ${\bf{x}}_4+{\bf{x}}_5+{\bf{x}}_6+{\bf{x}}_7$ and its side-information $\mathcal{K}_4=\{{\bf{x}}_5,{\bf{x}}_6,{\bf{x}}_7\}$. Receiver $6$ decodes ${\bf{x}}_6$ using $ ({\bf{x}}_1+{\bf{x}}_2+{\bf{x}}_3+{\bf{x}}_4+{\bf{x}}_5) \oplus ({\bf{x}}_4+{\bf{x}}_5+{\bf{x}}_6+{\bf{x}}_7) = {\bf{x}}_1+{\bf{x}}_2+{\bf{x}}_3+{\bf{x}}_6+{\bf{x}}_7$ and its side-information $\mathcal{K}_6=\{{\bf{x}}_7,{\bf{x}}_1,{\bf{x}}_2,{\bf{x}}_3\}$. Similarly, other receivers can  decode their demanded messages.
\end{exmp}

The following corollary emphasizes the significance of finding the optimal linear broadcast rates $\beta_{t}^{l}(\mathcal{D},\mathcal{P})$ and $\beta^{l}(\mathcal{D},\mathcal{P})$ of two-sender problems with fully-participated interactions. The proof follows directly form Lemma \ref{addedges}.

\begin{cor}
	The optimal linear broadcast rates $\beta_{t}^{l}(\mathcal{D},\mathcal{P})$ and $\beta^{l}(\mathcal{D},\mathcal{P})$ given in one of the  Theorems \ref{thmtwoA}, \ref{thmtwocl},  \ref{thmtwodl}, and \ref{thmtwoel} for a two-sender problem with fully-participated interactions serve as lower bounds for the corresponding optimal linear broadcast rates of a two-sender problem belonging to the respective case (Cases II-B, II-C, II-D, and II-E respectively) with partially-participated interactions, if the three sub-digraphs of the problem with partially-participated interactions are same as those of the corresponding problem  with fully-participated interactions respectively.
\end{cor}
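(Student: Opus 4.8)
The plan is to realize the fully-participated problem as an edge-augmentation of the given partially-participated problem and then invoke Lemma \ref{addedges}. Concretely, let $\mathcal{D}$ denote the side-information digraph of the partially-participated two-sender problem, which belongs to one of the Cases II-B, II-C, II-D, or II-E and has sub-digraphs $\mathcal{D}_1^{k,\mathcal{P}}$, $\mathcal{D}_2^{k,\mathcal{P}}$, and $\mathcal{D}_{\{1,2\}}^{k,\mathcal{P}}$. I would construct a new digraph $\mathcal{D}'$ by adding, for every existing interaction $\mathcal{D}_S^{k,\mathcal{P}} \rightarrow \mathcal{D}_{S'}^{k,\mathcal{P}}$ that is only partially-participated, all the missing directed edges from every vertex of $\mathcal{V}(\mathcal{D}_S^{k,\mathcal{P}})$ to every vertex of $\mathcal{V}(\mathcal{D}_{S'}^{k,\mathcal{P}})$, so that each such interaction becomes fully-participated. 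Crucially, no edges are introduced for absent interactions and no edges are introduced within any single sub-digraph.

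The first step is to verify that this augmentation preserves both the three sub-digraphs and the case. Since each $\mathcal{D}_S^{k,\mathcal{P}}$ is vertex-induced and the added edges all run between \emph{distinct} sub-digraphs, the internal edge sets of the three sub-digraphs are untouched; hence $\mathcal{D}'$ has exactly the same three sub-digraphs as $\mathcal{D}$. Moreover, the augmentation only fills in edges for interactions that are already present, so the set of ordered pairs $(S,S')$ with $\mathcal{D}_S \rightarrow \mathcal{D}_{S'}$ is unchanged; consequently the interaction digraph $\mathcal{H}_k$ of $\mathcal{D}'$ equals that of $\mathcal{D}$, and $\mathcal{D}'$ lies in the same case but now with fully-participated interactions. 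Thus the relevant one of Theorems \ref{thmtwoA}, \ref{thmtwocl}, \ref{thmtwodl}, and \ref{thmtwoel} applies to $\mathcal{D}'$ and evaluates $\beta_t^l(\mathcal{D}',\mathcal{P})$ and $\beta^l(\mathcal{D}',\mathcal{P})$ to the stated formulas in the three single-sender rates.

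Finally, because $\mathcal{D}'$ is obtained from $\mathcal{D}$ by adding directed edges, Lemma \ref{addedges} yields
\[
\beta_t^l(\mathcal{D},\mathcal{P}) \geq \beta_t^l(\mathcal{D}',\mathcal{P}), \qquad \beta^l(\mathcal{D},\mathcal{P}) \geq \beta^l(\mathcal{D}',\mathcal{P}).
\]
Since the right-hand sides equal the fully-participated values supplied by the respective theorem, these are precisely the claimed lower bounds on the partially-participated rates. I do not anticipate a genuine obstacle: the argument is essentially immediate from Lemma \ref{addedges}, and the only point requiring care is the bookkeeping of the second paragraph, namely confirming that filling in edges for already-present interactions neither creates nor destroys any interaction and never alters a sub-digraph, so that $\mathcal{D}'$ indeed remains in the same case with identical sub-digraphs.
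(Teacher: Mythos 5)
Your proposal is correct and is exactly the paper's argument: the paper proves this corollary by directly invoking Lemma~\ref{addedges}, and your construction of the fully-participated digraph $\mathcal{D}'$ by filling in the missing edges of existing interactions (which leaves the three sub-digraphs and the interaction digraph, hence the case, unchanged) is precisely the bookkeeping that makes that one-line proof rigorous. The only cosmetic remark is that Lemma~\ref{addedges} is stated for $\beta_t^l$, so the inequality for $\beta^l$ formally follows by taking the limit $t \to \infty$, a step the paper also leaves implicit.
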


\section{Extension of the results to the multi-sender unicast ICP}
In this section, we show that the results and the proof techniques used in the previous section can be used to obtain the optimal linear broadcast rates for some cases of the multi-sender ICP. When it is difficult to analyze any  multi-sender problem to obtain the optimal linear broadcast rate, a sub-optimal linear broadcast rate can be found by partitioning the problem into smaller multi-sender problems for which optimal results are known. We extend the definitions and notations given in Section II for the multi-sender unicast ICP. 

\subsection{Extension of the definitions for the multi-sender unicast index coding problem}
Consider an $s$-sender unicast ICP. The $i$th sender has the message set given by $\mathcal{M}_i$, $i \in [s]$, where $\mathcal{M}_i \subset \mathcal{M}$. We partition the set of all messages given by $\mathcal{M} =\{{\bf{x}}_1,{\bf{x}}_2,\cdots,{\bf{x}}_{m}\}$ into $2^s-1$ non-empty sets. We define the set of messages $\mathcal{P}_{S}$, for all non-empty $S \subseteq [s]$ as follows:
\begin{gather*}
\mathcal{P}_{S} = \{{\bf{x}}_j: {\bf{x}}_j \in \mathcal{M}_i, j \in [m], \forall i \in S 
~\mbox{and}~ \\ {\bf{x}}_j \notin \mathcal{M}_{i'}, \forall i' \in [s] \setminus S \}.
\end{gather*}
Hence, $\mathcal{P}_{S}$ is the set of messages possessed by every sender represented by the elements of $S$ and possessed by no other senders. Without loss of generality, we assume that $\mathcal{P}_S \neq \Phi$, for any $S$ such that $|S|=1$. If $\mathcal{P}_S = \Phi$, such that $|S|=1$, then all the messages with the sender represented by $S$ are also present with another sender and hence the sender with $\mathcal{P}_S=\Phi$ can be  removed from the problem setup. Let $\mathcal{D}_{S}$ be the sub-digraph of the side-information digraph $\mathcal{D}$, induced by the vertices $\{v_j : {\bf{x}}_j \in \mathcal{P}_{S}, j \in [m]\}$, where $S \subseteq [s]$ is non-empty. There are $2^s-1$ non-empty subsets of the set $[s]$ and hence a maximum of $2^s-1$ disjoint sub-digraphs $\mathcal{D}_{S}$ if all the sets $\mathcal{P}_{S}$, for non-empty $S \subseteq [s]$, are non-empty. When $\mathcal{P}_{S}=\Phi$, the corresponding $\mathcal{D}_{S}$ does not exist. 

Consider the interaction digraph $\mathcal{H}$ corresponding to the side-information digraph $\mathcal{D}$ for the $s$-sender problem, where  $\mathcal{V}(\mathcal{H}) = \{S : \mathcal{P}_{S} \neq \Phi, S \subseteq [s],  S \neq \Phi \}$, and $\mathcal{E}(\mathcal{H})=\{(S',S'') | \mathcal{D}_{S'} \rightarrow \mathcal{D}_{S''}, S',S'' \in \mathcal{V}(\mathcal{H})\}$. Let  $f: \mathcal{V}(\mathcal{D}) \rightarrow \mathcal{V}(\mathcal{H})$ be a function such that  $f(v)=S$ if $v \in \mathcal{V}(\mathcal{D}_{S})$. All other definitions given in Section II can be similarly extended to the multi-sender unicast ICP.  If $\mathcal{P}_{S} = \Phi$, we set $ \beta^{l}_t(\mathcal{D}_S) = 0$.

We illustrate all the  definitions, with a running example.

\begin{exmp}
	Consider a three-sender problem with $m=8$ messages and the message sets at the senders given by   $\mathcal{M}_1=\{{\bf{x}}_1,{\bf{x}}_4,{\bf{x}}_7,{\bf{x}}_8\}$, $\mathcal{M}_2=\{{\bf{x}}_2,{\bf{x}}_4,{\bf{x}}_5,{\bf{x}}_6,{\bf{x}}_7,{\bf{x}}_8\}$, and $\mathcal{M}_3=\{{\bf{x}}_3,{\bf{x}}_5,{\bf{x}}_6,{\bf{x}}_7\}$. The $i$th receiver demands ${\bf{x}}_i$, $\forall i \in [m]$. The  side-information of the receivers are given as follows:
	\begin{gather*}
	\mathcal{K}_1 = \{{\bf{x}}_2,{\bf{x}}_5,{\bf{x}}_7\}, \mathcal{K}_2 = \{{\bf{x}}_1,{\bf{x}}_3,{\bf{x}}_5,{\bf{x}}_7\},\\ \mathcal{K}_3 = \{{\bf{x}}_2,{\bf{x}}_6\}, \mathcal{K}_{4} = \{{\bf{x}}_6,{\bf{x}}_8\},  \mathcal{K}_{5}=\{{\bf{x}}_6\},\\ \mathcal{K}_{6}=\{{\bf{x}}_5\},\mathcal{K}_{7}=\{{\bf{x}}_4\},\mathcal{K}_{8}=\{{\bf{x}}_4\}.
	\end{gather*}
	The side-information digraph is shown in Figure \ref{fig3ICP0}.
	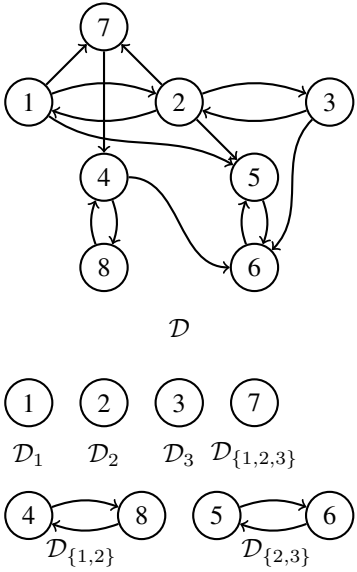
\begin{figure}
		\centering
		\begin{tikzpicture}
		[place/.style={circle,draw=black!100,thick}]
		\node at (-3,0) [place] (1c) {1};
		\node at (-1,0) [place] (2c) {2};
		\node at (1,0) [place] (3c) {3};
		\node at (-2,-1) [place] (4c) {4};
		\node at (-2,-2.2) [place] (8c) {8}; 	 	
		\node at (0,-1) [place] (5c) {5};
		\node at (0,-2.2) [place] (6c) {6};	 	
		\draw (-1,-3) node {$\mathcal{D}$};
		\node at (-2,1) [place] (7c) {7};
		\draw [thick,->] (1c) to [out=20,in=160] (2c);
		\draw [thick,->] (2c) to [out=-160,in=-20] (1c);
		\draw [thick,->] (2c) to [out=20,in=160] (3c);
		\draw [thick,->] (3c) to [out=-160,in=-20] (2c);
		\draw [thick,->] (4c) to [out=-70,in=70] (8c);
		\draw [thick,->] (8c) to [out=110,in=-110] (4c);
		\draw [thick,->] (5c) to [out=-70,in=70] (6c);
		\draw [thick,->] (6c) to [out=110,in=-110] (5c);
		\draw [thick,->] (1c) to [out=45,in=-135] (7c);
		\draw [thick,->] (2c) to [out=135,in=-45] (7c);
		\draw [thick,->] (2c) to [out=-45,in=135] (5c);
		\draw [thick,->] (4c) to [out=0,in=180]  (6c);
		\draw [thick,->] (7c) to [out=-90,in=90] (4c);
		\draw [thick,->] (1c) to [out=-35,in=150] (5c);
		\draw [thick,->] (3c) to [out=-135,in=45] (6c);
		\node at (-3,-4) [place] (1'c) {1};
		\draw (-3,-4.7) node {$\mathcal{D}_1$};
		\node at (-2,-4)  [place] (2'c) {2};
		\draw (-2,-4.7) node {$\mathcal{D}_2$};
		\node at (-1,-4)  [place] (3'c) {3};	
		\draw (-1,-4.7) node {$\mathcal{D}_3$};
		\node at (0,-4)  [place] (123'c) {7};	
		\draw (0,-4.7) node {$\mathcal{D}_{\{1,2,3\}}$};
		\node at (-3,-5.5) [place] (124'c) {4};
		\node at (-1.5,-5.5) [place] (128'c) {8};
		\draw [thick,->] (124'c) to [out=20,in=160] (128'c);
		\draw [thick,->] (128'c) to [out=-160,in=-20] (124'c);
		\draw (-2.3,-6) node {$\mathcal{D}_{\{1,2\}}$};
		\node at (-.5,-5.5) [place] (235'c) {5};
		\node at (1,-5.5) [place] (236'c) {6};	 
		\draw [thick,->] (235'c) to [out=20,in=160] (236'c);
		\draw [thick,->] (236'c) to [out=-160,in=-20] (235'c);
		\draw (.3,-6) node {$\mathcal{D}_{\{2,3\}}$}; 	 		
		\end{tikzpicture}
		\caption{The side-information digraph and the  sub-digraphs of Example \ref{exmp40}.}
		\label{fig3ICP0}
	\end{figure}
	From the definition of $\mathcal{P}_{S}$ for any non-empty $S \subseteq [s]$, we have the following:
	\begin{gather*}
	\mathcal{P}_1 = \{{\bf{x}}_1\}, \mathcal{P}_2 = \{{\bf{x}}_2\}, \mathcal{P}_3 = \{{\bf{x}}_3\}, \mathcal{P}_{\{1,2\}} = \{{\bf{x}}_4,{\bf{x}}_8\}, \\ \mathcal{P}_{\{2,3\}}=\{{\bf{x}}_5,{\bf{x}}_6\}, \mathcal{P}_{\{1,3\}}=\Phi,\mathcal{P}_{\{1,2,3\}}=\{{\bf{x}}_7\},\\
	\mathcal{P}=(\mathcal{P}_1,\mathcal{P}_2,\mathcal{P}_3,\mathcal{P}_{\{1,2\}},\mathcal{P}_{\{2,3\}},\mathcal{P}_{\{1,2,3\}}).
	\end{gather*}
	The sub-digraphs $\mathcal{D}_{S}$ induced by $\mathcal{P}_{S} \neq \Phi$, for all non-empty $S \subseteq [s]$, are also shown in Figure \ref{fig3ICP0}. The interaction digraph is shown in  Figure \ref{fig3ICP}.
	\label{exmp40}
\end{exmp}

We know that there are $2^s-1$ disjoint sub-digraphs for an $s$-sender problem, if $\mathcal{P}_{S} \neq \Phi$ for all non-empty $S \subseteq [s]$. A maximum of $2^s-1 \choose 2$ undirected edges are possible in $\mathcal{H}$ and hence a maximum of $2{2^s-1 \choose 2}$ directed edges are possible. Hence, there are $2^{2 {{2^s-1}\choose{2}}}$ possible interaction digraphs when $\mathcal{P}_{S} \neq \Phi$ for all non-empty $S \subseteq [s]$. For example, if $s=3$, there are $2^{42}$ interaction digraphs, when $\mathcal{P}_{S} \neq \Phi$ for all non-empty $S \subseteq [s]$. Note that for $s=2$, there are 64 interaction digraphs. This double-exponential increase (with the number of senders) in the number of interaction digraphs makes it difficult to completely classify the interaction digraphs into different classes. Thus, it is difficult to enumerate and classify the multi-sender unicast ICP with $s$ senders, for $s \geq 3$, based on the interaction digraphs, as done for the two-sender problem.

\subsection{Extensions of Theorems \ref{thmonel} and \ref{thmtwoAl} to multi-sender ICP}
In the following, we state an extension of Theorem \ref{thmonel} to multi-sender unicast ICPs. The proof is not provided as it follows on the same lines as that of Theorem \ref{thmonel}. 

\begin{thm}
	For any $s$-sender unicast ICP, $s \geq 2$, with the side-information digraph $\mathcal{D}$, with any type of interactions (fully-participated or partially-participated) between the sub-digraphs $\mathcal{D}_{S}$, for non-empty sets $S \subseteq [s]$, for any $\mathcal{P}$, and with the interaction digraph given by  $\mathcal{H}$ being acyclic, we have 
	\begin{gather*}
	(i) ~ \beta_{t}^{l}(\mathcal{D},\mathcal{P}) = \underset{S : S \subseteq [s]}{\Sigma} \beta_t^{l}(\mathcal{D}_{S}),\\
	(ii) ~ \beta^{l}(\mathcal{D},\mathcal{P}) = \underset{S : S \subseteq [s]}{\Sigma} \beta^{l}(\mathcal{D}_S).
	\end{gather*}
\end{thm}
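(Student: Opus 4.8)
The plan is to follow the proof of Theorem~\ref{thmonel} almost verbatim, replacing the three-way split by the general partition into the sub-digraphs $\mathcal{D}_S$ for non-empty $S \subseteq [s]$. I would first establish the achievable (upper) bound, then a matching converse, and finally pass to the limit for part~$(ii)$. For the upper bound, let each sub-problem $\mathcal{D}_S$ be solved by an optimal single-sender linear index code of rate $\beta_t^{l}(\mathcal{D}_S)$; since every message in $\mathcal{P}_S$ is held in common by all senders indexed by $S$, this code can be transmitted by any one of those senders. Broadcasting these (at most $2^s-1$) codes one after another lets each receiver in $\mathcal{D}_S$ decode its demand from the segment corresponding to its own sub-problem, giving
\[
\beta_t^{l}(\mathcal{D},\mathcal{P}) \leq \sum_{S \subseteq [s]} \beta_t^{l}(\mathcal{D}_S).
\]

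For the converse I would invoke Lemma~\ref{lowerbnd} to get $\beta_t^{l}(\mathcal{D},\mathcal{P}) \geq \beta_t^{l}(\mathcal{D})$, reducing the question to the single-sender problem on $\mathcal{D}$. The crucial step, which I expect to be the main obstacle, is to show that every interaction edge of $\mathcal{D}$ (an edge joining vertices of two distinct sub-digraphs) lies on no directed cycle. Here I would use the labelling map $f : \mathcal{V}(\mathcal{D}) \to \mathcal{V}(\mathcal{H})$: given a directed cycle of $\mathcal{D}$ that traverses some interaction edge, read off the label sequence $f(v_1),\dots,f(v_c),f(v_1)$ and collapse consecutive repeats. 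Each surviving transition is an edge of $\mathcal{H}$, and the presence of an interaction edge forces at least two distinct labels, so the result is a closed walk in $\mathcal{H}$ visiting $\geq 2$ vertices, hence containing a directed cycle of $\mathcal{H}$ and contradicting the hypothesis that $\mathcal{H}$ is acyclic. In the two-sender case one could simply inspect each acyclic $\mathcal{H}_k$ by hand, but with up to $2^s-1$ vertices this projection argument is the clean way to cover all interaction patterns uniformly.

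Once the interaction edges are known to avoid all directed cycles, Lemma~\ref{lemscc} permits deleting them without altering the set of achievable linear rates, so $\beta_t^{l}(\mathcal{D}) = \beta_t^{l}(\mathcal{D}')$, where $\mathcal{D}'$ is the disjoint union of the $\mathcal{D}_S$. Because distinct components of $\mathcal{D}'$ share no side-information, the linear broadcast rate is additive over them, exactly as in the proof of Theorem~\ref{thmonel}, yielding
\[
\beta_t^{l}(\mathcal{D}') = \sum_{S \subseteq [s]} \beta_t^{l}(\mathcal{D}_S).
\]
Combining this with the upper bound establishes $(i)$, and taking $t \to \infty$ in the definition of $\beta^{l}(\mathcal{D},\mathcal{P})$ gives $(ii)$. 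The only genuinely new ingredient relative to Theorem~\ref{thmonel} is the cycle-projection argument above; everything else is a direct transcription, which is why no separate proof is needed in the paper.
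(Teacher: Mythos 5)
Your proposal is correct and follows essentially the same route as the paper, which omits the proof precisely because it mirrors that of Theorem~\ref{thmonel}: achievability by concatenating optimal single-sender codes for the $\mathcal{D}_S$, and a converse via Lemma~\ref{lowerbnd} plus Lemma~\ref{lemscc} after observing that interaction edges lie on no directed cycle. Your label-collapsing projection argument onto $\mathcal{H}$ is a clean, explicit justification of that last observation, which the paper simply asserts from the acyclicity of the interaction digraph.
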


We next provide a result which uses the proof technique used in Theorem \ref{thmtwoAl}, along with an additional fact from graph theory, to provide the optimal linear broadcast rate for a class of $s$-sender unicast ICP, for any $s \geq 2$. We need the following lemma to prove our results.
\begin{lem}[\cite{DEK}]
	Any acyclic digraph has at least one topological ordering. 
\end{lem}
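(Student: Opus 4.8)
The plan is to prove the statement by induction on the number of vertices $n = |\mathcal{V}(\mathcal{D})|$, using as the crucial structural ingredient the fact that every nonempty finite acyclic digraph contains a \emph{source}, that is, a vertex with no incoming edges. Once a source is available, the induction is essentially bookkeeping with labels, so the bulk of the work lies in establishing its existence.

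First I would prove the source lemma. Suppose, for contradiction, that every vertex of $\mathcal{D}$ has at least one incoming edge. Pick any vertex $u_0$; since it has an incoming edge, choose $u_1$ with $(u_1, u_0) \in \mathcal{E}(\mathcal{D})$, then $u_2$ with $(u_2, u_1) \in \mathcal{E}(\mathcal{D})$, and so on. Because $\mathcal{V}(\mathcal{D})$ is finite, the sequence $u_0, u_1, u_2, \ldots$ must eventually repeat a vertex; the segment between two occurrences of that vertex, read in the direction of the edges, is a directed cycle, contradicting acyclicity. Hence a source exists.

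Then I would run the induction. The base case $n = 1$ is immediate: label the single vertex $1$, and since there are no edges the ordering condition holds vacuously. For the inductive step, let $v$ be a source of $\mathcal{D}$, assign it the label $1$, and consider the sub-digraph $\mathcal{D}'$ induced by $\mathcal{V}(\mathcal{D}) \setminus \{v\}$. Removing a vertex cannot create a cycle, so $\mathcal{D}'$ is again acyclic and has $n-1$ vertices; by the induction hypothesis it admits a topological ordering with labels $\{1, \ldots, n-1\}$, which I shift up by one so that $\mathcal{D}'$ uses labels $\{2, \ldots, n\}$.

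Finally I would verify that the combined labeling is a valid topological ordering of $\mathcal{D}$. Take any edge $(u, w) \in \mathcal{E}(\mathcal{D})$. Since $v$ is a source it has no incoming edge, so $w \neq v$. If $u = v$, then its label $1$ is strictly smaller than the label of $w$, which is at least $2$; if $u \neq v$, then $(u, w)$ is an edge of $\mathcal{D}'$ and the inductive ordering already guarantees that the label of $u$ is less than that of $w$. This exhausts all cases and closes the induction. The only delicate point — and the one I would state most carefully — is the source lemma, since it is the sole place where acyclicity is used in an essential way; an entirely symmetric alternative would instead locate a \emph{sink} (a vertex with no outgoing edges) and assign it the largest label $n$, recursing on the remaining digraph.
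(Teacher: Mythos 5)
Your proof is correct. Note that the paper does not prove this lemma at all: it is stated with a citation to \cite{DEK} and used as a black box (in the proof of Theorem \ref{thm3ICP2}), so there is no in-paper argument to compare against. Your induction --- establish the existence of a source by following incoming edges backwards until, by finiteness, a vertex would repeat and yield a directed cycle; label the source $1$; recurse on the induced sub-digraph, which is acyclic since induced sub-digraphs of acyclic digraphs are acyclic --- is the standard textbook proof, and is essentially the argument in the cited reference itself, where topological sorting proceeds by repeatedly removing a vertex with no predecessors. Your final case analysis is also complete: for any edge $(u,w)$, the endpoint $w$ cannot be the source, and the two cases $u=v$ and $u\neq v$ are each handled by the labeling and the induction hypothesis respectively.
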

The definition of topological ordering is given in Section II. This result implies that there exists a vertex with no incoming edges and another vertex with no outgoing edges in any acyclic digraph. The following theorem extends Theorem \ref{thmtwoAl} to any $s$-sender unicast ICP, $s \geq 3$.
\begin{thm}
	For any $s$-sender unicast ICP, $s \geq 2$, with the side-information digraph $\mathcal{D}$ and the interaction digraph $\mathcal{H}$, with any type of interactions (fully-participated or partially-participated), let the interactions between the sub-digraphs satisfy all the following conditions. 
	\par $(i)$ If $|S|=1$ and $|S'|>1$, $S,S' \subseteq [s]$, and there exists an interaction between $\mathcal{D}_S$ and $\mathcal{D}_{S'}$, it must only be of the form $\mathcal{D}_S \rightarrow \mathcal{D}_{S'}$. 
	\par $(ii)$ The sub-digraph induced by all the vertices $\{S : S \in \mathcal{V}(\mathcal{H}), |S| > 1\}$, must be acyclic.
	\par $(iii)$ If $|S|=1$ and $|S'|=1$, $S,S' \subset [s]$, any type of interactions are allowed. \\
	
	Then we have, 
	\begin{gather*}
	(i) ~ \beta_{t}^{l}(\mathcal{D},\mathcal{P}) = \underset{S : S \subseteq [s]}{\Sigma} \beta_t^{l}(\mathcal{D}_S),\\
	(ii) ~ \beta^{l}(\mathcal{D},\mathcal{P}) = \underset{S : S \subseteq [s]}{\Sigma} \beta^{l}(\mathcal{D}_S).
	\end{gather*}
	\label{thm3ICP2}
\end{thm}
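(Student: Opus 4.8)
The plan is to follow the interference-alignment scheme of Theorem \ref{thmtwoAl}, replacing its two-way split into ``private'' and ``common'' components by an induction that peels off the components $\mathcal{D}_S$ with $|S|>1$ one at a time. As there, the achievability (upper bound) is immediate: for each non-empty $S\subseteq[s]$ pick any one sender in $S$ and let it broadcast an optimal single-sender linear code for $\mathcal{D}_S$; every receiver in $\mathcal{D}_S$ then decodes from that sub-code using its side-information inside $\mathcal{P}_S$, so $\beta_t^l(\mathcal{D},\mathcal{P})\le\sum_S\beta_t^l(\mathcal{D}_S)$. For the matching lower bound I would write a general full-rank linear code as $\mathbf{G}\mathbf{x}$, regard the columns of $\mathbf{G}$ as precoding vectors, and decompose the received space as $\bigoplus_{i\in[s]}V_i$, where $V_i$ collects the coordinates transmitted by sender $i$. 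Since $\mathcal{P}_S$ is held only by the senders in $S$, the precoding vectors of its messages are supported on $\bigoplus_{i\in S}V_i$; writing $W_S$ for their span, the codelength equals $\dim\!\big(\sum_S W_S\big)$, and the goal is to show $\dim\!\big(\sum_S W_S\big)\ge t\sum_S\beta_t^l(\mathcal{D}_S)$.

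Two structural facts drive the bound. The first is coordinate separation: since $\mathcal{P}_{\{i\}}$ is held by sender $i$ alone, $W_{\{i\}}\subseteq V_i$, so the singleton spaces occupy pairwise disjoint coordinate blocks. The second comes from conditions (i) and (ii). By the topological-ordering lemma stated above, condition (ii) makes the sub-digraph of $\mathcal{H}$ on the large vertices $\{S:|S|>1\}$ acyclic, hence it has a sink $T^{\ast}$; condition (i) forbids edges from large components to singletons, so a receiver of $\mathcal{D}_{T^{\ast}}$ has no side-information in any singleton, and being a sink it has none in any other large component either. Thus the side-information of every receiver in $\mathcal{D}_{T^{\ast}}$ is confined to $\mathcal{P}_{T^{\ast}}$.

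I would then prove $\dim(\sum_S W_S)\ge t\sum_S\beta_t^l(\mathcal{D}_S)$ by induction on the number of large components. In the base case there are none: the $W_{\{i\}}$ are coordinate-disjoint, and projecting onto $V_i$ wipes out every other singleton, so each receiver of $\mathcal{D}_{\{i\}}$ is reduced to a single-sender instance on $\mathcal{D}_{\{i\}}$; hence $\dim W_{\{i\}}\ge t\beta_t^l(\mathcal{D}_{\{i\}})$ and the bound follows by directness, covering the possibly cyclic singleton interactions allowed by condition (iii). For the inductive step, take the sink $T^{\ast}$ and pass to the quotient $\pi$ by $U=\sum_{S\neq T^{\ast}}W_S$: every message outside $\mathcal{P}_{T^{\ast}}$ is annihilated, each receiver of $\mathcal{D}_{T^{\ast}}$ sees only $\mathcal{P}_{T^{\ast}}$, and the surviving vectors form a valid single-sender code for $\mathcal{D}_{T^{\ast}}$ (the original decoding relations survive the quotient because every annihilated message was pure interference for these receivers), giving
\[
\dim\Big(\textstyle\sum_S W_S\Big)-\dim\Big(\textstyle\sum_{S\neq T^{\ast}}W_S\Big)=\dim\big(\pi(W_{T^{\ast}})\big)\ge t\,\beta_t^l(\mathcal{D}_{T^{\ast}}).
\]
Deleting $T^{\ast}$ leaves a code with image $\sum_{S\neq T^{\ast}}W_S$ that solves the reduced instance, whose interaction digraph is $\mathcal{H}$ with $T^{\ast}$ removed and which still obeys (i)--(iii); the inductive hypothesis bounds $\dim(\sum_{S\neq T^{\ast}}W_S)$, and summing the two estimates closes the induction. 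Part (ii) follows by letting $t\to\infty$ in the definition of $\beta^l$.

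The step I expect to be most delicate is justifying that deleting the sink component preserves a valid code for the reduced instance: one must verify that erasing the messages of $\mathcal{P}_{T^{\ast}}$ can only remove interference or remove already-cancellable side-information from the remaining receivers, never destroying a decoding relation, and simultaneously that the reduced interaction digraph inherits conditions (i)--(iii). The other careful point is the bookkeeping that reconciles the coordinate-separation argument for singletons with the quotient-peeling for large components when a singleton $\{i\}$ and a large component $T\ni i$ share the block $V_i$; here it is essential that the marginal estimate for $T$ is obtained by a quotient, which tolerates shared coordinates, rather than by a coordinate projection, which would not.
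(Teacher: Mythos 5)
Your proposal is correct and takes essentially the same route as the paper: the same achievability scheme (one optimal single-sender code per $\mathcal{D}_S$), and the same interference-alignment lower bound that uses the sender-block structure of $\mathbf{G}$ to separate the singleton components and uses conditions $(i)$--$(ii)$ to peel off sinks of the large components in topological order. Your quotient-space induction is simply a rigorous formalization of the paper's informal ``independent precoding dimensions'' counting, and the delicate steps you flag (validity of the code after deleting the sink component, and inheritance of conditions $(i)$--$(iii)$ by the reduced instance) do go through exactly as you anticipate.
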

\begin{proof}
	The proof follows on the same lines as that of Theorem \ref{thmtwoAl}. First, we prove a lower bound on the minimum length of the index code or equivalently the minimum number of received signal dimensions needed for all the receivers to decode their demands, using an   interference alignment perspective. The matching upper bound follows on the same lines as that given in the proof of Theorem \ref{thmtwoAl}. Throughout the proof we consider $S$ such that $S \subseteq [s]$ is non-empty. 
	
	Due to the constraints of encoding or equivalently orthogonal transmissions in time by different senders (as seen in the proof of Theorem \ref{thmtwoAl}), we see that at least $d_{|S|=1} = t \times \underset{S : |S|=1, S \subset [s]}{\Sigma} \beta_t^{l}(\mathcal{D}_S)$ dimensions (independent vectors) are required for precoding all the messages $\mathcal{P}_{S}$ such that $|S|=1$.
	
	Due to the condition $(i)$ in the statement of the theorem, the precoding vector of any message belonging to any $\mathcal{P}_{S}$ such that  $|S|>1, S \subseteq [s]$, can not be aligned with any vector in the span of the precoding vectors of messages belonging to any  $\mathcal{P}_{S'}$ such that $|S'|=1$. This is because no receiver belonging to any $\mathcal{D}_S$ with $|S|>1$ has any side-information in any $\mathcal{D}_{S'}$ such that $|S'|=1$. Hence, 
	if $d_{|S|>1}$ is the minimum number of  dimensions required for the precoding vectors of all $\mathcal{P}_S$ such that $|S|>1$, then we need at least $d_{|S|=1}+d_{|S|>1}$ number of dimensions in total, which is also a lower bound on the length of the index code.
	
	Now, we calculate the minimum value of $d_{|S|>1}$. Using the topological ordering we can order all ${S}$ with $|S| > 1$ in $\mathcal{H}$.
	We obtain a ${S'}$ which has no outgoing edges (i.e., vertices in $\mathcal{D}_{S'}$ have no side-information in other sub-digraphs). Hence, according to the interference alignment principle, the precoding vectors of the messages in $\mathcal{D}_{S'}$ must be independent of the precoding vectors of the messages in other $\mathcal{D}_{S}$ with  $|S| > 1$. Hence, we see that at least $\beta_t^{l}(\mathcal{D}_{S'})$ dimensions are required for precoding all the messages in  $\mathcal{D}_{S'}$. Deleting ${S'}$ and its incoming edges from the topological ordering, we get another topological ordering, as the remaining digraph in $\mathcal{H}$ induced by ${S}$ with $|S| > 1$ is still acyclic. Using the same argument for  ${S''}$ which is the new vertex with no outgoing edges, we see that the precoding vectors of the messages in $\mathcal{D}_{S''}$ must be independent of those belonging to $\mathcal{D}_{S}$ with  $|S| > 1$. Hence, at least $\beta_t^{l}(\mathcal{D}_{S''})$ dimensions are required for precoding all the messages in  $\mathcal{D}_{S''}$. Now ${S''}$ and all its incoming edges are deleted from the topological ordering.   Repeating this argument until all the vertices in the topological ordering are deleted, we see that  $d_{|S|>1} = t \times \underset{S : |S|>1, S \subseteq [s]}{\Sigma} \beta_t^{l}(\mathcal{D}_S)$. Hence, the lower bound given by 
	\begin{gather*}
	\beta_{t}^{l}(\mathcal{D},\mathcal{P}) \geq  \underset{S : S \subseteq [s]}{\Sigma} \beta_t^{l}(\mathcal{D}_S),\
	\end{gather*}
	follows. The matching upper bound follows by sending optimal linear index codes corresponding to all the sub-digraphs. 
	Thus, we have proved the statement $(i)$ of the theorem.
	
	Taking the limit as $t\to\infty$ in the definition of $\beta^l(\mathcal{D},\mathcal{P})$, we obtain (ii) in the statement of the theorem.
\end{proof}

We now illustrate the theorem using an example.
\begin{exmp}[Example \ref{exmp40} continued]
	Consider the three-sender unicast ICP given in Example \ref{exmp40}. The interaction digraph is shown in Figure \ref{fig3ICP}. All the interactions except the interactions $\mathcal{D}_{\{1,2,3\}} \rightarrow \mathcal{D}_{\{1,2\}}$, $\mathcal{D}_{\{1,2\}} \rightarrow \mathcal{D}_{\{2,3\}}$, $\mathcal{D}_{3} \rightarrow \mathcal{D}_{\{2,3\}}$, $\mathcal{D}_{2} \rightarrow \mathcal{D}_{\{2,3\}}$, and $\mathcal{D}_{1} \rightarrow \mathcal{D}_{\{2,3\}}$ are fully-participated. Observe that the interactions between ($\mathcal{D}_1$ and $\mathcal{D}_2$), and  ($\mathcal{D}_2$ and $\mathcal{D}_3$) are cyclic, which satisfy the condition $(iii)$ of Theorem \ref{thm3ICP2}. The interactions between any one among $\mathcal{D}_1$, $\mathcal{D}_2$, and $\mathcal{D}_3$, and other sub-digraphs are unidirectional, and are in the direction as given by the theorem, which satisfy condition  $(i)$. The interactions  $\mathcal{D}_{\{1,2,3\}} \rightarrow \mathcal{D}_{\{1,2\}}$ and  $\mathcal{D}_{\{1,2\}} \rightarrow \mathcal{D}_{\{2,3\}}$   do not form a cycle, thereby satisfying condition $(ii)$. Hence, the optimal linear broadcast rate is the sum of those of the individual sub-digraphs. In this example, for any $t$, we have $\beta^{l}_{t}(\mathcal{D}_{S})=1$ for all $S$ such that $\mathcal{P}_{S} \neq \Phi$. Hence, according to the theorem $\beta^{l}_{t}(\mathcal{D},\mathcal{P})=6$. The optimal transmissions are given as follows:
	\[  {\bf{x}}_1 ~\mbox{sent by $S_1$}, ~ {\bf{x}}_2  ~\mbox{sent by $S_2$}, ~{\bf{x}}_3 ~ \mbox{sent by $S_3$}, \]
	\[ {\bf{x}}_7  ~\mbox{sent by any one of $S_1$, $S_2$ and $S_3$}, \]
	\[ {\bf{x}}_5 \oplus  {\bf{x}}_6  ~ \mbox{sent by any one of $S_2$ and $S_3$,}
	\]
	\[
	{\bf{x}}_4 \oplus  {\bf{x}}_8  ~ \mbox{sent by any one of $S_1$ and $S_2$.}
	\]
	It can be easily verified that all the receivers can decode their demands using the broadcast transmissions  and their side-information.
	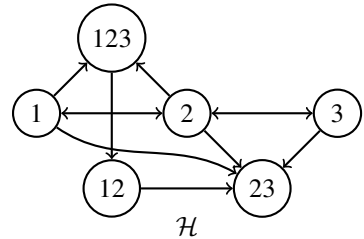
\begin{figure}
		\centering
		\begin{tikzpicture}
		[place/.style={circle,draw=black!100,thick}]
		\node at (-2,0) [place] (1c) {1};
		\node at (0,0) [place] (2c) {2};
		\node at (2,0) [place] (3c) {3};
		\node at (-1,-1) [place] (12c) {12};
		\node at (1,-1) [place] (23c) {23};
		\draw (0,-1.5) node {$\mathcal{H}$};
		\node at (-1,1) [place] (123c) {123};
		\draw [thick,<->] (1c.east) -- (2c.west);
		\draw [thick,<->] (2c.east) -- (3c.west);
		\draw [thick,->] (1c) to [out=45,in=-135] (123c);
		\draw [thick,->] (2c) to [out=135,in=-45] (123c);
		\draw [thick,->] (2c) to [out=-45,in=135] (23c);
		\draw [thick,->] (12c) to [out=0,in=180] (23c);
		\draw [thick,->] (123c) to [out=-90,in=90] (12c);
		\draw [thick,->] (1c) to [out=-35,in=150] (23c);
		\draw [thick,->] (3c) to [out=-135,in=45] (23c);
		\end{tikzpicture}
		\caption{Interaction digraph of Example \ref{exmp40} to illustrate Theorem \ref{thm3ICP2}.}
		\label{fig3ICP}
	\end{figure}
	\label{exmp4}
\end{exmp}

\subsection{Application of the proof techniques and results of the two-sender problem to the multi-sender ICP}
We consider an example of a three-sender unicast ICP, and show that the proof techniques used to solve the two-sender problem can be applied to obtain the optimal linear broadcast rate of the multi-sender problem for any number of senders. 

We now state and prove a required lemma to be used in the following example.
\begin{lem}
	Consider $n$ single-sender unicast problems, where the $i$th single-sender unicast problem has the side-information digraph $\mathcal{D}^{(i)}$, $i \in [n]$. If there are directed edges from every vertex of every digraph $\mathcal{D}^{(i)}$ to every vertex of every other digraph $\mathcal{D}^{(j)}$, $i \neq j$, $i,j \in [n]$, forming the digraph $\mathcal{D}$, then, $\beta^{l}_{t}(\mathcal{D}) = max \{\beta^{l}_{t}(\mathcal{D}^{(1)}),\cdots,\beta^{l}_{t}(\mathcal{D}^{(n)})\}$.
	\label{lemcliquedigraph}
\end{lem}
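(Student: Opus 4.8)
The plan is to prove the two inequalities $\beta_{t}^{l}(\mathcal{D}) \geq max\{\beta_{t}^{l}(\mathcal{D}^{(1)}),\cdots,\beta_{t}^{l}(\mathcal{D}^{(n)})\}$ and $\beta_{t}^{l}(\mathcal{D}) \leq max\{\beta_{t}^{l}(\mathcal{D}^{(1)}),\cdots,\beta_{t}^{l}(\mathcal{D}^{(n)})\}$ separately. The single observation driving both directions is that, by the construction of $\mathcal{D}$, every receiver in $\mathcal{D}^{(i)}$ has as side-information the entire message set of every other digraph $\mathcal{D}^{(j)}$, $j \neq i$, since there are directed edges from every vertex of $\mathcal{D}^{(i)}$ to every vertex of $\mathcal{D}^{(j)}$ and back.

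For the lower bound, I would first note that each $\mathcal{D}^{(i)}$ is precisely the sub-digraph of $\mathcal{D}$ induced by $\mathcal{V}(\mathcal{D}^{(i)})$, because the only edges of $\mathcal{D}$ joining two vertices both lying in group $i$ are the original edges of $\mathcal{D}^{(i)}$ (all cross edges join distinct groups). I would then take an optimal linear code $\mathbf{G}\mathbf{x}$ for the single-sender problem $\mathcal{D}$ of length $t\beta_{t}^{l}(\mathcal{D})$ and restrict it by setting every message outside $\mathcal{V}(\mathcal{D}^{(i)})$ to zero. The resulting map $\mathbf{x}^{(i)} \mapsto \mathbf{G}'\mathbf{x}^{(i)}$, where $\mathbf{G}'$ collects the columns indexed by the messages of group $i$, is a valid linear index code for the standalone problem $\mathcal{D}^{(i)}$: each receiver of group $i$ decodes exactly as it did in $\mathcal{D}$, using its within-group side-information, the deleted (now zero) messages being trivially known. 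This code has length at most $t\beta_{t}^{l}(\mathcal{D})$, so $\beta_{t}^{l}(\mathcal{D}^{(i)}) \leq \beta_{t}^{l}(\mathcal{D})$ for every $i$, and taking the maximum gives the lower bound.

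For the matching upper bound I would give an explicit XOR construction. Let $\mathcal{C}_i$ be an optimal linear code for $\mathcal{D}^{(i)}$ of length $t\beta_{t}^{l}(\mathcal{D}^{(i)})$, and broadcast $\mathcal{C} = \mathcal{C}_1 \oplus \mathcal{C}_2 \oplus \cdots \oplus \mathcal{C}_n$, the bitwise XOR with zero-padding of shorter codewords defined in Section II; its length is $\max_i t\beta_{t}^{l}(\mathcal{D}^{(i)})$. Decoding hinges on the side-information structure: a receiver $v \in \mathcal{V}(\mathcal{D}^{(i)})$ knows every message belonging to $\mathcal{D}^{(j)}$ for all $j \neq i$, so it can recompute each $\mathcal{C}_j$, $j \neq i$, XOR them out of $\mathcal{C}$ to obtain $\mathcal{C}_i$, and then decode $\mathbf{x}_v$ from $\mathcal{C}_i$ together with its within-group side-information, exactly as under the optimal code for $\mathcal{D}^{(i)}$. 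This yields $\beta_{t}^{l}(\mathcal{D}) \leq max\{\beta_{t}^{l}(\mathcal{D}^{(1)}),\cdots,\beta_{t}^{l}(\mathcal{D}^{(n)})\}$, and combined with the lower bound gives equality.

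The step I expect to require the most care is the lower bound: the lemmas available in the excerpt concern adding or deleting \emph{edges} (Lemmas \ref{addedges} and \ref{lemscc}), not deleting \emph{vertices}, so I cannot simply quote them. The message-zeroing restriction argument above supplies the needed induced-subgraph monotonicity directly and is the cleanest self-contained route. The upper-bound construction is a straightforward generalization of the two-group XOR scheme used in the proof of Theorem \ref{thmtwoel}, so the only thing to verify there is that the complete bidirectional cross-group interactions genuinely let each receiver cancel all the interfering sub-codewords before applying its own sub-problem's decoder.
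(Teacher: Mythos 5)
Your proposal is correct and follows essentially the same route as the paper's own proof: the lower bound via restricting an optimal code for $\mathcal{D}$ by zeroing all messages outside $\mathcal{V}(\mathcal{D}^{(i)})$, and the upper bound via the zero-padded XOR $\mathcal{C}_1 \oplus \cdots \oplus \mathcal{C}_n$ with each receiver cancelling the other groups' sub-codewords using its complete cross-group side-information. Your write-up is in fact slightly more explicit than the paper's (which leaves the decoding verification as ``easily verified''), but the ideas coincide exactly.
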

\begin{proof}
	We provide a lower bound on $\beta^{l}_{t}(\mathcal{D})$ and then provide a matching upper bound.
	
	Consider $\mathcal{D}$ and delete all the vertices and the incident edges belonging to all the digraphs $\mathcal{D}^{(j)}$ in $\mathcal{D}$ except $\mathcal{D}^{(i)}$, $i \neq j$, $i,j \in [n]$, for any $i$. As $\mathcal{D}^{(i)}$ is an induced sub-digraph of $\mathcal{D}$, we see that $\beta^{l}_{t}(\mathcal{D}) \geq \beta^{l}_{t}(\mathcal{D}^{(i)})$, as the optimal linear code for $\mathcal{D}$ also serves as an index code for $\mathcal{D}^{(i)}$ by setting all the messages belonging to all other $\mathcal{D}^{(j)}$, $j \neq i, j \in [n]$, to zero. As this is true for any $i$, we have $\beta^{l}_{t}(\mathcal{D}) \geq max \{\beta^{l}_{t}(\mathcal{D}^{(1)}),\cdots,\beta^{l}_{t}(\mathcal{D}^{(n)})\}$.
	
	Consider the linear index code given by $\mathcal{C}=\mathcal{C}_1 \oplus \mathcal{C}_2 \oplus  \cdots  \oplus \mathcal{C}_n$, where $\mathcal{C}_i$ is a linear index code for $\mathcal{D}^{(i)}$ with linear broadcast rate $\beta^{l}_{t}(\mathcal{D}^{(i)})$. It can be eaily verified that every receiver can decode its demands as every receiver in $\mathcal{D}^{(i)}$, $\forall i \in [n]$, knows every message in $\mathcal{D}^{(j)}$, $\forall j \neq i, j \in [n]$. The linear broadcast rate of this code is same as that given by the lower bound. Hence we have the result.	
\end{proof} 

\begin{exmp} 
	Consider any three-sender unicast ICP with the interaction digraph as given in Figure \ref{fig3ICP1}.  The interactions $\mathcal{D}_{1} \rightarrow \mathcal{D}_{\{1,2\}}$, $\mathcal{D}_{1} \rightarrow \mathcal{D}_{\{1,2,3\}}$, $\mathcal{D}_{\{1,2\}} \rightarrow \mathcal{D}_{\{1,2,3\}}$,$\mathcal{D}_{\{1,2\}} \rightarrow \mathcal{D}_{1}$, $\mathcal{D}_{\{1,2,3\}} \rightarrow \mathcal{D}_{1}$, $\mathcal{D}_{\{1,2,3\}} \rightarrow
	\mathcal{D}_{\{1,2\}}$, $\mathcal{D}_{\{2,3\}} \rightarrow \mathcal{D}_{3}$, and  $\mathcal{D}_{3} \rightarrow \mathcal{D}_{\{2,3\}}$ are fully-participated. Other interactions can be of any type.
	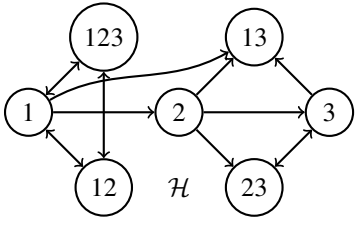
\begin{figure}
		\centering
		\begin{tikzpicture}
		[place/.style={circle,draw=black!100,thick}]
		\node at (-2,0) [place] (1c) {1};
		\node at (0,0) [place] (2c) {2};
		\node at (2,0) [place] (3c) {3};
		\node at (-1,-1) [place] (12c) {12};
		\node at (1,-1) [place] (23c) {23};
		\node at (1,1) [place] (13c) {13};
		\node at (-1,1) [place] (123c) {123};
		\draw (0,-1) node {$\mathcal{H}$};
		\draw [thick,->] (1c.east) -- (2c.west);
		\draw [thick,->] (2c.east) -- (3c.west);
		\draw [thick,->] (2c) to [out=-45,in=135] (23c);
		\draw [thick,->] (2c) to [out=45,in=-135] (13c);
		\draw [thick,<->] (3c) to [out=-135,in=45] (23c);
		\draw [thick,<->] (1c) to [out=45,in=-135] (123c);
		\draw [thick,<->] (1c) to [out=-45,in=135] (12c);
		\draw [thick,->] (3c) to [out=135,in=-45] (13c);
		\draw [thick,<->] (123c) to [out=-90,in=90] (12c);
		\draw [thick,->] (1c) to [out=30,in=-150] (13c);	
		\end{tikzpicture}
		\caption{Interaction digraph of Example \ref{exmp6}.}
		\label{fig3ICP1}
	\end{figure}
	
	We first obtain a lower bound on the optimal broadcast rate
	with $t$-bit messages, for any finite $t$. Considering the problem as a single-sender problem, we observe that there are unidirectional edges between the sub-digraph of the interaction digraph induced by the vertices $1$, ${\{1,2\}}$, and  ${\{1,2,3\}}$, and the sub-digraph of the interaction digraph induced by other vertices. Similarly, the edges $(2,{\{13\}})$, $(3,{\{13\}})$, $(2,{\{23\}})$, and $(2,3)$ do not lie on any cycle. Hence, we can remove the edges $(2,{\{13\}})$, $(3,{\{13\}})$, $(2,{\{23\}})$, and $(2,3)$ according to Lemma \ref{lemscc}, and still obtain the same optimal linear broadcast rate as that of the original problem  by analyzing the remaining problem. From Lemma \ref{lemcliquedigraph}, we have, 
	\begin{gather*}
	\beta^{l}_{t}(\mathcal{D}) = max \{\beta^{l}_{t}(\mathcal{D}_{1}),\beta^{l}_{t}(\mathcal{D}_{\{1,2\}}),\beta^{l}_{t}(\mathcal{D}_{\{1,2,3\}})\} \\ + max \{\beta^{l}_{t}(\mathcal{D}_{3}),\beta^{l}_{t}(\mathcal{D}_{\{2,3\}})\} + \beta^{l}_{t}(\mathcal{D}_{2}) + \beta^{l}_{t}(\mathcal{D}_{\{1,3\}}).
	\end{gather*}
	Note that this forms a lower bound for the optimal linear broadcast rate of the three-sender problem according to Lemma \ref{lowerbnd}.
	
	We obtain a matching upper bound, to obtain the optimal linear broadcast rate. Let $\mathcal{C}_{S}$ be the optimal linear code with linear broadcast rate $\beta^{l}_{t}(\mathcal{D}_{S})$, for the problem with side-information digraph $\mathcal{D}_{S}$. The transmissions are as follows. 
	\[  \mathcal{C}_1 \oplus \mathcal{C}_{\{1,2\}} \oplus \mathcal{C}_{\{1,2,3\}}  ~\mbox{sent by $S_1$}, \]
	\[  \mathcal{C}_3 \oplus \mathcal{C}_{\{2,3\}} ~ \mbox{sent by $S_3$}, ~ \mathcal{C}_2 ~\mbox{sent by $S_2$}, ~~\mbox{and} \]
	\[
	\mathcal{C}_{\{1,3\}} ~ \mbox{sent by any one of $S_1$ and $S_3$.}
	\]
	It can be easily verified that the linear broadcast rate of this code is same as that given by the lower bound. Also, it can be easily seen that the transmissions allow all the receivers to decode their demands. Hence, we obtain the optimal linear broadcast rate with $t$-bit messages to be,
	\begin{gather*} \beta^{l}_{t}(\mathcal{D},\mathcal{P}) = max \{\beta^{l}_{t}(\mathcal{D}_{1}),\beta^{l}_{t}(\mathcal{D}_{\{1,2\}}),\beta^{l}_{t}(\mathcal{D}_{\{1,2,3\}})\} \\ + max \{\beta^{l}_{t}(\mathcal{D}_{3}),\beta^{l}_{t}(\mathcal{D}_{\{2,3\}})\} + \beta^{l}_{t}(\mathcal{D}_{2}) + \beta^{l}_{t}(\mathcal{D}_{\{1,3\}}).
	\end{gather*}
	
	\begin{figure}
		\centering
		\begin{tikzpicture}
		[place/.style={circle,draw=black!100,thick}]
		\node at (-5,0) [place] (1c) {1};
		\node at (-2,0) [place] (2c) {2};
		\node at (0,0) [place] (3c) {3};
		\node at (-3,-1) [place] (4c) {4};
		\node at (-3,-2.2) [place] (8c) {8}; 	 	
		\node at (-1,-1) [place] (5c) {5};
		\node at (-1,1) [place] (9c) {9};	
		\node at (.5,1) [place] (10c) {10};
		\node at (-1,-2.2) [place] (6c) {6};	 	
		\draw (-2,-3) node {$\mathcal{D}$};
		\node at (-3,2) [place] (7c) {7};
		\draw [thick,->] (1c.east) -- (2c.west);
		\draw [thick,->] (2c.east) -- (3c.west);
		\draw [thick,->] (5c) to [out=-75,in=75] (6c); 
		\draw [thick,->] (6c) to [out=105,in=-105] (5c); 
		\draw [thick,->] (4c.south) -- (8c.north);
		\draw [thick,->] (7c) to [out=-180,in=90] (1c);
		\draw [thick,->] (1c) to [out=70,in=-160] (7c);
		\draw [thick,->] (1c) to [out=-35,in=125] (4c);
		\draw [thick,->] (4c) to [out=145,in=-55] (1c);
		\draw [thick,->] (2c) to [out=45,in=-135] (9c);
		\draw [thick,->] (3c) to [out=135,in=-45] (9c);
		\draw [thick,->] (1c) to [out=-75,in=155] (8c);
		\draw [thick,->] (8c) to [out=-180,in=-90] (1c);
		\draw [thick,->] (7c) to [out=-50,in=55,distance=.7cm] (8c);
		\draw [thick,->] (8c) to [out=115,in=-115] (7c);
		\draw [thick,->] (2c) to [out=-45,in=135] (5c);
		\draw [thick,->] (7c) to [out=-100,in=100] (4c);
		\draw [thick,->] (4c) to [out=75,in=-75] (7c);
		\draw [thick,->] (1c) to [out=15,in=-160] (9c);
		\draw [thick,->] (10c) to [out=160,in=20] (9c);
		\draw [thick,->] (9c) to [out=-20,in=-160] (10c);
		\draw [thick,->] (3c) to [out=-105,in=45] (6c);
		\draw [thick,->] (6c) to [out=25,in=-85] (3c);	
		\draw [thick,->] (3c) to [out=-120,in=30] (5c);
		\draw [thick,->] (5c) to [out=60,in=-150] (3c);
		\node at (-4,-4) [place] (1'c) {1};
		\draw (-4,-4.7) node {$\mathcal{D}_1$};
		\node at (-3,-4)  [place] (2'c) {2};
		\draw (-3,-4.7) node {$\mathcal{D}_2$};
		\node at (-2,-4)  [place] (3'c) {3};	
		\draw (-2,-4.7) node {$\mathcal{D}_3$};
		\node at (0,-4)  [place] (139'c) {9};	
		\node at (1.2,-4)  [place] (1310'c) {10};
		\draw (.6,-4.7) node {$\mathcal{D}_{\{1,3\}}$};
		\draw [thick,->] (139'c) to [out=20,in=160] (1310'c);
		\draw [thick,->] (1310'c) to [out=-160,in=-20] (139'c);
		\node at (-1,-4)  [place] (123'c) {7};	
		\draw (-1,-4.7) node {$\mathcal{D}_{\{1,2,3\}}$};
		\node at (-4,-5.5) [place] (124'c) {4};
		\node at (-2.5,-5.5) [place] (128'c) {8};
		\draw [thick,->] (124'c) to [out=0,in=180] (128'c);	
		\draw (-3.3,-6) node {$\mathcal{D}_{\{1,2\}}$};
		\node at (-1.5,-5.5) [place] (235'c) {5};
		\node at (0,-5.5) [place] (236'c) {6};	 
		\draw [thick,->] (235'c) to [out=20,in=160] (236'c);	
		\draw [thick,->] (236'c) to [out=-160,in=-20] (235'c);
		\draw (-.7,-6) node {$\mathcal{D}_{\{2,3\}}$}; 	 		
		\end{tikzpicture}
		\caption{Side-information digraph and sub-digraphs of Example \ref{exmp6}}
		\label{fig3ICP6}
	\end{figure}
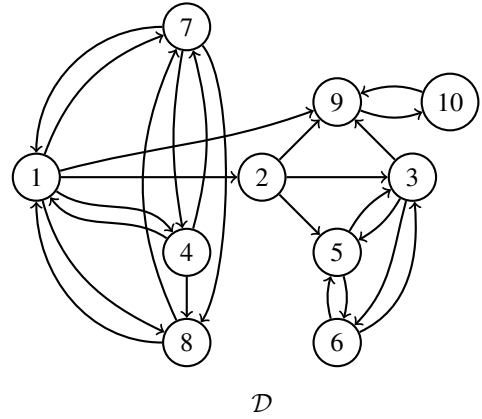
	Now, we consider a three-sender unicast ICP with the interaction digraph as given in the example. There are $m=10$ messages and the message sets at the senders are given by   $\mathcal{M}_1=\{{\bf{x}}_1,{\bf{x}}_4,{\bf{x}}_7,{\bf{x}}_8,{\bf{x}}_9,{\bf{x}}_{10}\}$, $\mathcal{M}_2=\{{\bf{x}}_2,{\bf{x}}_4,{\bf{x}}_5,{\bf{x}}_6,{\bf{x}}_7,{\bf{x}}_8\}$, and $\mathcal{M}_3=\{{\bf{x}}_3,{\bf{x}}_5,{\bf{x}}_6,{\bf{x}}_7,{\bf{x}}_9,{\bf{x}}_{10}\}$. From the definition of $\mathcal{P}_{S}$ for any non-empty $S \subseteq [s]$, we have the following:
	\begin{gather*}
	\mathcal{P}_1 = \{{\bf{x}}_1\}, \mathcal{P}_2 = \{{\bf{x}}_2\}, \mathcal{P}_3 = \{{\bf{x}}_3\}, \mathcal{P}_{\{1,2\}} = \{{\bf{x}}_4,{\bf{x}}_8\}, \\ \mathcal{P}_{\{2,3\}}=\{{\bf{x}}_5,{\bf{x}}_6\}, \mathcal{P}_{\{1,3\}}=\{{\bf{x}}_9,{\bf{x}}_{10}\},\mathcal{P}_{\{1,2,3\}}=\{{\bf{x}}_7\},\\
	\mathcal{P}=(\mathcal{P}_1,\mathcal{P}_2,\mathcal{P}_3,\mathcal{P}_{\{1,2\}},\mathcal{P}_{\{1,3\}},\mathcal{P}_{\{2,3\}},\mathcal{P}_{\{1,2,3\}}).
	\end{gather*}
	The  side-information of the receivers are given as follows:
	\begin{gather*}
	\mathcal{K}_1 = \{{\bf{x}}_2,{\bf{x}}_4,{\bf{x}}_7,{\bf{x}}_8,{\bf{x}}_9\}, \mathcal{K}_2 = \{{\bf{x}}_3,{\bf{x}}_5,{\bf{x}}_9\},\\ \mathcal{K}_3 = \{{\bf{x}}_5,{\bf{x}}_6,{\bf{x}}_9\}, \mathcal{K}_{4} = \{{\bf{x}}_1,{\bf{x}}_7,{\bf{x}}_8\},  \mathcal{K}_{5}=\{{\bf{x}}_3,{\bf{x}}_6\},\\ \mathcal{K}_{6}=\{{\bf{x}}_3,{\bf{x}}_5\},\mathcal{K}_{7}=\{{\bf{x}}_1,{\bf{x}}_4,{\bf{x}}_8\},\\\mathcal{K}_{8}=\{{\bf{x}}_1,{\bf{x}}_7\},\mathcal{K}_{9}=\{{\bf{x}}_{10}\},\mathcal{K}_{10}=\{{\bf{x}}_{9}\}.
	\end{gather*}
	The side-information digraph is shown in Figure \ref{fig3ICP6}. Observe that the interactions $\mathcal{D}_{1} \rightarrow \mathcal{D}_{\{1,2\}}$, $\mathcal{D}_{1} \rightarrow \mathcal{D}_{\{1,2,3\}}$, $\mathcal{D}_{\{1,2\}} \rightarrow \mathcal{D}_{\{1,2,3\}}$,$\mathcal{D}_{\{1,2\}} \rightarrow \mathcal{D}_{1}$, $\mathcal{D}_{\{1,2,3\}} \rightarrow \mathcal{D}_{1}$, $\mathcal{D}_{\{1,2,3\}} \rightarrow \mathcal{D}_{\{1,2\}}$, $\mathcal{D}_{\{2,3\}} \rightarrow \mathcal{D}_{3}$, and  $\mathcal{D}_{3} \rightarrow \mathcal{D}_{\{2,3\}}$ are fully-participated. Note that $\beta^{l}_{t}(\mathcal{D}_{\{1,2\}})=2$ and optimal linear broadcast rate of all other sub-digraphs is equal to $1$. Hence,
	\[
	\beta^{l}_{t}(\mathcal{D},\mathcal{P}) = max \{1,2,1\} + max \{1,1\} + 1 + 1 = 5.
	\]
	The optimal codes for the sub-digraphs are as follows : 
	\begin{gather*}
	\mathcal{C}_1 = {\bf{x}}_1, \mathcal{C}_2 = {\bf{x}}_2, \mathcal{C}_3 = {\bf{x}}_3, \mathcal{C}_{\{1,2\}} = ({\bf{x}}_4,{\bf{x}}_8), \\ 
	\mathcal{C}_{\{1,3\}} = {\bf{x}}_9 \oplus {\bf{x}}_{10}, \mathcal{C}_{\{2,3\}} = {\bf{x}}_5 \oplus {\bf{x}}_{6},  \mathcal{C}_{\{1,2,3\}} = {\bf{x}}_7.
	\end{gather*}
	
	\label{exmp6}
\end{exmp}

When it is difficult to use the proof techniques of the two-sender problem, partitioning the interaction-digraph and using the optimal linear broadcast rate of the partitions will provide sub-optimal solutions for the multi-sender problem.

\section{Conclusion and Future Work}   
\par This paper establishes the optimal linear broadcast rates and optimal linear code constructions for all the cases of the TUICP with fully-participated interactions. The results for the TUICP are given in terms of those of the three sub-problems which are single-sender unicast ICPs. These results are significant as they provide non-trivial lower bounds for the optimal linear broadcast rates of two-sender problems with  partially-participated interactions.

\par In \cite{CBSR}, optimal scalar linear codes were obtained for Cases II-C and II-D for a special class of  partially-participated interactions.  Finding the optimal linear broadcast rates and corresponding codes for two-sender problems belonging to Cases II-B, II-C, II-D, and II-E with partially-participated interactions is an interesting open problem. 

Further, extension of the results to multi-sender problems with more than two senders, to obtain the optimal linear broadcast rates and code constructions is interesting.

\section*{Acknowledgment}
This work was supported partly by the Science and Engineering Research Board (SERB) of Department of Science and Technology (DST), Government of India, through J.C. Bose National Fellowship to B. S. Rajan and VAJRA Fellowship to V. Aggarwal.

\end{document}